\documentclass[USenglish]{llncs}

\newcommand{\ExtendedVersion}[1]{#1} \newcommand{\PaperVersion}[1]{}

\usepackage{algorithm}
\usepackage{algorithmic} 
\usepackage{amsmath}
\usepackage{amssymb}
\usepackage[USenglish]{babel}
\usepackage{color}
\usepackage[colorlinks=true,linkcolor=black,citecolor=black]{hyperref}
\usepackage{graphicx}
\usepackage[latin1]{inputenc}
\usepackage{times}

\setcounter{tocdepth}{2}
\hypersetup{bookmarksopen=true}

\definecolor{darkgray}{rgb}{.3,.3,.3}

\newcommand{\FinalVersion}[1]{#1}\newcommand{\DraftVersion}[1]{}

\newcounter{todo}

\FinalVersion{\newcommand{\todo}[1]{}}%
\DraftVersion{\newcommand{\todo}[1]{\stepcounter{todo}\noindent{\color{red} \textbf{Todo~{\arabic{todo}}: #1}}}}

\FinalVersion{}
\DraftVersion{}

\FinalVersion{\newcommand{\removable}[1]{#1}}
\DraftVersion{\newcommand{\removable}[1]{{\color{darkgray}#1}}}

\newcommand{\hidden}[1]{}






\newcommand{\definedAs}    
	{=}

\renewcommand{\ldots}{\,...\,}

\newcommand{\fctsymDom}{\mathrm{dom}} 
\newcommand{\fctDom}[1]{\fctsymDom(#1)}

\newcommand{\true}{\mathrm{true}} 
\newcommand{\false}{\mathrm{false}} 

\newcommand{\tuple}[1]{\langle #1 \rangle} 
\newcommand{\bigtuple}[1]{\big\langle #1 \big\rangle}


\newcommand{\symAllURIs}{\mathcal{U}} 
\newcommand{\symAllLiterals}{\mathcal{L}} 
\newcommand{\symAllBNodes}{\mathcal{B}} 
\newcommand{\symAllTriples}{\mathcal{T}} 
\newcommand{\symAllVariables}{\mathcal{V}} 
\newcommand{\symURI}{u} 

\newcommand{\symRDFgraph}{G} 
\newcommand{\symRDFdataset}{\mathfrak{D}} 
\newcommand{\symRDFdatasetNGs}{\mathcal{N}} 


\newcommand{\fctsymURIs}{\mathrm{uris}} 
\newcommand{\fctURIs}[1]{\fctsymURIs(#1)}

\newcommand{\fctsymSBVars}{\mathrm{sbvars}} 
\newcommand{\fctSBVars}[1]{\fctsymSBVars(#1)}
\newcommand{\symAllPatterns}{\mathcal{P}} 
\newcommand{\symTP}{tp} 
\newcommand{\symBGP}{B} 
\newcommand{\symPattern}{P} 
\newcommand{\OpAND}{\text{ \normalfont\scriptsize\textsf{AND} }}
\newcommand{\OpUNION}{\text{ \normalfont\scriptsize\textsf{UNION} }}
\newcommand{\OpOPT}{\text{ \normalfont\scriptsize\textsf{OPT} }}
\newcommand{\OpFILTER}{\text{ \normalfont\scriptsize\textsf{FILTER} }}
\newcommand{\OpGRAPH}{\text{\normalfont\scriptsize\textsf{GRAPH} }} 

\newcommand{\muEmpty}{\mu_\emptyset}



\newcommand{\fctEvalPGD}[3]{[\![#1]\!]_{#2}^{#3}} 


\newcommand{\symAllDocs}{\mathcal{D}} 
\newcommand{\fctsymData}{\mathrm{data}}
\newcommand{\fctData}[1]{\fctsymData(#1)}

\newcommand{\symWoD}{W} 
\newcommand{\symDocs}{D} 
\newcommand{\fctsymADoc}{adoc} 
\newcommand{\fctADoc}[1]{\fctsymADoc(#1)}

\newcommand{\symWoDTuple}{\symWoD = \tuple{\symDocs,\fctsymADoc}}

\newcommand{\symWoDEx}{\symWoD_\mathsf{ex}}
\newcommand{\symDocsEx}{\symDocs_\mathsf{ex}}
\newcommand{\fctsymADocEx}{\fctsymADoc_\mathsf{ex}}
\newcommand{\fctADocEx}[1]{\fctsymADocEx(#1)}
\newcommand{\symWoDExTuple}{\symWoDEx = \tuple{\symDocsEx,\fctsymADocEx}}


\newcommand{\symReachCrit}{c} 
\newcommand{\cAll}{\symReachCrit_\mathsf{All}}
\newcommand{\cNone}{\symReachCrit_\mathsf{None}}
\newcommand{\cMatch}{\symReachCrit_\mathsf{Match}}

\newcommand{\symSeedURIs}{S} 

\newcommand{\fctEvalReachPcSW}[4]{[\![#1]\!]^{\mathsf{R}(#2,#3)}_{#4}}
\newcommand{\fctEvalSetQcW}[3]{[\![#1]\!]^{#2}_{#3}}


\newcommand{\symLP}{\mathit{lp}}
\newcommand{\symLPE}
	{\mathit{lpe}}
\newcommand{\peEmpty}{\varepsilon}
\newcommand{\peWildcard}
	{\,\underbar{\phantom{o}}\,}
\newcommand{\peContextURI}{+}
\newcommand{\peSubquery}[2]{\tuple{#1,#2}}
\newcommand{\peConcat}[2]{#1 / #2}
\newcommand{\peAlt}[2]{#1 | #2}
\newcommand{\peKleene}[1]{#1^*}
\newcommand{\peTest}[1]{[#1]}

\newcommand{\OpSEED}{\text{\normalfont\scriptsize\textsf{SEED} }}

\newcommand{\symCtxURI}{\symURI_\mathsf{ctx}}
\newcommand{\symCtxDoc}{d_\mathsf{ctx}}

\newcommand{\fctEvalPathPdW}[3]{[\![#1]\!]^{#2}_{#3}}

\newcommand{\symQuery}{q}

\newcommand{\fctEvalQW}[2]{[\![#1]\!]_{#2}}
\newcommand{\fctEvalQWS}[3]{[\![#1]\!]_{#2}^{#3}}

\newcommand{\enumA}{(i)}
\newcommand{\enumB}{(ii)}
\newcommand{\enumC}{(iii)}
\newcommand{\enumD}{(iv)}

\newcommand{\myquote}[1]
	{\textsl{``#1''}}

%

\newenvironment{mydef}[1]
	{\begin{definition}\normalfont}{\end{definition}}
\newcommand{\definedTerm}[1]
	{\textbf{#1}}

\newenvironment{myexample}{\begin{example}}{\end{example}}

%
%
%
%
%
%
\newtheorem{factTheorem}{Fact}
\newenvironment{fact}{\begin{factTheorem}\hspace{-1.5mm}\textbf{.}\normalfont}{\end{factTheorem}}

%


\newcommand{\wold}{Web of Linked Data}
\newcommand{\wolds}{Webs of Linked Data}
\newcommand{\doc}{document}
\newcommand{\docs}{{\doc}s}
\newcommand{\triple}{RDF triple}

\newcommand{\query}{LDQL query}
\newcommand{\queries}{LDQL queries}

\newcommand{\trans}{\operatorname{trans}}
\newcommand{\dataset}{\operatorname{dataset}}

\newcommand{\symNE}{\textit{ne}} 
\newcommand{\symPP}{\textit{pe}} 
\newcommand{\OpASK}{\text{\normalfont\scriptsize\textsf{ASK} }}
\newcommand{\OpBIND}{\text{\normalfont\scriptsize\textsf{BIND}}}
\newcommand{\OpAS}{\text{\normalfont\scriptsize\textsf{ AS} }}

\hyphenation{LDQL sub-que-ry sub-que-ries NautiLOD WebSQL}

\begin{document}

\title{LDQL: A Query Language for the Web of Linked Data}

\ExtendedVersion{%
\subtitle{\vspace{-1mm}(Extended Version)%
\vspace{-6mm}\footnote{This document is an extended version of a paper
	published in ISWC~2015~\cite{ProceedingsVersion}.%
}}
}

\author{Olaf Hartig\inst{1} \and Jorge P\'{e}rez\inst{2}}

\authorrunning{O. Hartig \and J. P\'{e}rez}   

\institute{
		\url{http://olafhartig.de/}
		\\[2mm] 
	\and
		Department of Computer Science, Universidad de Chile\\
		\email{jperez@dcc.uchile.cl}
}

\maketitle

\vspace{-7mm} 

\begin{abstract}

 The Web of Linked Data is composed of tons of RDF documents interlinked to each other forming a huge
 repository of distributed semantic data.
 Effectively querying this distributed data source is an important open problem in the Semantic Web area.
 In this paper, we propose LDQL, a declarative language to
 query Linked Data on the Web. One of the novelties of LDQL is that it
 expresses separately \enumA~patterns that describe the expected query result, and \enumB~%
 Web navigation paths that select the data sources to be used for computing the result.
 We present a formal syntax and semantics, prove equivalence rules, and study the
 expressiveness of the language.
 In particular, we show that LDQL is strictly more expressive than the query formalisms 
 that have been proposed previously for Linked Data on the Web.
 The high expressiveness allows LDQL to define queries 
 for which a complete execution is not computationally feasible over the Web.
 We formally study this issue and provide a syntactic sufficient condition to avoid this problem; 
 queries satisfying this condition are ensured to have a procedure
 to be effectively evaluated over the Web of Linked Data.


\end{abstract}

\vspace{-7mm} 

\vspace{-3mm}

\PaperVersion{\enlargethispage{\baselineskip}} 
\ExtendedVersion{\enlargethispage{2\baselineskip}} 

\section{Introduction}
\label{sec:Introduction}
\vspace*{-8pt}

In recent years an increasing amount of
	\removable{structured}
data has been published and interlinked on the World Wide Web~(WWW) in adherence to the Linked Data principles~\cite{BernersLee06:LinkedData}.
	These principles are based on standard Web technologies. In particular, \enumA~the Hypertext Transfer Protocol~(HTTP) is used to access data, \enumB~HTTP-based Uniform Resource Identifiers~(URIs) are used as identifiers for entities described in the data, and \enumC~the Resource Description Framework~(RDF) is used as data model.
Then, any HTTP URI in an RDF triple presents a \emph{data link} that enables software clients to retrieve more data by looking up the URI with an HTTP request.
The adoption of these principles has lead to the creation of a globally distributed dataspace%
	:
the \emph{Web of Linked~Data}.


The emergence of
	the Web of Linked Data
makes possible an \emph{online execution} of declarative queries over up-to-date data from a virtually unbounded set of data sources, each of which is readily accessible without any need for implementing source-spe\-cif\-ic APIs or wrappers.
This possibility has spawned research interest in approaches to query
	Linked Data on the WWW
as if it was a single~(distributed) database. For an overview on query execution techniques proposed in this context refer to~\cite{Hartig13:Survey}.

The main contribution of this paper is the proposal of LDQL, a novel query language for the Web of Linked Data.
The most important feature of LDQL is that it clearly separates query components for selecting que\-ry-rel\-e\-vant regions of the Web of Linked Data, 
from components for specifying the query result that has to be constructed from the data in the selected regions. 
The most basic construction in LDQL are tuples of the form~$\tuple{L,Q}$ where $L$ is an 
expression used to select a set of relevant \docs, and $Q$
is a query intended to be executed over the data in these \docs\ as if they were a single RDF repository.
In an abstract setting one can use several formalisms to express $L$ and $Q$.
In our proposal, for the former part we introduce the notion of \emph{link path expressions} 
that are a form of nested regular expressions~(with some other important features) 
used to navigate the link graph of the Web.
For the latter, we use standard SPARQL graph patterns.
To begin evaluating these queries one needs to specify a set of seed URIs.
The language also possesses features to dynamically~(at query time)
	identify
new
seed URIs to evaluate portions of a query.
Additionally, such queries can be combined by using conjunctions, disjunctions, and projection.
We present a formal syntax and semantics for LDQL, propose some
	rewrite
rules, and study
its expressive~power.


While there does not exist a standard language for expressing queries over Linked Data on the WWW,
a few options have been proposed. In particular, a first strand of research focuses on extending the scope 
of SPARQL such that an evaluation of SPARQL queries over Linked Data has a well-defined semantics~\cite{Harth12:CompletenessClassesForLDQueries,Hartig12:TheoryPaper,Hartig15:WebPPsPaper,Umbrich12:LinkedDataQueriesWithReasoning:Article}. 
A second strand of~research focuses on navigational languages~\cite{Fionda15:NautiLODArticle,Hartig15:WebPPsPaper}.
Although these languages have different motivations, 
a commonality of all~these proposals is that, in contrast to LDQL, 
the definition of que\-ry-rel\-e\-vant regions of the Web of Linked Data and the definition 
of que\-ry-rel\-e\-vant data
within the specified regions are mixed.

As our second main contribution we compare LDQL with three previously proposed formalisms
for querying the Web of Linked Data: \emph{SPARQL under reach\-abil\-i\-ty-based
query semantics}~\cite{Hartig12:TheoryPaper}, \emph{NautiLOD}~\cite{Fionda15:NautiLODArticle}, 
and \emph{SPARQL Property Path patterns under context-based semantics}~\cite{Hartig15:WebPPsPaper}. 
We formally prove that LDQL is strictly more expressive than every one of these.
We show that for every query $Q$ in the previous languages, one can effectively construct an LDQL query which
is equivalent to $Q$. Moreover, for every one of the previous languages, there exists an LDQL query
that cannot be expressed in that language. 
These results show that LDQL presents an interesting expressive power.

The downside of the expressiveness provided by LDQL
is the existence of queries for which a complete execution
	is not feasible in practice.  
To capture this issue formally, we define a notion of \emph{Web-safe\-ness} for LDQL queries. Then, the obvious question that arises is how to identify LDQL queries that are Web-safe.
Our last technical contribution is the identification of 
a sufficient syntactic condition for Web-safe\-ness.

The rest of the paper is structured as follows. Section~\ref{sec:DataModel} introduces a
data model that provides the basis for defining the semantics of LDQL. 
In Section~\ref{sec:Language} we formally define the syntax and semantics of
	LDQL and show some simple algebraic properties.
	In Section~\ref{sec:Comparisons} we compare LDQL with the three mentioned languages,
	 and in Section~\ref{sec:Safeness} we focus on Web-safe\-ness. 
	 Section~\ref{sec:Conclusion} concludes the paper and sketches
future work.
Proofs of the formal results in this paper can be found in
	\ExtendedVersion{the Appendix.}%
	\PaperVersion{an extended version of the~paper~\cite{ExtendedVersion}.}
	
A preliminary version of some of the results in this
	paper have been presented in a workshop~\cite{AMWVersion}.
This paper is a substantial extension of~\cite{AMWVersion} refining the
definition of LDQL and introducing
important changes to the syntax and the semantics of the language.
Moreover, the comparison with previous proposals was not discussed in~\cite{AMWVersion}.%
	~ 
\vspace*{-8pt}

\section{Data Model}
\label{sec:DataModel}
\vspace*{-8pt}

In this section we introduce a structural data model that captures the concept of a Web of Linked Data formally.
	As usual~\cite{Fionda15:NautiLODArticle,Harth12:CompletenessClassesForLDQueries,Hartig12:TheoryPaper,Hartig15:WebPPsPaper,Umbrich12:LinkedDataQueriesWithReasoning:Article}, for the definitions and analysis in this paper,
we assume
	that the Web is fixed during the execution of any single query.

	We
use the RDF data model~\cite{Cyganiak14:RDFConcepts} as a basis for our
	model of a Web of Linked Data. That is, we
assume three pairwise disjoint,
infinite sets $\symAllURIs$~(URIs), $\symAllBNodes$~(blank nodes), and $\symAllLiterals$~(literals).
	An
\emph{RDF triple} is a tuple $\tuple{s,p,o} \in \symAllTriples$ with $\symAllTriples \definedAs (\symAllURIs \cup \symAllBNodes) \times \symAllURIs \times (\symAllURIs \cup \symAllBNodes \cup \symAllLiterals)$.
For any RDF triple $t = \tuple{s,p,o}$ we write $\fctURIs{t}$ to denote the set of all URIs in
	$t$%
%
.

Additionally, we assume another infinite set $\symAllDocs$ that is disjoint from $\symAllURIs$, $\symAllBNodes$, and $\symAllLiterals$, respectively. We refer to elements in this set as
	\emph{\docs}
and use them to represent the concept of Web documents from which Linked Data can be extracted. Hence, we assume a function, say $\fctsymData$, that maps each \doc\ $d \in \symAllDocs$ to a finite set of RDF triples
	$\fctData{d} \subseteq \symAllTriples$%
%
	%
	~such
that the data of each \doc\ uses a unique set of blank~nodes%
.

Given these preliminaries, we are ready to define a \emph{\wold}. 

\begin{mydef}{\wold}%
	\label{def:WoLD}%
	A \definedTerm{\wold} is a tuple $\symWoDTuple$
		that consists of a set of \docs\ $\symDocs \subseteq \mathcal{D}$ and a partial function \,$\fctsymADoc \!: \symAllURIs \rightarrow \symDocs$\, that is surjective%
		.
%
\end{mydef}

	Function $\fctsymADoc$ of a \wold\ $\symWoDTuple$
captures the relationship between the URIs that can be looked up in this Web and the \docs\ that can be retrieved by such lookups. Since not every URI can be looked up, the function is partial. For any URI $\symURI \in \symAllURIs$ with $\symURI \in \fctDom{\fctsymADoc}$~(i.e., any URI that can be looked up in~$\symWoD$), \doc\ $d = \fctADoc{\symURI}$ can be considered the authoritative source of data for~$\symURI$ in~$\symWoD$~(hence, the name $\fctsymADoc$). To accommodate for \docs\ that are authoritative for multiple URIs, we do not require injectivity for function $\fctsymADoc$. However, we require surjectivity because we conceive \docs\ as irrelevant for a \wold\ if they cannot be retrieved by any URI lookup in this Web.

Let $\symWoDTuple$ be a \wold. 
$\symWoD$ is said to be finite~\cite{Hartig12:TheoryPaper} 
if its set $\symDocs$ of \docs\ is finite.
In this paper we assume that every \wold\ is finite.
%
Given \docs\ $d,d' \in \symDocs$ and a triple $t\in \fctData{d}$, 
we say that a URI $\symURI \in \fctURIs{t}$ establishes a \emph{data~link} 
from $d$ to $d'$\!, if $\fctADoc{\symURI} = d'$\!.
As a final concept, we formalize the notion of a \emph{link graph} associated to $W$\!.
This graph has \docs\ in $\symDocs$ as nodes, and directed edges representing data links between \docs. 
Each edge is associated with a label that identifies both
the particular \triple\ and the URI in this triple that establishes the corresponding data link.
These labels shall provide the basis for defining the navigational component of our 
query language. 

\begin{mydef}{Link Graph}%
	\label{def:LinkGraph}%
	The \definedTerm{link graph} of a \wold\ $\symWoDTuple$,
	is a directed, edge-la\-beled multigraph, $\mathcal{G}_W=\tuple{\symDocs,E_W}$,
	with set of edges $E_W \!\subseteq \symDocs \times (\symAllTriples\! \times \symAllURIs) \times \symDocs$ defined as 
$		E_W \definedAs
		\big\lbrace \tuple{d_\mathsf{src},(t,\symURI),d_\mathsf{tgt}} \mid
			t \in \fctData{d_\mathsf{src}}, \symURI \in \fctURIs{t} \text{ and } d_\mathsf{tgt}=\fctADoc{\symURI}
		\big\rbrace
		.
$
\end{mydef}

%
%
%

For a link graph edge $e=\tuple{d_\mathsf{src},(t,\symURI),d_\mathsf{tgt}}$, tuple $(t,u)$ is the label of $e$.
Moreover, we sometimes write $e\in \mathcal{G}_W$ to denote that $e$ is an edge in the link graph $\mathcal{G}_W$.

\begin{myexample}%
	\label{ex:WoLD}%
	As a running example for this paper assume a simple \wold\ $\symWoDExTuple$
		with
	three \docs, $d_\mathsf{A}$, $d_\mathsf{B}$, and $d_\mathsf{C}$~(i.e., $\symDocsEx = \lbrace d_\mathsf{A},d_\mathsf{B},d_\mathsf{C} \rbrace$). The
		data in these \docs\ are the following sets of {\triple}s:
	\begin{align*}
		\fctData{d_\mathsf{A}} = \lbrace &\tuple{\symURI_\mathsf{A},p_1,\symURI_\mathsf{B}},
		&
		\fctData{d_\mathsf{B}} = \lbrace &\tuple{\symURI_\mathsf{B},p_1,\symURI_\mathsf{C}} \rbrace;
		\\
		&\tuple{\symURI_\mathsf{B},p_2,\symURI_\mathsf{C}} \rbrace;
		&
		\fctData{d_\mathsf{C}} = \lbrace &\tuple{\symURI_\mathsf{A},p_2,\symURI_\mathsf{C}} \rbrace;
	\end{align*}
	and for function $\fctsymADocEx$ we have:
	$\fctADocEx{\symURI_\mathsf{A}} \!=\! d_\mathsf{A}$,
	$\fctADocEx{\symURI_\mathsf{B}} \!=\! d_\mathsf{B}$,
	$\fctADocEx{\symURI_\mathsf{C}} \!=\! d_\mathsf{C}$, and
	$\fctADocEx{p_1} = d_\mathsf{A}$~(i.e., $\fctDom{\fctsymADocEx} \!=\! \lbrace \symURI_\mathsf{A},\symURI_\mathsf{B},\symURI_\mathsf{C},p_1 \rbrace$).
		This Web contains 10 data links. For instance, URI $\symURI_\mathsf{A}$ in the \triple\ $\tuple{\symURI_\mathsf{A},p_2,\symURI_\mathsf{C}} \in \fctData{d_\mathsf{C}}$ establishes a data link to \doc\ $d_\mathsf{A}$. Hence, the corresponding edge in the link graph of $\symWoDEx$ is $\bigtuple{d_\mathsf{C},(\tuple{\symURI_\mathsf{A},p_2,\symURI_\mathsf{C}},\symURI_\mathsf{A}),d_\mathsf{A}}$.
	Fig\removable{ure}~\ref{fig:LinkGraph} illustrates
		the link graph $\mathcal{G}_{\symWoDEx}$ with all 10~edges.
\end{myexample}

\begin{figure}[tb]
	\centering
	\includegraphics[width=0.7\textwidth]{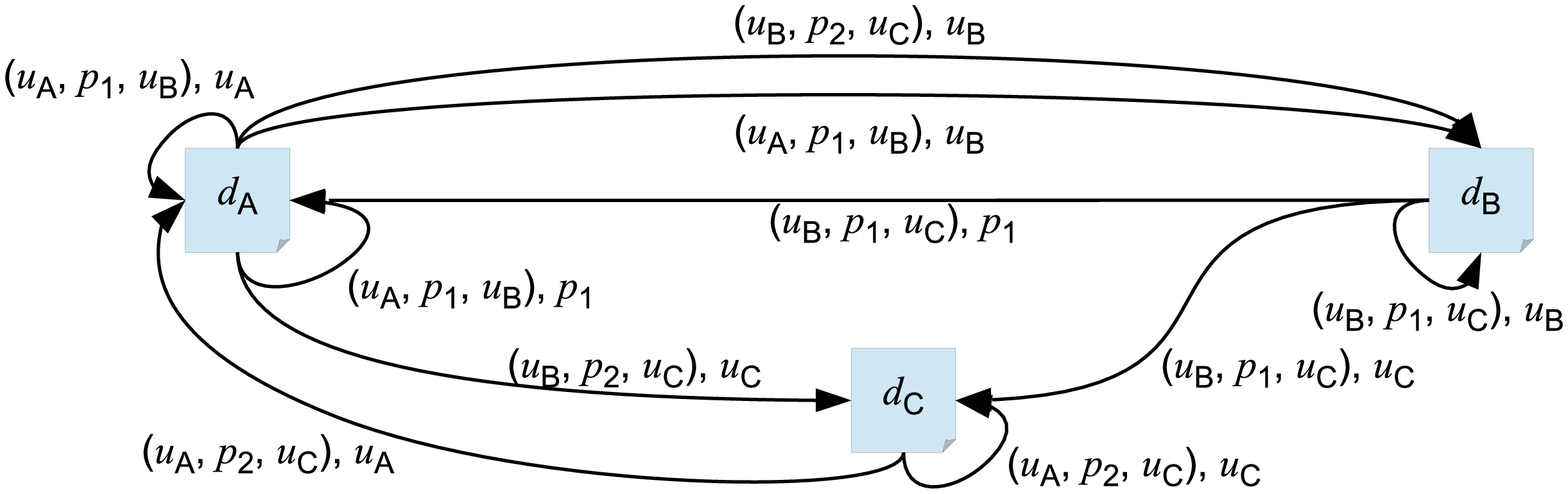}
	\vspace*{-5pt}
	\caption{The link graph $\mathcal{G}_{\symWoDEx}$ of our example \wold\ $\symWoDEx$.}
	\label{fig:LinkGraph}
	\vspace*{-10pt}
	
\end{figure}

%

\vspace*{-10pt}

\section{Definition of LDQL}
\label{sec:Language}
\vspace*{-5pt}

This section defines our Linked Data query language, LDQL. LDQL queries are meant to be evaluated over a \wold\ and each such query is built from two types of components: \emph{Link path expressions}~(\emph{LPEs}) for selecting \removable{que\-ry-rel\-e\-vant} \docs\ of the queried \wold; and SPARQL graph patterns for specifying the query 
result that has to be constructed from the data in the selected \docs.
For this paper, we assume \removable{that} the reader is familiar with the definition of SPARQL~\cite{Harris13:SPARQL1_1Language}, including the algebraic formalization introduced in~\cite{Perez09:SemanticsAndComplexityOfSPARQL,Arenas09:SemanticsAndComplexityOfSPARQLBookChapter}.
In particular, for SPARQL graph patterns we closely follow the formalization 
in~\cite{Arenas09:SemanticsAndComplexityOfSPARQLBookChapter} considering operators$\OpAND$\!,$\OpOPT$\!,$\OpUNION$\!,$\OpFILTER$\!, and $\OpGRAPH$\!, plus the operator $\OpBIND$ defined in~\cite{Harris13:SPARQL1_1Language}.
%
%
We begin this section by introducing the most basic concept of our language, 
the notion of link patterns. We use link patterns as the basis for navigating the link graph of a \wold.

\subsection{Link Patterns} \label{ssec:LinkPatterns}

%

A link pattern is a tuple in $\bigl( \symAllURIs \cup \lbrace \peWildcard , \peContextURI \rbrace \bigr) \times \bigl( \symAllURIs \cup \lbrace \peWildcard , \peContextURI \rbrace \bigr) \times \bigl( \symAllURIs \cup \symAllLiterals \cup \lbrace \peWildcard , \peContextURI \rbrace \bigr)$. 
Link patterns are used to match link graph edges in the context of a designated \emph{context} URI. 
The special symbol~$\peContextURI$ denotes a placeholder for the {context} URI.
The special symbol~$\peWildcard$ denotes a wildcard that will drive the direction of the navigation.
Before formalizing how link graph edges actually match link patterns, we show some intuition.
Consider the link graph of Web $\symWoDEx$ in Example~\ref{ex:WoLD}~(see Fig.~\ref{fig:LinkGraph}), 
and the link pattern $\tuple{\peContextURI,p_1,\peWildcard}$.
Intuitively, in the context of URI $\symURI_\mathsf{A}$, the edge
with label $(\tuple{\symURI_\mathsf{A},p_1,\symURI_\mathsf{B}},\symURI_\mathsf{B})$ from document $d_\mathsf{A}$
to document $d_\mathsf{B}$, matches the link pattern $\tuple{\peContextURI,p_1,\peWildcard}$. 
Notice that in the matching, the context URI $\symURI_\mathsf{A}$ 
takes the place of symbol $\peContextURI$, and $\symURI_\mathsf{B}$ takes the place of the wildcard symbol $\peWildcard$.
Notice that $\symURI_\mathsf{B}$ also denotes the direction of the edge that matches the link pattern.
On the other hand, the edge with label 
$(\tuple{\symURI_\mathsf{A},p_1,\symURI_\mathsf{B}},\symURI_\mathsf{A})$ from $d_\mathsf{A}$ to $d_\mathsf{A}$,
does not match $\tuple{\peContextURI,p_1,\peWildcard}$;
although $\symURI_\mathsf{B}$ can take the place of the wildcard symbol $\peWildcard$, the direction of the
edge is not to $\symURI_\mathsf{B}$.
That is, when matching an edge labeled by $(t,\symURI)$ we require URI $\symURI$ to be taking the place of a wildcard in the
link pattern.
When more than one wildcard symbol is used, the link pattern can be matched by 
edges pointing to the direction of any of the URIs taking the place of a wildcard.
For instance, in the context of $\symURI_\mathsf{A}$, 
the link pattern $\tuple{\peWildcard,p_2,\peWildcard}$ is matched by edges
	$\tuple{ d_\mathsf{A}, (\tuple{\symURI_\mathsf{B},p_2,\symURI_\mathsf{C}},\symURI_\mathsf{B}), d_\mathsf{B} }$ and $\tuple{ d_\mathsf{A}, (\tuple{\symURI_\mathsf{B},p_2,\symURI_\mathsf{C}},\symURI_\mathsf{C}), d_\mathsf{C} }$.
The next definition formalizes this
	notion of matching.

\begin{mydef}{Link Pattern}
A link graph edge with label $(\tuple{x_1,x_2,x_3},\symURI)$ \definedTerm{matches} a link pattern 
$\tuple{y_1,y_2,y_3}$ in the context of a URI $\symCtxURI$ 
if the following two properties hold:
\begin{enumerate}
	\item there exists $i\in \{1,2,3\}$ such that $y_i = \peWildcard$ and $x_i = \symURI$, and
	\item for every $i\in \{1,2,3\}$ either $y_i = \peContextURI$ and $x_i = \symCtxURI$, \ or $y_i = x_i$, \ or  $y_i = \peWildcard$.
\end{enumerate}
\end{mydef}


One of the rationales for \removable{adopting}
the notion of a context URI and the $\peContextURI$ symbol in our definition of link patterns, is to support cases in which link graph navigation has to be focused solely on data links that are \emph{authoritative}.
A data link represented by link graph edge $\tuple{d_\mathsf{src},(t,\symURI),d_\mathsf{tgt}} \in \mathcal{G}_\symWoD$ is authoritative
in a \wold\ $\symWoDTuple$
if $d_\mathsf{src}=\fctADoc{\symURI'}$ for some URI $\symURI'\in \fctURIs{t}$.
Thus, if we fix a context URI $\symCtxURI$, 
a link pattern that uses the $\peContextURI$ symbol allows us to follow only authoritative 
data links from \doc~$\symCtxDoc = \fctADoc{\symCtxURI}$.


\subsection{LDQL Queries}

The most basic construction in LDQL queries are tuples of the from $\tuple{L,\symPattern}$ where $L$ is an 
expression used to select a set of \docs\ from the Web of Linked Data, and $\symPattern$
is a SPARQL graph pattern to query these \docs\ as if they were a single RDF dataset. 
In an abstract setting, one can use any formalism to specify $L$ 
as long as $L$ defines sets of RDF documents.
In our proposal we use what we call \emph{link path expressions}~(LPEs) that are 
a form of nested regular expressions~\cite{PerezAG10} 
over the alphabet of link patterns.
Every link path expression begins its navigation in a context URI, traverses the Web, and returns a set of URIs; these URIs are used to construct an RDF dataset
with all the \docs\
	to be retrieved by looking up the URIs.
This dataset is passed to the SPARQL graph pattern
to obtain the final evaluation of the whole query.
Besides the basic constructions of the form $\tuple{L,P}$, in LDQL one can also use$\OpAND$\!,$\OpUNION$\! and projection, to combine
them. We also introduce an operator $\OpSEED$ that is used to dynamically change, at query time, 
the seed URI from which the navigation begins.
The next definition formalizes the syntax of LDQL queries and LPEs.

\begin{mydef}{LDQL Query}%
	\label{def:Syntax}
	The syntax of LDQL is given by the following production rules
	in which
	$\symLP$ is an arbitrary link pattern,
	$?v$ is a variable,
	$\symPattern$ is a SPARQL graph pattern~(as per~\cite{Arenas09:SemanticsAndComplexityOfSPARQLBookChapter}),
	$V$ is a finite set of variables,
	and $U$ is a finite set of URIs:
	\begin{align*}
		\symQuery \, :\definedAs \, \, &
			          \tuple{\symLPE,\symPattern}
			\,\mid\, (\OpSEED\ U\ \symQuery )
			\,\mid\, (\OpSEED\ ?v\ \symQuery )
			\,\mid\,  (\symQuery \OpAND \symQuery )
			\,\mid\, (\symQuery \OpUNION \symQuery )
			\,\mid\, \pi_V \symQuery
\\
		\symLPE \, :\definedAs \, \, &
			          \peEmpty
			\,\mid\,  \symLP
			\,\mid\,  \peConcat{\symLPE}{\symLPE}
			\,\mid\,  \peAlt{\symLPE}{\symLPE}
			\,\mid\,  \peKleene{\symLPE}
			\,\mid\,  \peTest{\symLPE}
			\,\mid\,  \peSubquery{?v}{\symQuery}
	\end{align*}
	Any expression that satisfies the production $\symQuery$ is an \definedTerm{\query},
	any expression that satisfies the production $\symLPE$ is a \definedTerm{link path expression} (\definedTerm{LPE}),
	and any \query\ of the form $\tuple{\symLPE,\symPattern}$ is a \definedTerm{basic \query}.
\end{mydef}


Before going into the formal semantics of LDQL and LPEs, we give some more intuition about how these
expressions are evaluated in a Web of Linked Data $\symWoD$\!.
As mentioned before, the most basic expression in LDQL is of the form $\tuple{\symLPE,\symPattern}$.
To evaluate this expression over $\symWoD$ we will need a set $\symSeedURIs$ of \emph{seed} URIs.
When evaluating $\tuple{\symLPE,\symPattern}$, 
every one of the seed URIs in $\symSeedURIs$ will trigger a navigation of
	link graph $\mathcal{G}_\symWoD$
via the link path expression 
$\symLPE$ starting on that seed. That is, the seed URIs are passed to $\symLPE$ as \emph{context} URIs
in which the LPE should be evaluated. These evaluations of $\symLPE$
will result in a set of URIs that are used to construct a dataset over which $\symPattern$
is finally evaluated. 

Regarding the navigation of
	link graph $\mathcal{G}_\symWoD$\!,
the most basic form of navigation is to follow a single link graph edge that matches
a link pattern $\symLP$.
When a navigation via a link pattern $\symLP$ is triggered from a context URI $\symURI$, we proceed as follows.
We first go to the authoritative \doc\ for $\symURI$, that is $\fctADoc{\symURI}$, and try to
	find outgoing link graph edges that match $\symLP$
in the context of $\symURI$~(as explained in
	Section~\ref{ssec:LinkPatterns}).
Every one of these matches defines a new context URI $\symURI'$ from which the navigation
	can~continue.
More complex forms of navigation are obtained by combining link patterns via classical regular expression
operators such as concatenation~$\peConcat{}{}$, disjunction~$\peAlt{}{}$, 
and recursive concatenation~$\peKleene{(\cdot)}$.
The nesting operator~$\peTest{\cdot}$ is used to test for existence of paths.
When a context URI $\symURI$ is passed to an expression $\peTest{\symLPE}$, it checks
	whether $\mathcal{G}_\symWoD$ contains a path from $\symCtxDoc = \fctADoc{\symURI}$ that matches $\symLPE$.
If such a path exists, the navigation can continue from 
the same context URI $\symURI$.
The most involved form of navigation is by using the expression $\tuple{?v,\symQuery}$ with $\symQuery$ an \query.
To evaluate this expression from context URI~$\symURI$ one first has to pass $\symURI$ as a seed URI for $\symQuery$ and recursively evaluate $\symQuery$ from that seed.
This evaluation generates a set of \removable{solution} mappings, and for every one of these mappings its value on variable
$?v$ is used as the new context URI from which the navigation continues.
Finally, note that 
our notion of LPEs does not provide an operator for navigating paths in their inverse direction. 
The reason for omitting such an operator is that traversing arbitrary data links backwards is impossible on the~WWW.

To formally define the semantics of LDQL we need to introduce some terminology. 
We first define a function $\dataset_\symWoD(\cdot)$ that from a set of URIs constructs an RDF dataset
with all the \docs\ pointed to by those URIs in $\symWoD$\!.
Formally, given a \wold\ $\symWoDTuple$ and a set $U$ of URIs,
$\dataset_\symWoD(U)$ is an RDF dataset~(as per \cite{Harris13:SPARQL1_1Language,Arenas09:SemanticsAndComplexityOfSPARQLBookChapter}) 
that has the set of triples
	$\{ t\in \fctData{\fctADoc{\symURI}} \mid \symURI\in U\cap\fctDom{\fctsymADoc}\}$
as default graph.
Moreover, for every URI $\symURI \in U\cap\fctDom{\fctsymADoc}$, $\dataset_\symWoD(U)$ contains 
the named graph $\tuple{\symURI,\fctData{\fctADoc{\symURI}}}$.

\begin{myexample} \label{ex:Dataset}
	Consider the Web $\symWoDEx$ in Example~\ref{ex:WoLD} and 
	the set of URIs $U= \lbrace \symURI_\mathsf{A}, \symURI_\mathsf{C} \rbrace$.
	Then $\dataset_{\symWoDEx}(U)$ has 
	$\lbrace \tuple{\symURI_\mathsf{A},p_1,\symURI_\mathsf{B}}, \tuple{\symURI_\mathsf{B},p_2,\symURI_\mathsf{C}}, \tuple{\symURI_\mathsf{A},p_2,\symURI_\mathsf{C}} \rbrace$
	as default graph, 
	and two named graphs, $\tuple{\symURI_\mathsf{A},\{\tuple{\symURI_\mathsf{A},p_1,\symURI_\mathsf{B}}, \tuple{\symURI_\mathsf{B},p_2,\symURI_\mathsf{C}}\}}$ and $\tuple{\symURI_\mathsf{C},\{\tuple{\symURI_\mathsf{A},p_2,\symURI_\mathsf{C}}\}}$.
\end{myexample}

In the formalization of the semantics of LDQL,
we use the standard join operator $\Join$ over sets of solution mappings~\cite{Harris13:SPARQL1_1Language,Perez09:SemanticsAndComplexityOfSPARQL}.
We also make use of the semantics of SPARQL graph patterns over datasets as defined
in~\cite{Arenas09:SemanticsAndComplexityOfSPARQLBookChapter}. 
In particular, given an RDF dataset $\symRDFdataset$, an RDF graph $\symRDFgraph$ in
$\symRDFdataset$, and a SPARQL graph pattern $\symPattern$, we denote by $\fctEvalPGD{\symPattern}{\symRDFgraph}{\symRDFdataset}$ the evaluation of $\symPattern$ over $\symRDFgraph$ in $\symRDFdataset$~\cite[Definition~13.3]{Arenas09:SemanticsAndComplexityOfSPARQLBookChapter}.

We are now ready to formally
	define
the semantics of LDQL and LPEs.
Given a~\wold\ $\symWoD$ and a set $\symSeedURIs$ of URIs, we formalize the evaluation of \queries\ over $\symWoD$
from the seed URIs $\symSeedURIs$, as a function 
$\fctEvalQWS{  \cdot  }{\symWoD}{\symSeedURIs}$ that given an \query, produces a set
of solution mappings.
Similarly, the evaluation of LPEs over $\symWoD$ from a context URI $\symURI$, is formalized as
a function $\fctEvalPathPdW{  \cdot  }{\symURI}{\symWoD}$ that given an LPE, produces a set of URIs.

\begin{mydef}{Evaluation Semantics of LDQL Queries and LPEs}%
	\label{def:LDQLSemantics}
	\label{def:SemanticsLPEs}
		Given a finite set $\symSeedURIs \!\subseteq\! \symAllURIs$, the \definedTerm{$\symSeedURIs$-based evaluation} of \queries~over a \wold\ $\symWoD \!=\! \tuple{\symDocs,\fctsymADoc}$,
	denoted by $\fctEvalQWS{\cdot}{\symWoD}{\symSeedURIs}$, is
	defined recursively as~follows:

\begin{align*}
	\fctEvalQWS{  \tuple{\symLPE,\symPattern}  }{\symWoD}{\symSeedURIs} &\definedAs
		\fctEvalPGD{\symPattern}{\symRDFgraph}{\symRDFdataset}
    \;\;\text{ where $\textstyle\symRDFdataset = \dataset_\symWoD\left(
    \bigcup_{u\in S}\fctEvalPathPdW{\symLPE}{u}{\symWoD}\right)$ with default graph $G$}
	,
	\\
	\fctEvalQWS{  (\OpSEED\ U\ \symQuery)  }{\symWoD}{\symSeedURIs} &\definedAs
		\fctEvalQWS{  \symQuery  }{\symWoD}{U}
	,
	\\[-0.5mm] 
	\fctEvalQWS{  (\OpSEED\ ?v\ \symQuery)  }{\symWoD}{\symSeedURIs} &\definedAs
		\textstyle \bigcup_{\symURI \in \symAllURIs} \bigl( \fctEvalQWS{\symQuery}{\symWoD}{\lbrace\symURI\rbrace} \Join \lbrace \mu_\symURI \rbrace \bigr) 
		\;\;\text{ where $\mu_\symURI = \lbrace ?v \mapsto \symURI \rbrace$ for all $\symURI \in \symAllURIs$}
	,
	\\
	\fctEvalQWS{  (\symQuery_1\! \OpUNION \symQuery_2)  }{\symWoD}{\symSeedURIs} &\definedAs
		\fctEvalQWS{\symQuery_1}{\symWoD}{\symSeedURIs} \cup \fctEvalQWS{\symQuery_2}{\symWoD}{\symSeedURIs},
		\hspace{11mm}
	\\
	\fctEvalQWS{  ( \symQuery_1\! \OpAND \symQuery_2)  }{\symWoD}{\symSeedURIs} &\definedAs
		\fctEvalQWS{\symQuery_1}{\symWoD}{\symSeedURIs} \!\Join \fctEvalQWS{\symQuery_2}{\symWoD}{\symSeedURIs},
		\hspace{4mm}
	\\
	\fctEvalQWS{  \,\pi_V \symQuery\,  }{\symWoD}{\symSeedURIs} &\definedAs
		\lbrace \mu \mid \text{there exists } \mu'\! \in \fctEvalQWS{\symQuery}{\symWoD}{\symSeedURIs} 
		\text{ such that $\mu$ and $\mu'$ are} 
		\\[-0.7mm]
		& \hspace{11mm} \text{ compatible and } \fctDom{\mu} = \fctDom{\mu'} \cap V
 		\rbrace
	.
\end{align*}
Now for the semantics of LPEs, given a context URI $\symCtxURI\in \fctDom{\fctsymADoc}$, the 
\definedTerm{$\symCtxURI$-based evaluation} of LPEs over~$\symWoD$\!, denoted by $\fctEvalPathPdW{\cdot}{\symCtxURI}{\symWoD}$\!, is defined recursively as follows:
\begin{align*}
	\fctEvalPathPdW{\,\peEmpty\,}{\symCtxURI}{\symWoD} &\definedAs
		\lbrace \symCtxURI \rbrace
		,
	\\
	\fctEvalPathPdW{\symLP}{\symCtxURI}{\symWoD} &\definedAs
		\lbrace \symURI \in \symAllURIs \mid 
		\text{there exist a link graph edge $\tuple{d_\mathsf{src},(t,\symURI),d_\mathsf{tgt}} \in \mathcal{G}_W$, with}
		\\[-0.7mm]
		& \hspace{18.5mm}\text{$d_\mathsf{src}=\fctADoc{\symCtxURI}$, that matches $\symLP$ in the context of $\symCtxURI\}$}
		,
		%
		%
	\\[-0.7mm] 
	\fctEvalPathPdW{\peConcat{\symLPE_1}{\symLPE_2}}{\symCtxURI}{\symWoD} &\definedAs
		\lbrace \symURI \in \fctEvalPathPdW{\symLPE_2}{\symURI'}{\symWoD} \mid \symURI' \in \fctEvalPathPdW{\symLPE_1}{\symCtxURI}{\symWoD} \rbrace
		,
	\\
	\fctEvalPathPdW{\peAlt{\symLPE_1}{\symLPE_2}}{\symCtxURI}{\symWoD} &\definedAs
		\fctEvalPathPdW{\symLPE_1}{\symCtxURI}{\symWoD} \cup \fctEvalPathPdW{\symLPE_2}{\symCtxURI}{\symWoD}
		,
	\\
	\fctEvalPathPdW{\peKleene{\symLPE}}{\symCtxURI}{\symWoD} &\definedAs
		\lbrace \symCtxURI \rbrace \cup \fctEvalPathPdW{\symLPE}{\symCtxURI}{\symWoD} \cup \fctEvalPathPdW{\peConcat{\symLPE}{\symLPE}}{\symCtxURI}{\symWoD} \cup \fctEvalPathPdW{\peConcat{\peConcat{\symLPE}{\symLPE}}{\symLPE}}{\symCtxURI}{\symWoD} \cup \ldots
		,
	\\
	\fctEvalPathPdW{\,\peTest{\symLPE}\,}{\symCtxURI}{\symWoD} &\definedAs
		\lbrace \symCtxURI \mid \fctEvalPathPdW{\symLPE}{\symCtxURI}{\symWoD} \neq \emptyset \rbrace
		,
	\\[-0.7mm] 
	\fctEvalPathPdW{\,\peSubquery{?v}{\symQuery}\,}{\symCtxURI}{\symWoD} &\definedAs
		\lbrace \symURI\in \symAllURIs \mid \text{there exists }\mu\in\fctEvalQWS{\symQuery}{\symWoD}{\lbrace \symCtxURI \rbrace}
		\text{ such that }\mu(?v)=\symURI  \rbrace 
	.
\end{align*}
Moreover, if $\symCtxURI\notin \fctDom{\fctsymADoc}$, then $\fctEvalPathPdW{\symLPE}{\symCtxURI}{\symWoD}=\emptyset$ for every LPE.
\end{mydef}

\begin{myexample} \label{ex:LPE}
	Let $\symLPE_\mathsf{ex}$ be the LPE \,$\peConcat{ \peKleene{\tuple{\peWildcard,p_1,\peWildcard}} }{ \peTest{\tuple{\peWildcard,p_2,\peWildcard}} }$. This LPE selects \docs\ that can be reached via arbitrarily long paths of data links with predicate $p_1$ and, additionally, have some outgoing data link with predicate $p_2$.
	For our example Web $\symWoDEx$ and context URI $\symURI_\mathsf{A}$,
		the LPE selects \docs\ $d_\mathsf{A} = \fctADocEx{\symURI_\mathsf{A}}$ and $d_\mathsf{C} = \fctADocEx{\symURI_\mathsf{C}}$. More precisely, we have~$\fctEvalPathPdW{\symLPE_\mathsf{ex}}{\symURI_\mathsf{A}}{\symWoDEx} = \lbrace \symURI_\mathsf{A}, \symURI_\mathsf{C} \rbrace$. Note that \doc\ $d_\mathsf{B}$ can also be reached via a $p_1$--path, but it does not pass the $p_2$--re\-lat\-ed test.
\end{myexample}

\begin{myexample} \label{ex:BasicQueryEvaluation}
	Consider a set of URIs $\symSeedURIs_\mathsf{ex} = \lbrace \symURI_\mathsf{A} \rbrace$ and a basic \query\ $
	\tuple{ \symLPE_\mathsf{ex}, \symBGP_\mathsf{ex} }$ whose LPE is
		$\symLPE_\mathsf{ex}$ as introduced in Example~\ref{ex:LPE}
	and whose SPARQL graph pattern is a basic graph pattern that contains two triple patterns, $\symBGP_\mathsf{ex} = \lbrace \tuple{?x,p_1,?y}, \tuple{?x,p_2,?z} \rbrace$.
	Given that we have $\fctEvalPathPdW{\symLPE_\mathsf{ex}}{\symURI_\mathsf{A}}{\symWoDEx} = \lbrace \symURI_\mathsf{A}, \symURI_\mathsf{C} \rbrace$~(cf.~Example~\ref{ex:LPE}), 
	$\dataset_{\symWoDEx}(\fctEvalPathPdW{\symLPE_\mathsf{ex}}{\symURI_\mathsf{A}}{\symWoDEx})$ has the default graph
	$\lbrace \tuple{\symURI_\mathsf{A},p_1,\symURI_\mathsf{B}}, \tuple{\symURI_\mathsf{B},p_2,\symURI_\mathsf{C}}, \tuple{\symURI_\mathsf{A},p_2,\symURI_\mathsf{C}} \rbrace$~(cf.~Example~\ref{ex:Dataset}).
%
%
	Then, according to the query semantics, 
	the result of query $\tuple{ \symLPE_\mathsf{ex}, \symBGP_\mathsf{ex} }$ over $\symWoDEx$ using seeds $\symSeedURIs_\mathsf{ex}$ consists of a single solution mapping,~%
		\removable{namely}
	$\mu = \lbrace ?x \mapsto \symURI_\mathsf{A}, ?y \mapsto \symURI_\mathsf{B}, ?z \mapsto \symURI_\mathsf{C} \rbrace$.
\end{myexample}

\begin{myexample} \label{ex:LDQLSemantics}
   Consider an \query\ $\symQuery_\mathsf{ex} = \left( \OpSEED\ ?x\ \bigtuple{\peEmpty, \tuple{?x,p_1,?w}} \right)$ whose subquery is a basic \query\ that has a single triple pattern as its SPARQL graph pattern. Additionally, let $\symQuery_\mathsf{ex}' = \bigtuple{\symLPE_\mathsf{ex},\lbrace \tuple{?x,p_1,?y}, \tuple{?x,p_2,?z} \rbrace}$ be the basic \query\ introduced in Example~\ref{ex:BasicQueryEvaluation}, and let $\symQuery_\mathsf{ex}''$ be the conjunction of these two queries; i.e., $\symQuery_\mathsf{ex}'' = ( \symQuery_\mathsf{ex} \OpAND \symQuery_\mathsf{ex}' )$.
	By Example~\ref{ex:BasicQueryEvaluation} we know that $\fctEvalQWS{\symQuery_\mathsf{ex}'}{\symWoDEx}{\symSeedURIs_\mathsf{ex}} = \lbrace \mu \rbrace$
	with $\mu=\lbrace ?x \mapsto \symURI_\mathsf{A},$ $?y \mapsto \symURI_\mathsf{B}, ?z \mapsto \symURI_\mathsf{C} \rbrace$.
%
	Furthermore, based on the data given in Example~\ref{ex:WoLD}, it is easy to see that $\fctEvalQWS{\symQuery_\mathsf{ex}}{\symWoDEx}{\symSeedURIs_\mathsf{ex}} = \lbrace \mu_1, \mu_2 \rbrace$ with $\mu_1 = \lbrace ?x \mapsto \symURI_\mathsf{A}, ?w \mapsto \symURI_\mathsf{B} \rbrace$ and $\mu_2 = \lbrace ?x \mapsto \symURI_\mathsf{B},$ $?w \mapsto \symURI_\mathsf{C} \rbrace$.
	For the
		$\symSeedURIs_\mathsf{ex}$-based
	evaluation of $\symQuery_\mathsf{ex}''$ over $\symWoDEx$, the result sets $\fctEvalQWS{\symQuery_\mathsf{ex}}{\symWoDEx}{\symSeedURIs_\mathsf{ex}}$ and $\fctEvalQWS{\symQuery_\mathsf{ex}'}{\symWoDEx}{\symSeedURIs_\mathsf{ex}}$ have to be joined. 
	Thus, we need to compute $\lbrace \mu_1, \mu_2 \rbrace \Join \lbrace \mu \rbrace$,
	which results in a single mapping $\mu'=\mu_1\cup\mu=\lbrace ?x \mapsto \symURI_\mathsf{A}, ?w \mapsto \symURI_\mathsf{C}, ?y \mapsto \symURI_\mathsf{B}, ?z \mapsto \symURI_\mathsf{C} \rbrace$.		
\end{myexample}


\subsection{Algebraic Properties of LDQL Queries} \label{sec:Equivalences}


As a basis for the discussion in the
	next
sections, we
	show  some simple algebraic~properties. 
	We say that LDQL queries $q$ and $q'$ are semantically equivalent, denoted by~$q \!\equiv\! q'$\!, if
		$\fctEvalQWS{q}{\symWoD}{\symSeedURIs}=\fctEvalQWS{q'}{\symWoD}{\symSeedURIs}$ holds for every \wold\ $\symWoD$ and every finite set $S \subseteq \symAllURIs$.

\begin{lemma}%
	\label{lem:Equivalences:AssocAndCommut}
	The operators$\OpAND$\! and$\OpUNION$\! are associative and commutative. 
\end{lemma}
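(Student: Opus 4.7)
The plan is to reduce each of the four claims (associativity and commutativity of $\OpAND$, associativity and commutativity of $\OpUNION$) to the corresponding algebraic property of the underlying set-theoretic operation used in Definition~\ref{def:LDQLSemantics}, namely $\cup$ for $\OpUNION$ and the solution-mapping join $\Join$ for $\OpAND$. Concretely, I would fix an arbitrary \wold\ $\symWoD$ and an arbitrary finite seed set $\symSeedURIs \subseteq \symAllURIs$, and unfold the semantics clause by clause.

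For $\OpUNION$, I would observe that for any queries $\symQuery_1,\symQuery_2,\symQuery_3$, the definition gives
\[
\fctEvalQWS{(\symQuery_1 \OpUNION \symQuery_2)}{\symWoD}{\symSeedURIs} = \fctEvalQWS{\symQuery_1}{\symWoD}{\symSeedURIs} \cup \fctEvalQWS{\symQuery_2}{\symWoD}{\symSeedURIs},
\]
so expanding $((\symQuery_1 \OpUNION \symQuery_2) \OpUNION \symQuery_3)$ and $(\symQuery_1 \OpUNION (\symQuery_2 \OpUNION \symQuery_3))$ both yield the threefold union of the individual evaluation sets, and similarly swapping the two arguments yields the same set by commutativity of $\cup$. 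The $\OpAND$ case is entirely analogous, invoking the associativity and commutativity of the join operator $\Join$ on sets of solution mappings, which is a standard property of SPARQL semantics established in~\cite{Perez09:SemanticsAndComplexityOfSPARQL,Arenas09:SemanticsAndComplexityOfSPARQLBookChapter} and to which the paper already commits.

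Since the equalities hold for every choice of $\symWoD$ and $\symSeedURIs$, the semantic equivalence $\equiv$ follows from the definition of $\equiv$ stated just before the lemma. There is no real obstacle here: the lemma is essentially an immediate consequence of the fact that the LDQL semantics translates $\OpUNION$ and $\OpAND$ pointwise into $\cup$ and $\Join$, respectively, and thus inherits their algebraic behaviour. The only pedantic point to be careful about is that one writes the argument with respect to a fixed seed set $\symSeedURIs$ (since this parameter is passed unchanged to both subqueries by the semantic clauses for $\OpUNION$ and $\OpAND$), so that the pointwise reduction is valid.
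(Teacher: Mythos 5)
Your proof is correct and follows essentially the same route as the paper, which disposes of the lemma by noting that the LDQL semantics of $\OpAND$\! and$\OpUNION$\! coincides with the SPARQL semantics of these operators and then invoking the corresponding equivalences of P\'{e}rez et al.~\cite[Lemma~2.5]{Perez09:SemanticsAndComplexityOfSPARQL}. Your version merely makes explicit the pointwise unfolding (for fixed $\symWoD$ and $\symSeedURIs$, with the seed set passed unchanged to both subqueries) that the paper's one-line argument implicitly relies on.
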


\begin{lemma}%
	\label{lem:Equivalences:Distributiveness}
	Let $\symQuery_1$, $\symQuery_2$, $\symQuery_3$ be \queries, the following semantic equivalences hold:
	\begin{align}
		( \symQuery_1 \OpAND (\symQuery_2 \OpUNION \symQuery_3) ) &\equiv ( ( \symQuery_1 \OpAND \symQuery_2 ) \OpUNION ( \symQuery_1 \OpAND \symQuery_3 ) )
		\\
		\pi_V (\symQuery_1 \OpUNION \symQuery_2) &\equiv ( \pi_V \symQuery_1 \OpUNION \pi_V \symQuery_2 )
		\\
		( \OpSEED\ U\ (\symQuery_1 \OpUNION \symQuery_2) ) &\equiv ( ( \OpSEED\ U\ \symQuery_1 ) \OpUNION ( \OpSEED\ U\ \symQuery_2 ) )
		\\
		\left( \OpSEED\ ?v\ (\symQuery_1 \OpUNION \symQuery_2) \right) &\equiv ( ( \OpSEED\ ?v\ \symQuery_1 ) \OpUNION ( \OpSEED\ ?v\ \symQuery_2 ) )
	\end{align}
\end{lemma}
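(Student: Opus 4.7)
\medskip
\noindent\textbf{Proof plan.}
My plan is to prove each of the four equivalences by unfolding the semantic definitions from Definition~\ref{def:LDQLSemantics} and reducing everything to already-known set-theoretic/SPARQL-algebra identities. Fix an arbitrary \wold\ $\symWoDTuple$ and a finite set $\symSeedURIs \subseteq \symAllURIs$; I will show that the two sides of each equivalence yield the same set of solution mappings when evaluated with respect to~$\symWoD$ and~$\symSeedURIs$.

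For equivalence~(1), the plan is to invoke the fact that the join operator~$\Join$ on sets of solution mappings distributes over set union, i.e., $\Omega \Join (\Omega' \cup \Omega'') = (\Omega \Join \Omega') \cup (\Omega \Join \Omega'')$, which is a standard property of the SPARQL algebra~\cite{Perez09:SemanticsAndComplexityOfSPARQL,Arenas09:SemanticsAndComplexityOfSPARQLBookChapter}. Applying the definition of$\OpAND$\! to the left-hand side, then the definition of$\OpUNION$\! to the inner expression, and then this distributive law gives exactly the unfolding of the right-hand side.

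For equivalence~(2), I unfold $\pi_V(q_1 \OpUNION q_2)$ as the set of mappings $\mu$ that are compatible with some $\mu' \in \fctEvalQWS{q_1}{\symWoD}{\symSeedURIs} \cup \fctEvalQWS{q_2}{\symWoD}{\symSeedURIs}$ with $\fctDom{\mu} = \fctDom{\mu'}\cap V$; splitting the existential over the union into two separate existentials yields exactly $\pi_V q_1 \OpUNION \pi_V q_2$. Equivalence~(3) is the most immediate: both sides simply discard the current seed set $S$ and reduce to $\fctEvalQWS{q_1}{\symWoD}{U} \cup \fctEvalQWS{q_2}{\symWoD}{U}$ by the definitions of$\OpSEED$\! with a fixed URI set and$\OpUNION$\!. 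For equivalence~(4), I unfold the left-hand side as $\bigcup_{u\in\symAllURIs}\bigl((\fctEvalQWS{q_1}{\symWoD}{\{u\}} \cup \fctEvalQWS{q_2}{\symWoD}{\{u\}}) \Join \{\mu_u\}\bigr)$ and then use distributivity of $\Join$ over $\cup$ (applied to the singleton mapping $\mu_u = \{?v \mapsto u\}$) together with the fact that unions over $u$ commute with finite unions, giving the unfolding of the right-hand side.

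The only step that is not a routine set-algebra manipulation is the use of distributivity of $\Join$ over $\cup$ in~(1) and~(4); this is the identity that carries all the real content. Since it is already established in the SPARQL literature I cited and directly applies to sets of solution mappings (regardless of how these sets were produced), I do not expect any obstacle beyond verifying that in~(4) the singleton factor $\{\mu_u\}$ appears on the right of the join, which matches the shape required for the distributive law.
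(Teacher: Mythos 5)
Your proposal is correct and is essentially the paper's own argument spelled out in full: the paper's proof simply states that the equivalences follow directly from the definitions of the operators, and your unfolding of Definition~\ref{def:LDQLSemantics} together with distributivity of $\Join$ over $\cup$ (and the splitting of the existential/union in the projection and \textsc{\textsf{seed}} cases) is exactly that routine verification made explicit.
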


Lemma~\ref{lem:Equivalences:AssocAndCommut} allows us to write sequences of either\OpAND\! or\OpUNION\! without parentheses.
%
Our next result shows the power of the construction $\tuple{?v,q}$. In particular,
it shows the somehow surprising finding that link patterns $\symLP$,
concatenation $/$, disjunction $|$, and the
test $[\cdot]$, are just \emph{syntactic sugar} as they
can be simulated by using $\varepsilon$, $\peSubquery{?v}{\symQuery}$ and $(\cdot)^*$\!.

\begin{proposition}
\label{prop:syntacticSugar}
For every LDQL query $\symQuery$, there exists an LDQL query $\symQuery'$ s.t.\ $\symQuery \equiv \symQuery'$ and 
every LPE in $\symQuery'$
	consists only of
the symbol $\varepsilon$, the construction $\peSubquery{?v}{\symQuery}$, and operator~$(\cdot)^*$\!.
\end{proposition}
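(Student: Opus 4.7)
The plan is to prove the statement by structural induction on LPEs, establishing the stronger claim that every LPE $\symLPE$ has an equivalent LPE $\symLPE'$ (in the sense $\fctEvalPathPdW{\symLPE}{\symCtxURI}{\symWoD}=\fctEvalPathPdW{\symLPE'}{\symCtxURI}{\symWoD}$ for all $\symCtxURI,\symWoD$) built only from $\varepsilon$, $\peSubquery{?v}{\symQuery}$, and $(\cdot)^*$. Because LDQL query semantics (Def.~\ref{def:LDQLSemantics}) is compositional in LPE evaluations, substituting every LPE in an LDQL query by its equivalent rewriting produces an equivalent query in the desired form.

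For the base cases, $\varepsilon$ is already in the target form. For a link pattern $\symLP$, I would construct $\peSubquery{?v}{\symQuery_\symLP}$ where $\symQuery_\symLP$ is a compound LDQL query designed so that, under seed $\{\symCtxURI\}$, the $?v$-values of its result mappings are exactly $\fctEvalPathPdW{\symLP}{\symCtxURI}{\symWoD}$. The key idea: in a basic subquery $\tuple{\varepsilon,P}$ with seed $\{\symCtxURI\}$, the dataset contains the named graph $\tuple{\symCtxURI,\fctData{\fctADoc{\symCtxURI}}}$, so a pattern of the form $\OpGRAPH\ ?c\ (\cdot)$ exposes $\symCtxURI$ as the binding of $?c$, and a triple pattern inside (substituting $?c$ for $\peContextURI$, fresh variables for each $\peWildcard$, and constants elsewhere, guarded by $\OpFILTER\ \text{isURI}$ at wildcard positions) identifies the candidate URIs. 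To further enforce the $\fctDom{\fctsymADoc}$ restriction implicit in link-pattern semantics, I would nest this inside an outer basic query whose LPE uses a $\peSubquery$ to navigate to the candidates; then lookupable candidates appear as named graphs in the outer dataset, and a final $\OpGRAPH\ ?v\ (\OpBIND(\mathit{true}\OpAS\ ?x))$ extracts them as values of $?v$.

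For the inductive cases, I would apply the induction hypothesis to sub-LPEs and then encode each operator using a single $\peSubquery$. Namely, $\peKleene{\symLPE}$ becomes $\peKleene{\symLPE'}$ (the star is permitted); $\peAlt{\symLPE_1}{\symLPE_2}$ becomes $\peSubquery{?v}{(\symQuery_1\OpUNION \symQuery_2)}$ where each $\symQuery_i=\tuple{\symLPE_i',\OpGRAPH\ ?v\ (\OpBIND(\mathit{true}\OpAS\ ?x))}$; $\peConcat{\symLPE_1}{\symLPE_2}$ becomes $\peSubquery{?v}{\pi_{\{?v\}}\bigl(\symQuery_1'\OpAND (\OpSEED\ ?u\ \symQuery_2')\bigr)}$, with $\symQuery_1'$ binding the intermediate URI to $?u$ and $\OpSEED\ ?u$ rerouting it as the seed from which $\symLPE_2'$ is evaluated; $\peTest{\symLPE}$ becomes $\peSubquery{?v}{\symQuery_t}$ where $\symQuery_t$ is the $\OpAND$ of a small query binding $?v$ to $\symCtxURI$ and a witness query non-empty iff $\fctEvalPathPdW{\symLPE}{\symCtxURI}{\symWoD}\neq\emptyset$; and $\peSubquery{?v}{\symQuery}$ is handled by recursively rewriting every LPE inside $\symQuery$.

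The main obstacle will be correctly handling the $\fctDom{\fctsymADoc}$ membership condition, which is implicit in link-pattern semantics but easily lost under GRAPH-based extraction. The issue is most delicate for $\peTest{\symLPE}$: a general $\symLPE$ may, via $\peSubquery$, contribute URIs outside $\fctDom{\fctsymADoc}$ whose mere presence still makes the test succeed in the original semantics, yet a naive witness query built through GRAPH extraction would silently miss them and declare $\symQuery_t$ empty. My plan to close this gap is to build the witness inside $\symQuery_t$ from the inductive query-level translation of $\symLPE$---which exposes all result URIs, including non-lookupable ones, as bindings of a distinguished variable $?v'$---and to guard with $\OpFILTER\ \text{bound}(?v')$, so that non-emptiness of the witness query coincides exactly with non-emptiness of $\fctEvalPathPdW{\symLPE}{\symCtxURI}{\symWoD}$. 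Once this last subtlety is handled, the inductive cases combine cleanly with the base cases to yield the desired rewriting of the whole query.
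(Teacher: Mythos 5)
Your proposal is correct and takes essentially the same route as the paper's proof: a recursive translation of LPEs into the fragment built from $\varepsilon$, $\peSubquery{?v}{\cdot}$, and $\peKleene{(\cdot)}$, with link patterns encoded through $\peSubquery{?v}{\cdot}$ over $\OpGRAPH$-based basic queries, concatenation through $\OpAND$ combined with $(\OpSEED\ ?x\ \cdot)$, alternation through $\OpUNION$, the test $\peTest{\cdot}$ through a projection/witness query, and $\peKleene{(\cdot)}$ kept as is.

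One concrete caveat, though: the loss of non-lookupable URIs under $\OpGRAPH$-based extraction is not confined to the $\peTest{\cdot}$ case, as your last paragraph suggests. Since an LPE ending in a construct $\peSubquery{?w}{\cdot}$ may legitimately return URIs outside $\fctDom{\fctsymADoc}$, your encodings of $\peAlt{\symLPE_1}{\symLPE_2}$ and of the final position of $\peConcat{\symLPE_1}{\symLPE_2}$ via $\tuple{\symLPE_i',(\OpGRAPH\ ?v\ \ldots)}$ drop exactly those URIs, so the \emph{stronger} LPE-level invariant you announce fails for them as written; the discrepancy is invisible at the top level of a basic query, where $\dataset_\symWoD(\cdot)$ discards non-lookupable URIs anyway, but it resurfaces under a surrounding $\peTest{\cdot}$ or in the value set of the rewritten expression itself. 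The repair is the idea you already deploy for tests: carry, as the induction invariant, a query-level translation whose distinguished variable captures \emph{all} result URIs, and reserve $\OpGRAPH$-extraction for intermediate positions of concatenation and $\peKleene{(\cdot)}$, where non-lookupable URIs contribute nothing because $\fctEvalPathPdW{\symLPE}{\symCtxURI}{\symWoD}=\emptyset$ whenever $\symCtxURI\notin\fctDom{\fctsymADoc}$. It is worth noting that the paper's own written translation exhibits the same fringe behavior (its link-pattern encoding even includes wildcard-position URIs that are not in $\fctDom{\fctsymADoc}$, which the original link-pattern semantics excludes, and its test and alternation cases extract results with $(\OpGRAPH\ ?v\ \lbrace\ \rbrace)$), so your explicit attention to $\fctDom{\fctsymADoc}$ is, if anything, more careful than the published sketch.
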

\begin{proof}[Sketch]
The proof is based on a recursive translation of link path expressions beginning with link patterns.
For instance, a link pattern of the form $\tuple{\peContextURI,p,\peWildcard}$ is encoded
by $\peSubquery{?v}{ \tuple{\varepsilon, (\OpGRAPH ?u\ (?u,p,?v))} }$, and we can similarly encode all types of link patterns.
To encode $/$ we make use of $\peSubquery{?v}{\symQuery}$ and the operator$\OpAND$\! inside $\symQuery$ as follows.
Consider an LPE $r=r_1/r_2$. It can be shown that $r$ is equivalent to $\peSubquery{?v}{\symQuery}$ where~$\symQuery$~is:
\vspace*{-5pt}

{\small
\[
	\bigl( \
		\tuple{ r_1, (\OpGRAPH ?x\ \{\ \}) }
		\, \OpAND \,
		\bigl( \OpSEED ?x\ \tuple{r_2, (\OpGRAPH ?v\ \{\ \})} \bigr)
	\ \bigr)
	.
\]}
Similarly, to encode $|$ we make use of$\OpUNION$\! and to encode $[\cdot]$ we use projection.
\end{proof}

Although not strictly necessary, we decided to keep link patterns and operators $/$, $|$, and $[\cdot]$ since they
represent a natural and intuitive way of
	expressing navigation paths.


\vspace*{-10pt}

\section{Comparison with Previous Linked Data Query Formalisms}
\label{sec:Comparisons}
\vspace*{-8pt}

In this section, we compare LDQL with alternative formalisms
to query Linked Data on the WWW.
There are
some general query languages for the WWW~(proposed before the advent of
Linked Data) 
that are related to our
proposal; in particular, WebSQL~\cite{WebSQL}, which is similar in spirit to
	LDQL
but different
in the features that the languages posses.
Two main novelties of
	LDQL
compared with WebSQL are the possibility to dynamically select seed URIs at query time, and the traversal of links according to properties of the queried documents that can be defined in the same \query. Neither of these are expressible in WebSQL. 
	While a complete formal comparison between LDQL and WebSQL is certainly very interesting, we leave it for future work and, instead, focus on three more recent proposals of query formalisms for the Web of Linked Data~\cite{Fionda15:NautiLODArticle,Hartig12:TheoryPaper,Hartig15:WebPPsPaper}.
We formally show that LDQL is strictly more expressive than every one of
	them. 


\vspace*{-15pt}

\subsection{Comparison with Property Paths under Context-Based Query Semantics}
\vspace*{-10pt}

Property paths (PPs for short) were introduced in SPARQL 1.1 as a way of adding 
navigational power to the language~\cite{Harris13:SPARQL1_1Language}. PPs are a form of regular expressions that are evaluated
over a single (local) RDF graph; a PP expression is used to retrieve pairs $\tuple{a,b}$ 
of nodes in the graph such that there is a path from $a$ to $b$ whose sequence of edge 
labels belongs~(as a string) to the regular language defined by the expression.
The syntax of PP expressions is given by the following 
grammar\footnote{In~\cite{Hartig15:WebPPsPaper} the reverse path construction $\text{\textasciicircum}\symPP$ is
also considered. We do not consider it here as the form of navigation of these reverse paths
does not represent a traversal of the link graph.},
where $p,u_1,u_2,\ldots,u_k$ are URIs.
\begin{align*}
	\symPP \, :\definedAs \, \, &
	p
  \,\mid\,\ 
  !(u_1|u_2|\cdots|u_k)
	\,\mid\,
		\peConcat{\symPP}{\symPP}
	\,\mid\,
		\peAlt{\symPP}{\symPP}
	\,\mid\,
		\peKleene{\symPP}
\end{align*}%
A PP-pat\-tern is defined as a tuple of the form $\tuple{\alpha,\symPP,\beta}$
where
	$\symPP$ is a PP expression, and $\alpha$ and $\beta$ are in $\symAllURIs \cup \symAllLiterals \cup \symAllVariables$.

In~\cite{Hartig15:WebPPsPaper} the authors adapted the semantics of PP-pat\-terns so that they can be used to query the Web of Linked Data. The proposed query semantics is called \emph{con\-text-based semantics}~\cite{Hartig15:WebPPsPaper}.
To define this semantics, the authors first introduce the notion of a \emph{context selector} for a \wold\ $\symWoD$%
	\!. This context selector is
a function $C^\symWoD\!(\cdot)$ that given a URI $\symURI \in \fctDom{\fctsymADoc}$ returns the RDF triples in 
$\fctData{\fctADoc{u}}$ that have $\symURI$ in the subject position.
Formally, for
	every URI
$\symURI\in \fctDom{\fctsymADoc}$ we have $C^\symWoD\!(\symURI)=\{\tuple{s,p,o}\in \fctData{\fctADoc{\symURI}}\mid s=\symURI\}$.
To simplify the exposition, the authors extended the definition of $C^\symWoD\!(\cdot)$ to also handle
	URIs not in $\fctDom{\fctsymADoc}$, and literals and blank nodes.
For any such RDF term $a$ they define $C^\symWoD\!(a)$ as the empty set.

The con\-text-based semantics for
	PPs
over the Web of Linked Data in~\cite{Hartig15:WebPPsPaper} is a bag semantics that follows closely the semantics for
	PPs
defined in the normative semantics of SPARQL~1.1~\cite{Harris13:SPARQL1_1Language}. Hence, both semantics use a procedure, the \emph{ArbitraryLengthPath} procedure~\cite{Harris13:SPARQL1_1Language}, to define the semantics of the $(\cdot)^*$ operator.
It was shown in~\cite{ArenasCP12} that for sets semantics, the 
normative semantics of
	PPs
can be defined by
	using
standard techniques for regular expressions.
To make the comparison with
	LDQL,
in this paper we adapt the
	con\-text-based
semantics for
	PPs
presented in~\cite{Hartig15:WebPPsPaper} by following the techniques in~\cite{ArenasCP12}, and consider only sets of mappings.
	To this end, we define a function $\fctEvalSetQcW{\cdot}{\mathsf{ctxt}}{\symWoD}$, that given a PP-pat\-tern, returns its evaluation under con\-text-based semantics
over the \wold\ $\symWoD$\!.
In the definition, for a solution mapping $\mu$ and an RDF term $\alpha$, we use
	\removable{the notation}
$\mu[\alpha]$ with the following
meaning: $\mu[\alpha]=\mu(\alpha)$ if $\alpha\in \fctDom{\mu}$, and $\mu[\alpha]=\alpha$ in the other case.
Similarly, $\mu[\tuple{s,p,o}]=\tuple{\mu[s],\mu[p],\mu[o]}$.

{\small
\begin{align*}
\fctEvalSetQcW{(\alpha,p,\beta)}{\mathsf{ctxt}}{\symWoD} & \definedAs 
	\{\mu\mid \fctDom{\mu}=\{\alpha,\beta\}\cap \mathcal{V}
	\text{ and }\mu[\tuple{\alpha,p,\beta}]\in C^{\symWoD}(\mu[\alpha])\} \\
\fctEvalSetQcW{(\alpha,!(u_1|\cdots|u_k),\beta)}{\mathsf{ctxt}}{\symWoD} & \definedAs 
	\{\mu\mid \fctDom{\mu}=\{\alpha,\beta\}\cap \mathcal{V}
	\text{ and exists }p\text{ s.t. }\\
	&\hspace*{30pt} \mu[\tuple{\alpha,p,\beta}]\in C^{\symWoD}(\mu[\alpha])
	\text{ and }p\notin\{u_1,\ldots,u_k\}\}	\\
\fctEvalSetQcW{(\alpha,\symPP_1/\symPP_2,\beta)}{\mathsf{ctxt}}{\symWoD} & \definedAs 
	\pi_{\{\alpha,\beta\}\cap\mathcal{V} }\big( \fctEvalSetQcW{(\alpha,\symPP_1,?v)}{\mathsf{ctxt}}{\symWoD} \Join 
	\fctEvalSetQcW{(?v,\symPP_2,\beta)}{\mathsf{ctxt}}{\symWoD}\big) \\
\fctEvalSetQcW{(\alpha,\symPP_1|\symPP_2,\beta)}{\mathsf{ctxt}}{\symWoD} & \definedAs 
	\fctEvalSetQcW{(\alpha,\symPP_1,\beta)}{\mathsf{ctxt}}{\symWoD} \cup
	\fctEvalSetQcW{(\alpha,\symPP_2,\beta)}{\mathsf{ctxt}}{\symWoD} \\
\fctEvalSetQcW{(\alpha,\symPP^*,\beta)}{\mathsf{ctxt}}{\symWoD} & \definedAs 
	\{\mu\mid \fctDom{\mu}=\{\alpha,\beta\}\cap \mathcal{V}, \mu[\alpha]=\mu[\beta]\text{ and }\mu[\alpha]\in \textit{terms}(\symWoD)\}\cup \\
	&\hspace*{15pt}\fctEvalSetQcW{(\alpha,\symPP,\beta)}{\mathsf{ctxt}}{\symWoD}\cup 
	\fctEvalSetQcW{(\alpha,\symPP/\symPP,\beta)}{\mathsf{ctxt}}{\symWoD}\cup
	\fctEvalSetQcW{(\alpha,\symPP/\symPP/\symPP,\beta)}{\mathsf{ctxt}}{\symWoD}\cup \cdots
\end{align*}}
\vspace*{-10pt}

A \emph{PP-based SPARQL query}~\cite{Hartig15:WebPPsPaper} is an expression formed by combining PP-pat\-terns 
using the standard SPARQL operators$\OpAND$\!,$\OpUNION$\!,$\OpOPT$\!,$\OpFILTER$\! and so on,
following the standard semantics for these operators~\cite{Arenas09:SemanticsAndComplexityOfSPARQLBookChapter}.
Our next results show that LDQL is strictly more expressive than PP-based SPARQL queries 
under con\-text-based semantics.

\begin{theorem}\label{theo:LDQLMoreThanPP}
There exists an LDQL query that cannot be expressed as a PP-based SPARQL query under con\-text-based semantics.
\end{theorem}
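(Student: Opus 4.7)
The plan is to exhibit an LDQL query that exploits the key asymmetry between LDQL's link patterns and the context-based semantics of property paths. The context selector $C^\symWoD(u)$ exposes only those triples in $\fctData{\fctADoc{u}}$ whose subject equals $u$, so a PP-based SPARQL query under context-based semantics can never witness a triple sitting in a document whose subject differs from the URI dereferencing to that document. An LDQL link pattern such as $\tuple{\peWildcard,p,\peWildcard}$, by contrast, matches any triple with predicate $p$ inside $\fctADoc{\symCtxURI}$ regardless of its subject. As the separating query I will use
\[
q_L \;=\; \bigl(\OpSEED\ \{u_1\}\ \tuple{\tuple{\peWildcard,p,\peWildcard},\tuple{?x,q,?y}}\bigr).
\]

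First I will build two finite \wolds\ $W_1$ and $W_2$ consisting of documents $d_1=\fctADoc{u_1}$ and $d_3=\fctADoc{u_3}$, with $u_2\notin\fctDom{\fctsymADoc}$. I set $\fctData{d_1}=\{\tuple{u_1,p,u_2},\tuple{u_2,p,u_3}\}$ in $W_1$ but $\fctData{d_1}=\{\tuple{u_1,p,u_2}\}$ in $W_2$; in both Webs I put $\fctData{d_3}=\{\tuple{u_3,q,u_3}\}$. A short direct computation then gives $\fctEvalPathPdW{\tuple{\peWildcard,p,\peWildcard}}{u_1}{W_1}=\{u_1,u_3\}$, which drags $d_3$ into the constructed dataset, so $\fctEvalQWS{q_L}{W_1}{S}$ contains the mapping $\{?x\mapsto u_3,?y\mapsto u_3\}$; in $W_2$ the corresponding URI set is $\{u_1\}$, no triple with predicate $q$ appears in the dataset, and $\fctEvalQWS{q_L}{W_2}{S}=\emptyset$ for every seed $S$. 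Thus $q_L$ separates the two Webs.

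Second, I will verify by direct inspection that $C^{W_1}(u)=C^{W_2}(u)$ for every URI $u$ (both equal $\{\tuple{u_1,p,u_2}\}$ at $u_1$, $\{\tuple{u_3,q,u_3}\}$ at $u_3$, and are empty elsewhere) and that $\textit{terms}(W_1)=\textit{terms}(W_2)=\{u_1,u_2,u_3,p,q\}$. The main technical step is then a structural induction on any PP-based SPARQL query $q_P$ showing $\fctEvalSetQcW{q_P}{\mathsf{ctxt}}{W_1}=\fctEvalSetQcW{q_P}{\mathsf{ctxt}}{W_2}$: the atomic cases $(\alpha,p,\beta)$ and $(\alpha,!(u_1|\cdots|u_k),\beta)$ depend only on the context selector evaluated at $\mu[\alpha]$; the cases for $/$ and $|$ inherit invariance from their subexpressions; and the outer standard SPARQL operators merely recombine mapping sets, so invariance propagates upward. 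The hard part will be the $(\cdot)^*$ subcase, whose semantics explicitly mentions $\textit{terms}(\symWoD)$; this is exactly why $W_1$ and $W_2$ were engineered to share their term sets in addition to their context selectors. Combining both parts yields the required contradiction to any hypothetical equivalence of $q_L$ with some $q_P$.
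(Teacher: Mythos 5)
Your proof is correct, and its skeleton is the same as the paper's: exhibit an LDQL query whose answer depends on a non-authoritative triple, construct two finite webs with identical context selectors on which that query's answers differ, and then argue by structural induction that every PP-based SPARQL query has identical context-based evaluations over the two webs, yielding the contradiction. The paper instantiates this with $\symQuery = \big(\OpSEED\ u\ \big\langle \tuple{\peContextURI,p,\peWildcard}, (?x,?x,?x) \big\rangle\big)$ and two webs whose second documents are $\{(a,a,a)\}$ and $\{(b,b,b)\}$, whereas you hide the LDQL-visible but context-invisible triple $(u_2,p,u_3)$ inside the seed document itself (its subject $u_2$ differs from the looked-up URI $u_1$, so $C^{\symWoD}(u_1)$ omits it) and use a second predicate $q$ as the witness; both constructions exploit exactly the same blindness of the context selector. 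Where your version genuinely improves on the paper's is the $(\cdot)^*$ case of the induction: the paper's webs satisfy $a \in \textit{terms}(\symWoD_1)\setminus\textit{terms}(\symWoD_2)$ and $b \in \textit{terms}(\symWoD_2)\setminus\textit{terms}(\symWoD_1)$, so under the natural reading in which $\textit{terms}(\cdot)$ collects all terms occurring in the documents, the zero-length clause of the star semantics already distinguishes the paper's two webs (e.g.\ $(?x,p^*,?y)$ admits $\{?x\mapsto a, ?y\mapsto a\}$ over $\symWoD_1$ but $\{?x\mapsto b, ?y\mapsto b\}$ over $\symWoD_2$), a point the appendix's remark that ``all the other cases \ldots are equivalent'' glosses over. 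Your deliberate equalization $\textit{terms}(W_1)=\textit{terms}(W_2)=\{u_1,u_2,u_3,p,q\}$ closes exactly this hole and makes the induction airtight independently of how $\textit{terms}(\symWoD)$ is interpreted. The remaining work you defer is routine and matches the paper: the atomic and $/$, $|$ cases reduce to equality of context selectors, and the outer operators $\OpAND$\!, $\OpUNION$\!, $\OpOPT$\!, $\OpFILTER$ merely recombine mapping sets, so invariance propagates; your computations of $\fctEvalQWS{q_L}{W_1}{S}$ and $\fctEvalQWS{q_L}{W_2}{S}$ check out (the inner $\OpSEED\ \{u_1\}$ correctly makes both independent of the outer seed set $S$).
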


\begin{proof}[Sketch]
One can show that LDQL query $\symQuery \!=\! \big(\OpSEED\ U\ \big\langle \tuple{\peContextURI,p,\peWildcard}, (?x,?x,?x) \big \rangle\big)$ with $U = \lbrace \symURI \rbrace$ cannot be expressed by PPs under con\-text-based~seman\-tics because this semantics is ``blind'' to triples that are not authoritative.
For instance, in a Web $\symWoD = \tuple{ \lbrace d,d' \rbrace, \fctsymADoc }$ with
	$\fctData{d}=\{\tuple{\symURI,p,\symURI'}\}$, $\fctData{d'}=\{\tuple{\symURI'\!,p,\symURI},\tuple{\symURI,\symURI,\symURI}\}$, $\fctADoc{\symURI}=d$ and $\fctADoc{\symURI'}=d'$\!,
the evaluation of $\symQuery$ is the solution mapping
$\lbrace ?x \mapsto \symURI \rbrace$.
Notice that the only authoritative triple in $d'$ is $\tuple{\symURI'\!,p,\symURI}$ as
	$d'=\fctADoc{\symURI'}\neq \fctADoc{\symURI}$.
Hence, one can prove that PP-based SPARQL queries under con\-text-based semantics cannot access triple $\tuple{\symURI,\symURI,\symURI}$ in
$d'$\!, and thus, will never have $\lbrace ?x \mapsto \symURI \rbrace$ as~solution.
\end{proof}

\begin{theorem}\label{theo:LDQLCoversPP}
Let $\alpha,\beta \in \symAllURIs \cup \symAllLiterals \cup \symAllVariables$.
Then, for every PP-pat\-tern $\tuple{\alpha,\symPP,\beta}$, there exists an \query\ $\symQuery$ such that
$
\fctEvalSetQcW{ \tuple{\alpha,\symPP,\beta} }{\mathsf{ctxt}}{\symWoD} = \fctEvalQWS{\symQuery}{\symWoD}{\emptyset}
$
for every \wold\ $\symWoD$\!.
\end{theorem}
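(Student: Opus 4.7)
The plan is to translate PP-pat\-terns into LDQL queries by structural induction on $\symPP$. Without loss of generality, assume $\alpha$ and $\beta$ are distinct variables $?x,?y$, since constants can be handled by a final \OpFILTER or by directly instantiating the corresponding variable. The outer shape of the translation $T(?x,\symPP,?y)$ will always be $(\OpSEED\ ?x\ \ldots)$, which iterates over all URIs $u \in \symAllURIs$; for each such $u$, the subquery dereferences $u$ via $\peEmpty$ and then uses a pattern of the form $(\OpGRAPH\ ?x\ \ldots)$ to restrict matches to the single named graph of $\dataset_\symWoD(\{u\})$, whose triples are exactly $C^\symWoD(u)$. For the base case of an atomic property $p$, I would take
\[
T(?x,p,?y) \definedAs \bigl(\OpSEED\ ?x\ \bigl\langle \peEmpty,\, (\OpGRAPH\ ?x\ \{(?x,p,?y)\}) \bigr\rangle\bigr),
\]
and the negated property set $!(u_1|\cdots|u_k)$ is handled analogously using a predicate variable $?p$ together with a \OpFILTER excluding $u_1,\ldots,u_k$, followed by projection to $\{?x,?y\}$.

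For concatenation and alternation I would use a fresh variable $?z$ and mimic the PP semantics directly:
\begin{align*}
T(?x,\symPP_1/\symPP_2,?y) &\definedAs \pi_{\{?x,?y\}}\bigl(T(?x,\symPP_1,?z) \OpAND T(?z,\symPP_2,?y)\bigr), \\
T(?x,\symPP_1|\symPP_2,?y) &\definedAs (T(?x,\symPP_1,?y) \OpUNION T(?x,\symPP_2,?y)).
\end{align*}
Because the subterm $T(?z,\symPP_2,?y)$ starts with $(\OpSEED\ ?z\ \ldots)$, the intermediate variable $?z$ is automatically restricted to URIs, which agrees with the fact that $C^\symWoD(z)=\emptyset$ for every non-URI $z$, so no spurious joins arise.

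The hard part will be $\symPP^*$, whose semantics is an infinite union together with a reflexive diagonal over all of $\textit{terms}(\symWoD)$. I would split the translation as $T(?x,\symPP^*,?y) \definedAs (Q_{\mathrm{refl}} \OpUNION Q_{\mathrm{iter}})$. For $Q_{\mathrm{refl}}$ I would enumerate every term of the Web by iterating $\OpSEED$ over $\symAllURIs$, dereferencing each $u$, matching an arbitrary triple in the default graph, and then copying the subject, the predicate, and the object (via three \OpUNION--combined branches using $\OpBIND$) into both $?x$ and $?y$; surjectivity of $\fctsymADoc$ guarantees that every document, and hence every element of $\textit{terms}(\symWoD)$, is visited. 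In parallel I would define an LPE $L_\symPP$ whose $u$-based evaluation returns exactly the URI endpoints reachable from $u$ under con\-text-based $\symPP$-se\-man\-tics: $L_p \definedAs \tuple{\peContextURI,p,\peWildcard}$, $L_{\symPP_1/\symPP_2} \definedAs \peConcat{L_{\symPP_1}}{L_{\symPP_2}}$, $L_{\symPP_1|\symPP_2} \definedAs \peAlt{L_{\symPP_1}}{L_{\symPP_2}}$, $L_{\symPP^*} \definedAs \peKleene{L_\symPP}$, and the negated case is encoded as $\peSubquery{?v}{\cdot}$ wrapping a filtered SPARQL subquery in the spirit of Proposition~\ref{prop:syntacticSugar}. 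Then, with a fresh variable $?z$,
\[
Q_{\mathrm{iter}} \definedAs \pi_{\{?x,?y\}}\Bigl(\bigl(\OpSEED\ ?x\ \bigl\langle \peKleene{L_\symPP},\, (\OpGRAPH\ ?z\ \{\ \}) \bigr\rangle\bigr) \OpAND T(?z,\symPP,?y)\Bigr),
\]
so that $\peKleene{L_\symPP}$ performs $k \geq 0$ intermediate URI-to-URI hops ending at a URI $?z$, the $(\OpGRAPH\ ?z\ \{\ \})$ pattern reads back $?z$ as the name of the corresponding named graph, and the join with $T(?z,\symPP,?y)$ performs the single final $\symPP$-step, which is the only one allowed to land at a non-URI term. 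The main technical obstacle will then be the companion induction showing $\fctEvalPathPdW{L_\symPP}{u}{\symWoD} = \{v \in \symAllURIs : \{?y\mapsto v\} \in \fctEvalSetQcW{(u,\symPP,?y)}{\mathsf{ctxt}}{\symWoD}\}$, together with the careful bookkeeping that $Q_{\mathrm{refl}} \OpUNION Q_{\mathrm{iter}}$ matches the infinite-union definition of $\fctEvalSetQcW{(?x,\symPP^*,?y)}{\mathsf{ctxt}}{\symWoD}$; in particular that the $k{=}0$ case of $Q_{\mathrm{iter}}$ correctly recovers the 1-step paths via the named graph of the seed URI itself.
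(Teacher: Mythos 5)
Your overall architecture closely mirrors the paper's: the same inductive translation with $\pi$/$\OpAND$ for concatenation and $\OpUNION$ for alternation, the same $\OpSEED$-plus-dereference trick for the base cases (the paper uses the default graph of $\dataset_\symWoD(\lbrace u \rbrace)$ with the outer join pinning $?x$ to the seed, where you use $\OpGRAPH\ ?x$; these are interchangeable), and the same three-way anatomy for $\symPP^*$: a reflexive part enumerating $\textit{terms}(\symWoD)$, plus $k \geq 0$ URI-to-URI hops expressed by a starred LPE, joined with one final SPARQL-level $\symPP$-step that is allowed to land on a literal or non-dereferenceable term. The one genuine divergence is \emph{inside} the star: the paper's starred LPE is $\peKleene{\peSubquery{?v}{Q_s(?v)}}$ where $Q_s(?v)$ reads back the context URI via $\OpGRAPH$ and performs a single hop by reusing the \emph{already translated} query $Q_{\symPP}$ for the star body, so the correctness of each hop comes for free from the main induction hypothesis. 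You instead introduce a second, purely navigational family $L_\symPP$ mirroring the PP syntax with link patterns, $/$, $|$, $\peKleene{(\cdot)}$, which forces the companion induction you acknowledge as the main obstacle.

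That companion lemma is false as you stated it, on two counts. First, link graph edges exist only toward documents, i.e.\ toward URIs in $\fctDom{\fctsymADoc}$ (Definition~\ref{def:LinkGraph}), so $\fctEvalPathPdW{L_p}{u}{\symWoD}$ omits any URI $v \notin \fctDom{\fctsymADoc}$ with $\tuple{u,p,v} \in C^\symWoD(u)$, whereas such $v$ does appear in $\fctEvalSetQcW{(u,p,?y)}{\mathsf{ctxt}}{\symWoD}$; your claimed set equality therefore fails even after restricting to $v \in \symAllURIs$. Second, $\peKleene{L_\symPP}$ always contains the context URI, while the reflexive clause of the context-based star additionally requires membership in $\textit{terms}(\symWoD)$ (and, symmetrically, LPE evaluation from a context outside $\fctDom{\fctsymADoc}$ is empty by fiat, while a non-dereferenceable term in $\textit{terms}(\symWoD)$ is a fixed point of the PP star). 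Your construction of $Q_{\mathrm{iter}}$ is nonetheless salvageable, because the $(\OpGRAPH\ ?z\ \lbrace\ \rbrace)$ read-back filters $?z$ to named-graph names (hence to $\fctDom{\fctsymADoc}$), and spurious or missing ``stay'' hops are harmless: a hop sequence with a no-op deleted or inserted realizes the same endpoint pair under some other exponent $k$ in the infinite union. But then the invariant you must prove is the weaker, $\fctDom{\fctsymADoc}$-restricted, up-to-no-op statement --- or you can simply adopt the paper's device and build the starred LPE from $\peSubquery{?v}{\cdot}$ around the translated query itself, which makes the companion induction unnecessary. Finally, your opening ``WLOG by $\OpFILTER$ or direct instantiation'' is too quick: direct instantiation is unsound for literal subjects (e.g.\ $(\OpGRAPH\ ?g\ (\ell,p,?y))$ returns matches although $C^\symWoD(\ell) = \emptyset$), and the paper devotes a separate case analysis with $\OpBIND$-based and deliberately unsatisfiable constructions precisely because $\OpSEED$ ranges over URIs only; with your translation, $\OpFILTER$ handles the base cases but the constant-endpoint star cases need the dedicated constructions.
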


\begin{proof}[Sketch]
In the proof we provide a translation scheme from PPs to LDQL. 
One major complication is that PPs can retrieve literals and, in general, values that
are not in $\fctDom{\fctsymADoc}$, which are difficult to handle by LPEs.
For every PP-pat\-tern $\tuple{?x,\symPP,?y}$ we construct an \query\ $Q_{\symPP}(?x,?y)$.
For example, for $\tuple{?x,\symPP_1/\symPP_2,?y}$, our query is $\pi_{\lbrace ?x,?y \rbrace}\bigl( Q_{\symPP_1}(?x,?z)\OpAND Q_{\symPP_2}(?z,?y) \bigr)$,
and for $\tuple{?x,!(\symURI_1|\cdots|\symURI_k),?y}$ the translation is
$
\big( \OpSEED ?x\ \big\langle \peEmpty, 
\big( (?x,?p,?y) \OpFILTER (?p\neq \symURI_1 \land \cdots \land \,?p\neq \symURI_k)\big) \big\rangle\big).
$
To handle $\symPP^*$ we need to use the construction $\peSubquery{?v}{\symQuery}$ of LPEs, plus $\peKleene{(\cdot)}$\!.
\end{proof}

\vspace*{-10pt}

\subsection{Comparison with NautiLOD}
\vspace*{-3pt}

NautiLOD is a
	\removable{navigation}
language to traverse Linked Data on the
	WWW and to
perform actions~(such as sending emails) during the traversal~\cite{Fionda15:NautiLODArticle}.
	We
compare LDQL with NautiLOD without action rules.
The syntax of
	NautiLOD \removable{expressions~(without actions)}
is given by the following grammar~(where $p\in \symAllURIs$  and $\symPattern$ is a SPARQL graph~pattern).
\begin{align*}
	\symNE \, :\definedAs \, \, &
		p
  \,\mid\,
  p\text{\textasciicircum}
  \,\mid\,
  \tuple{\peWildcard}
	\,\mid\,
		\peConcat{\symNE}{\symNE}
	\,\mid\,
		\peAlt{\symNE}{\symNE}
	\,\mid\,
		\peKleene{\symNE}
	\,\mid\,
		\symNE\peTest{(\OpASK \symPattern)}
\end{align*}
%
%
In terms of our data model%
\footnote{%
	In~\cite{Fionda15:NautiLODArticle}, all URIs have an assigned set of RDF triples~(which may be
		empty%
	).
	In our data model one can have URIs not in $\fctDom{\fctsymADoc}$. Hence, to properly capture the semantics of NautiLOD in terms of our data model we have to introduce conditions of the form ``$\symURI' \in \fctDom{\fctsymADoc}$.''
}\!%
, the semantics of NautiLOD expressions over a
	\wold\
$\symWoD \!=\! \tuple{\symDocs,\fctsymADoc}$ from URI $\symURI \!\in\! \fctDom{\fctsymADoc}$ is
defined recursively as~follows.
\begin{align*}
	\fctEvalSetQcW{\, p \,}{\symURI}{\symWoD} & \definedAs
	\{\symURI '\mid \tuple{\symURI,p,\symURI'} \in \fctData{\fctADoc{\symURI}}\}
  \\
	\fctEvalSetQcW{\, p\text{\textasciicircum} \,}{\symURI}{\symWoD} & \definedAs
	\{\symURI' \mid \tuple{\symURI',p,\symURI} \in \fctData{\fctADoc{\symURI}}\}
	\\
	\fctEvalSetQcW{\, \tuple{\peWildcard} \,}{\symURI}{\symWoD} & \definedAs
	\{\symURI' \mid \tuple{\symURI,p,\symURI'} \in \fctData{\fctADoc{\symURI}}\text{ for some }p\in \symAllURIs\}
	\\[-1mm]
	\fctEvalSetQcW{\, \symNE_1/\symNE_2 \,}{\symURI}{\symWoD} & \definedAs
	\{ \symURI''\! \mid \symURI''\!\in \fctEvalSetQcW{\, \symNE_2 \,}{\symURI'}{\symWoD} \text{ for some } \symURI'\!\in \fctEvalSetQcW{\, \symNE_1 \,}{\symURI}{\symWoD} \text{ with } \symURI'\!\in\fctDom{\fctsymADoc} \}
	\\
	\fctEvalSetQcW{\, \symNE_1|\,\symNE_2 \,}{\symURI}{\symWoD} & \definedAs
	\fctEvalSetQcW{\, \symNE_1 \,}{\symURI}{\symWoD} \cup \fctEvalSetQcW{\, \symNE_2 \,}{\symURI}{\symWoD}
	\\
	\fctEvalSetQcW{\, \symNE^* \,}{\symURI}{\symWoD} & \definedAs
	\{u\} \cup \fctEvalSetQcW{\, \symNE \,}{\symURI}{\symWoD} \cup \fctEvalSetQcW{\, \symNE/\symNE \,}{\symURI}{\symWoD} \cup \fctEvalSetQcW{\, \symNE/\symNE/\symNE \,}{\symURI}{\symWoD} \cup \cdots
	\\
	\fctEvalSetQcW{\, \symNE\peTest{(\OpASK P)} \,}{\symURI}{\symWoD} & \definedAs
	\{ \symURI' \mid \symURI' \in \fctEvalSetQcW{\, \symNE \,}{\symURI}{\symWoD},\ \symURI' \in\fctDom{\fctsymADoc}\text{ and } \fctEvalPGD{P \,}{\fctData{\fctADoc{\symURI'}}}{} \neq \emptyset \}
\end{align*}

We next show that for every NautiLOD expression there exists an equivalent \query.
Notice that the evaluation of a 
NautiLOD expression is a set of URIs, whereas the evaluation of an \query\ is 
a set of mappings. Thus, to formally state our result we compare NautiLOD
with \queries\ that have a single \emph{free variable}.
Let $\symQuery(?x)$ be an \query\ with $?x$ as free variable%
	. We say that $\symQuery(?x)$ and a NautiLOD expression~$\symNE$
are equivalent if 
for every Web of Linked Data $\symWoDTuple$ and
	URIs $\symURI,\symURI'$ with
$\symURI \in \fctDom{\fctsymADoc}$ it holds that
$\symURI'\! \in \fctEvalSetQcW{\symNE}{\symURI}{\symWoD} \text{ if and only if } \lbrace ?x \mapsto \symURI' \rbrace \in \fctEvalQWS{\symQuery(?x)}{\symWoD}{\lbrace\symURI\rbrace}\!.$

\begin{theorem}\label{theo:LDQLCoversNautiLOD}
For every NautiLOD expression $\symNE$, there exists an \query\ $\symQuery(?x)$, with $?x$ a free variable, 
that is equivalent to $\symNE$.
\end{theorem}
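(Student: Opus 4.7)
The plan is to proceed by structural induction on the NautiLOD expression $\symNE$, building an LDQL query $\symQuery_\symNE(?x)$ with distinguished free variable $?x$ such that, for every \wold\ $\symWoD$ and every $\symURI \in \fctDom{\fctsymADoc}$,
\[
\fctEvalQWS{\symQuery_\symNE(?x)}{\symWoD}{\{\symURI\}} \;=\; \bigl\{ \{?x \mapsto \symURI'\} \mid \symURI' \in \fctEvalSetQcW{\symNE}{\symURI}{\symWoD} \bigr\}.
\]
The equivalence claimed in the theorem follows immediately from this stronger statement.

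For the base cases ($p$, $p\text{\textasciicircum}$, and $\tuple{\peWildcard}$), the key observation is that the LPE $\peEmpty$ evaluated at a context URI $\symURI \in \fctDom{\fctsymADoc}$ returns $\{\symURI\}$, so $\dataset_\symWoD(\{\symURI\})$ contains the named graph $\tuple{\symURI, \fctData{\fctADoc{\symURI}}}$, whose content can be inspected through an $\OpGRAPH$ pattern. For $\symNE = p$ I would take $\symQuery_p(?x) := \pi_{\{?x\}} \bigl\langle \peEmpty,\,(\OpGRAPH\ ?y\ \{\tuple{?y,p,?x}\}) \bigr\rangle$ with a fresh $?y$; the cases $p\text{\textasciicircum}$ and $\tuple{\peWildcard}$ are handled analogously by permuting the positions of the triple pattern or replacing $p$ with a fresh variable.

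For the inductive constructors, given translations $\symQuery_{\symNE_1}$ and $\symQuery_{\symNE_2}$, alternation is immediate via $\symQuery_{\symNE_1|\symNE_2}(?x) := \symQuery_{\symNE_1}(?x) \OpUNION \symQuery_{\symNE_2}(?x)$. Concatenation uses a dynamic seed: $\symQuery_{\symNE_1/\symNE_2}(?x) := \pi_{\{?x\}}\bigl(\symQuery_{\symNE_1}(?y) \OpAND (\OpSEED\ ?y\ \symQuery_{\symNE_2}(?x))\bigr)$ with a fresh $?y$; if $?y$ is bound to a URI outside $\fctDom{\fctsymADoc}$, the nested query yields no mappings, mirroring the side condition in NautiLOD's semantics for $/$. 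The ASK test $\symNE[\OpASK P]$ is handled in the same spirit by joining $\symQuery_\symNE(?x)$ with $(\OpSEED\ ?x\ \bigl\langle \peEmpty, P\bigr\rangle)$ and projecting onto $\{?x\}$.

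The main obstacle is the Kleene star $\symNE^*$, because its NautiLOD semantics demands that every intermediate URI of the iteration lie in $\fctDom{\fctsymADoc}$ while allowing the final URI to fall outside $\fctDom{\fctsymADoc}$. The plan here is to lift $\symQuery_\symNE(?x)$ into an LPE via the subquery constructor, $L_\symNE := \peSubquery{?x}{\symQuery_\symNE(?x)}$, and exploit the fact that LPE evaluation returns $\emptyset$ whenever its context URI is not in $\fctDom{\fctsymADoc}$; this forces intermediate navigation through $\fctDom{\fctsymADoc}$ and, by induction, yields $\fctEvalPathPdW{\peKleene{L_\symNE}}{\symURI}{\symWoD} = \fctEvalSetQcW{\symNE^*}{\symURI}{\symWoD}$. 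To expose these endpoints as bindings of $?x$ (including endpoints outside $\fctDom{\fctsymADoc}$, which an $\OpGRAPH$ pattern cannot see directly), I would define
\[
\symQuery_{\symNE^*}(?x) \,:=\, \bigl\langle \peEmpty,\,(\OpGRAPH\ ?x\ \{\ \})\bigr\rangle \;\OpUNION\; \pi_{\{?x\}}\Bigl((\OpSEED\ ?y\ \symQuery_\symNE(?x)) \OpAND \bigl\langle \peKleene{L_\symNE},\,(\OpGRAPH\ ?y\ \{\ \})\bigr\rangle\Bigr)
\]
with a fresh $?y$. The first disjunct covers the zero-step case ($?x \mapsto \symURI$); the second uses $\OpGRAPH\ ?y$ to enumerate every intermediate URI in $\fctDom{\fctsymADoc}$ reachable by the iteration, and then applies one further step of $\symQuery_\symNE(?x)$ from that seed, thereby producing final endpoints that need not belong to $\fctDom{\fctsymADoc}$. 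Correctness in each case would then be verified by a direct unfolding of the respective semantic definitions.
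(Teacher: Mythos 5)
Your construction is essentially correct, but it takes a genuinely different route from the paper. The paper first defines a translation $\trans_N$ from NautiLOD expressions into LPEs and proves $v \in \fctEvalSetQcW{\symNE}{\symURI}{\symWoD}$ iff $v \in \fctEvalPathPdW{\trans_N(\symNE)}{\symURI}{\symWoD}$ \emph{only} for $\symURI, v \in \fctDom{\fctsymADoc}$; since final answers may be literals or non-dereferenceable URIs (which LPEs cannot deliver in full), it then handles the last step \emph{globally}, invoking the regular-language fact that every $\varepsilon$-free expression can be rewritten as $n_1/e_1 \mid \cdots \mid n_k/e_k \mid m_1\peTest{(\OpASK P_1)} \mid \cdots \mid m_\ell\peTest{(\OpASK P_\ell)}$ with base symbols $e_i$, translating each final symbol by a SPARQL pattern under \OpGRAPH\!. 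You instead run a fully compositional induction at the level of LDQL queries and localize the last-step problem to the one constructor where it is unavoidable, the Kleene star: zero-step disjunct, plus $\peKleene{\peSubquery{?x}{\symQuery_\symNE(?x)}}$ guarded by $(\OpGRAPH\ ?y\ \{\,\})$ to enumerate dereferenceable intermediates, plus one further application of $\symQuery_\symNE$. This avoids the normal-form rewriting entirely and is arguably cleaner. Two bookkeeping points: your induction hypothesis must explicitly include the clause $\fctEvalQWS{\symQuery_\symNE(?x)}{\symWoD}{\lbrace \symURI \rbrace} = \emptyset$ for $\symURI \notin \fctDom{\fctsymADoc}$ (you rely on it in the $/$ and star cases but only gesture at it), and your intermediate claim $\fctEvalPathPdW{\peKleene{L_\symNE}}{\symURI}{\symWoD} = \fctEvalSetQcW{\symNE^*}{\symURI}{\symWoD}$ is not literally true, since LPE evaluations contain only URIs while star endpoints may be literals; it holds up to intersection with $\symAllURIs$, and your final query compensates correctly, but state it that way.

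There is one concrete soundness gap, in the \OpASK-test case. NautiLOD's semantics for $\symNE\peTest{(\OpASK P)}$ additionally requires the endpoint $\symURI' \in \fctDom{\fctsymADoc}$, and your translation $(\OpSEED\ ?x\ \tuple{\peEmpty, P})$ does not enforce this: for $\symURI' \notin \fctDom{\fctsymADoc}$ the inner basic query evaluates $P$ over $\dataset_\symWoD(\emptyset)$, whose default graph is empty, and there are graph patterns with nonempty evaluation over the empty graph (the empty group pattern, or $\OpBIND$-only patterns), in which case your query wrongly returns $\symURI'$. Moreover, if $P$ mentions $?x$, its bindings are captured in the top-level join. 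Both problems are repaired exactly by the guard the paper uses: rename $P$'s variables apart from $?x$ and wrap it as $(\OpGRAPH\ ?z\ P)$ with fresh $?z$, so that a nonempty answer forces the existence of the named graph $\tuple{\symURI'\!,\fctData{\fctADoc{\symURI'}}}$ and hence $\symURI' \in \fctDom{\fctsymADoc}$. With that fix and the strengthened invariant made explicit, your argument goes through.
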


\begin{proof}[Sketch]
The proof begins with a simple translation that replaces every $p\in\symAllURIs$ in a
NautiLOD expression by a link pattern $\tuple{\peContextURI,p,\peWildcard}$.
For instance, the expression~$p_1/p_2^*$ is translated into $\tuple{\peContextURI,p_1,\peWildcard}/\tuple{\peContextURI,p_2,\peWildcard}^*$\!.
	To translate $\tuple{\peWildcard}$ and $\peTest{(\OpASK P)}$ we use
$\peSubquery{?v}{\symQuery}$. 
The complete translation poses several other complications%
	~(as described in the \PaperVersion{extended version~\cite{ExtendedVersion}}\ExtendedVersion{appendix}). In particular, the last step of NautiLOD expressions
must be translated by using a SPARQL pattern and not an LPE.
For this we use the following property.
Given a regular expression $r$ that does not generate the empty word, one can always write $r$ as
$r_1/a_1|\cdots |r_k/a_k$ where the ${a_i}$\!'s are base symbols of the alphabet. Thus, we can translate $r$ by using LPEs to translate
	the ${r_i}$\!'s as outlined above; next, translate the ${a_i}$\!'s by using a method similar to the proof of Theorem~\ref{theo:LDQLCoversPP},
and finally use$\OpUNION$\! for $|$.
\end{proof}

Along the same lines of Theorem~\ref{theo:LDQLMoreThanPP} one can prove the following result.

\begin{theorem}\label{theo:LDQLMoreThanNautiLOD}
There exists an LDQL query $\symQuery(?x)$ that cannot be expressed in NautiLOD.
\end{theorem}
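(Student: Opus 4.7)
The plan is to mirror the strategy from the proof of Theorem~\ref{theo:LDQLMoreThanPP} by exhibiting an LDQL query that exploits a structural limitation shared by NautiLOD: every navigational primitive ($p$, $p\text{\textasciicircum}$, and $\tuple{\peWildcard}$) requires the current URI to appear in subject or object position of the triple being traversed in the authoritative document. Consequently, from any starting URI $\symURI$, a NautiLOD expression can only output URIs reachable from $\symURI$ through a chain of such triples. In contrast, the basic LDQL construction $\tuple{\symLPE,\symPattern}$ evaluates its SPARQL component over the entire dataset retrieved by the LPE, so it may bind output variables to URIs that do not participate in any navigational link.

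I would propose the query $\symQuery(?x) = \bigtuple{\tuple{\peContextURI,p,\peWildcard},(?x,?x,?x)}$ together with the \wold\ $\symWoD = \tuple{\{d,d'\},\fctsymADoc\,}$ defined by $\fctData{d} = \{\tuple{\symURI,p,\symURI'}\}$ and $\fctData{d'} = \{\tuple{v,v,v}\}$, with $\fctADoc{\symURI}=d$, $\fctADoc{\symURI'}=d'$, and $v$ a URI distinct from both $\symURI$ and $\symURI'$. A direct computation shows that the LPE evaluates to $\{\symURI'\}$ in the context of $\symURI$, so the default graph fed to the pattern $(?x,?x,?x)$ is $\fctData{d'}$, which contains $\tuple{v,v,v}$; hence $\fctEvalQWS{\symQuery(?x)}{\symWoD}{\{\symURI\}} = \{\{?x \mapsto v\}\}$.

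The central step is a structural induction on NautiLOD expressions $\symNE$ proving the invariant $\fctEvalSetQcW{\symNE}{w}{\symWoD} \subseteq \{\symURI,\symURI'\}$ for every $w \in \{\symURI,\symURI'\}$. From $\symURI$, the only triple in $\fctData{d}$ is $\tuple{\symURI,p,\symURI'}$, so the base primitives $p$, $p\text{\textasciicircum}$, and $\tuple{\peWildcard}$ produce at most $\{\symURI'\}$; from $\symURI'$, since $\symURI'$ occurs in no triple of $\fctData{d'} = \{\tuple{v,v,v}\}$, every base primitive yields $\emptyset$. Union, concatenation, and Kleene star preserve the inclusion because the intermediate URIs considered in the semantics of concatenation are themselves inductively confined to $\{\symURI,\symURI'\}$, and the zero-step case of $\symNE^*$ only contributes the starting URI; the ASK test $\symNE[\OpASK \symPattern]$ merely filters and so also preserves the inclusion. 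Because $v \notin \{\symURI,\symURI'\}$, no NautiLOD expression $\symNE$ can satisfy $\fctEvalSetQcW{\symNE}{\symURI}{\symWoD} = \{v\}$, so none is equivalent to $\symQuery(?x)$.

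The only delicate bookkeeping is formulating the induction simultaneously over all starting URIs in $\{\symURI,\symURI'\}$ so that the concatenation and Kleene star cases close properly; once this invariant is in place, the argument is purely combinatorial and presents no real obstacle.
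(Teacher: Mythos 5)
Your proof is correct, and it takes a genuinely different route from the paper's. The paper uses the same witness query $\bigtuple{\tuple{\peContextURI,p,\peWildcard},(?x,?x,?x)}$ but argues by \emph{indistinguishability}: assuming an equivalent NautiLOD expression $n$ exists, it picks URIs $a,b$ \emph{not mentioned in} $n$, builds two webs $\symWoD_1$ and $\symWoD_2$ whose dangling documents contain $(a,a,a)$ and $(b,b,b)$ respectively, and shows by induction on subexpressions that $n$ evaluates identically over both webs while the LDQL query separates them. Your argument instead fixes a \emph{single} web and proves a confinement invariant, $\fctEvalSetQcW{\symNE}{w}{\symWoD} \subseteq \lbrace \symURI,\symURI' \rbrace$ for $w \in \lbrace \symURI,\symURI' \rbrace$, uniformly for \emph{all} NautiLOD expressions $\symNE$ --- including those that mention $v$ --- exploiting the fact that NautiLOD outputs are always navigation endpoints and that $d'$ contains no triple with $\symURI'$ in subject or object position, so navigation dead-ends there (your observation that the ASK test only filters is the key point, since an ASK \emph{can} see $(v,v,v)$ but cannot output $v$). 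What each approach buys: yours is more elementary and avoids the fresh-URI bookkeeping entirely, since the invariant needs no assumption about which URIs appear in $\symNE$; the paper's two-web style is a reusable template that it applies almost verbatim to Theorem~\ref{theo:LDQLMoreThanPP} and Theorem~\ref{thm:LDQLMoreThanReachSem}, and it isolates the semantic moral that the weaker formalisms cannot distinguish webs differing only in data they can reach but whose terms they cannot emit. Your closing remark about running the induction simultaneously over both starting URIs is exactly the right care point: concatenation and Kleene star close properly because the intermediate URIs are filtered through $\fctDom{\fctsymADoc} = \lbrace \symURI,\symURI' \rbrace$, so the invariant is self-sustaining.
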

\vspace*{-17pt}

\subsection{Comparison with SPARQL under Reachability-Based Query Semantics}
\label{ssec:Comparison:ReachSem}

In~\cite{Hartig12:TheoryPaper} the author introduces a family of reach\-abil\-i\-ty-based query
	semantics based on which
SPARQL
	\removable{graph}
patterns can be used as a query language for Linked Data on the WWW.
Similar to how the scope of
the SPARQL part of a basic \query\ is restricted to
particular \docs, reach\-abil\-i\-ty-based semantics restrict the scope of SPARQL queries to \docs\ that can be reached by traversing a well-de\-fined set of data links. To specify what data links belong to such a set,
the notion of a \emph{reachability criterion} is used; 
that is, a function $\symReachCrit\!: \symAllTriples \times \symAllURIs \times \symAllPatterns \rightarrow \lbrace \true, \false \rbrace$ where $\symAllPatterns$ denotes the set of all SPARQL graph patterns.
	Then, given
such a reachability criterion $\symReachCrit$,
a~finite set $\symSeedURIs$ of URIs
and a SPARQL graph~pattern~$\symPattern$,
a~\doc\ $d \in \symAllDocs$ is \emph{($\symReachCrit,\symSeedURIs,\symPattern$)-reach\-able} in a \wold\ $\symWoDTuple$ if
	any
of the following two conditions~holds:

\begin{enumerate}
	\item
		There exists a URI $\symURI \in \symSeedURIs$ such that $\fctADoc{\symURI} = d$; or


	\item
		there exists a link graph edge $\tuple{d_\mathsf{src},(t,\symURI),d_\mathsf{tgt}} \in \mathcal{G}_\symWoD$ such that
		\enumA~$d_\mathsf{src}$ is $(\symReachCrit,\symSeedURIs,\symPattern)$-reach\-able in $\symWoD$\!,
		\enumB~$\symReachCrit(t,\symURI,\symPattern)=\true$,
		and \enumC~$d_\mathsf{tgt}=d$.
\end{enumerate}

Notice how the second condition restricts the notion of reachability by
	ignoring data links that do not
satisfy the given reachability criterion $\symReachCrit$.
Concrete examples of reachability criteria are $\cAll$, $\cNone$, and $\cMatch$~\cite{Hartig12:TheoryPaper}, where $\cAll$ selects all data links, and $\cNone$ ignores all data links; i.e., $\cAll(t,\symURI,\symPattern) \definedAs \true$ and $\cNone(t,\symURI,\symPattern) \definedAs \false$ for all \removable{tuples} $\tuple{t,\symURI,\symPattern} \in \symAllTriples \times \symAllURIs \times \symAllPatterns$.
In contrast to such an all-or-nothing strategy,
	\removable{criterion}
$\cMatch$ returns $\true$ for every data link whose triple matches a triple pattern of the given graph pattern; formally, $\cMatch(t,\symURI,\symPattern) \definedAs \true$ if and only if there exists some solution mapping $\mu$ such that $\mu[\symTP] = t$ for an arbitrary triple pattern $\symTP$ that is contained~in~$\symPattern$.


Given the notion of a reachability criterion, it is possible to define a family of~(reach\-abil\-i\-ty-based) query semantics for SPARQL. To this end, let $\symReachCrit$ be a reachability criterion, 
let $\symSeedURIs$ be a finite set of URIs,
and let $\symPattern$ be a SPARQL graph pattern. Then, for any \wold\ $\symWoDTuple$, the \emph{$\symSeedURIs$-based evaluation} of $\symPattern$ over $\symWoD$ \emph{under $\symReachCrit$-se\-man\-tics}, denoted by $\fctEvalReachPcSW{\symPattern}{\symReachCrit}{\symSeedURIs}{\symWoD}$\!, is
the set of solution mappings 
$\fctEvalPGD{\symPattern}{\symRDFgraph}{}$ where $\symRDFgraph$ is the RDF graph
	that consists of
all triples from all \docs\ that are ($\symReachCrit,\symSeedURIs,\symPattern$)-reach\-able in $\symWoD$\!.


While there exist an infinite number of possible reachability criteria, 
in this paper we focus on $\cAll$,
$\cNone$,
and $\cMatch$.
The following two results show that LDQL is strictly more expressive than SPARQL graph patterns under any of these three query semantics.

\begin{theorem}\label{thm:LDQLCoversReachSem}
	Let $\symReachCrit\in \{\cAll, \cNone, \cMatch\}$.
	For every SPARQL graph pattern $\symPattern$ there exists an \query\ $\symQuery$ such that $\fctEvalReachPcSW{\symPattern}{\symReachCrit}{\symSeedURIs}{\symWoD}\! = \fctEvalQWS{\symQuery}{\symWoD}{\symSeedURIs}$ for every
		Web $\symWoD$ and $\symSeedURIs \subseteq \symAllURIs$.
\end{theorem}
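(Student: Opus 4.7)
My plan is to construct, for each $c \in \{\cNone, \cAll, \cMatch\}$ and each SPARQL graph pattern $\symPattern$, an LPE $r_c$ such that the basic \query\ $\tuple{r_c, \symPattern}$ is equivalent to $\symPattern$ under $c$-semantics. The crucial property to establish is
\[
\textstyle\bigcup_{u \in S} \fctEvalPathPdW{r_c}{u}{\symWoD} \,\cap\, \fctDom{\fctsymADoc} \ = \ \bigl\{\, u' \in \fctDom{\fctsymADoc} \,:\, \fctADoc{u'} \text{ is } (c, S, \symPattern)\text{-reachable in } \symWoD \,\bigr\}
\]
for every \wold\ $\symWoD$ and every finite $S \subseteq \symAllURIs$; once this holds, unfolding the definition of $\dataset_\symWoD(\cdot)$ shows that the default graph of $\dataset_\symWoD\bigl(\bigcup_{u\in S}\fctEvalPathPdW{r_c}{u}{\symWoD}\bigr)$ is precisely the RDF graph $\symRDFgraph$ used in the definition of $\fctEvalReachPcSW{\symPattern}{c}{S}{\symWoD}$, yielding the desired equality of solution sets.

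For $\cNone$, I take $r_{\cNone} = \peEmpty$: then $\fctEvalPathPdW{\peEmpty}{u}{\symWoD}=\{u\}$ whenever $u \in \fctDom{\fctsymADoc}$, which captures exactly the seed documents. For $\cAll$, I take $r_{\cAll} = \peKleene{\tuple{\peWildcard,\peWildcard,\peWildcard}}$: the link pattern $\tuple{\peWildcard,\peWildcard,\peWildcard}$ matches every edge of $\mathcal{G}_\symWoD$, because for any edge label $(\tuple{x_1,x_2,x_3},u)$ one has $u \in \{x_1,x_2,x_3\}$, so Condition~1 of the matching definition holds at the position of $u$ and Condition~2 is trivially satisfied by the choice $y_i = \peWildcard$ for all $i$. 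A straightforward induction on the length of a shortest reachability witness then shows that $r_{\cAll}$ returns exactly the URIs whose authoritative documents are $(\cAll, S, \symPattern)$-reachable from the seeds.

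The main obstacle lies in the $\cMatch$ case, where only edges whose underlying triple matches some triple pattern $\symTP \in \symPattern$ may be traversed. This cannot be achieved with a single link pattern because the symbol $\peWildcard$ is overloaded in our definition: it both allows an arbitrary value at a position and marks the navigation direction. Thus, when $\symTP$ contains a URI at a position one wishes to navigate through, no choice of $y_i \in \{\symTP_i,\peWildcard\}$ can simultaneously fix the URI constraint at position $i$ and signal that same position as the traversal direction. I circumvent this using the subquery construction $\peSubquery{?v}{\symQuery'}$ of LPEs. For each $\symTP \in \symPattern$ and each position $i \in \{1,2,3\}$, I build a basic \query\ $\symQuery_{\symTP,i} = \tuple{\peEmpty,\symPattern_{\symTP,i}}$ whose SPARQL pattern $\symPattern_{\symTP,i}$ matches $\symTP$ against the current document (i.e., the default graph produced by $\peEmpty$) and binds a fresh variable $?v_i$ to the value at position $i$, using $\OpBIND$ when $\symTP_i$ is a concrete URI. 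Taking $r_1$ to be the alternation (via $|$) of all the LPEs $\peSubquery{?v_i}{\symQuery_{\symTP,i}}$ and setting $r_{\cMatch} = \peKleene{r_1}$, the semantics of $\peSubquery{?v}{\symQuery'}$ ensures that $r_1$ returns from any context $\symURI$ exactly the URIs occurring at some position of some triple in $\fctData{\fctADoc{\symURI}}$ that matches some $\symTP \in \symPattern$. A final induction on the length of the reachability witness then delivers the claim for $\cMatch$.
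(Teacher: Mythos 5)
Your three constructions coincide with the paper's ($r_{\cNone}=\peEmpty$, \ $r_{\cAll}=\peKleene{\tuple{\peWildcard,\peWildcard,\peWildcard}}$, and for $\cMatch$ a Kleene star over an alternation of subquery-LPEs, one per triple pattern and position; the paper realizes the per-triple-pattern queries with a $\tuple{?s,?p,?o}$ pattern plus \OpFILTER\! constraints where you use \OpBIND, an immaterial variant), and your overall plan---a bridging lemma followed by inductions on the length of reachability witnesses---is also the paper's. However, the property you declare crucial is \emph{false as stated}. Its $\supseteq$ direction asserts that every $u'\in\fctDom{\fctsymADoc}$ whose authoritative document is $(c,\symSeedURIs,\symPattern)$-reachable is itself returned by the LPE. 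Since $\fctsymADoc$ is not required to be injective, this fails: take a Web with $d_1=\fctADoc{u_0}$, $\fctData{d_1}=\{\tuple{u_0,p,u''}\}$, and $d_2=\fctADoc{u''}=\fctADoc{u'}$ where $u'$ occurs in no triple of any document. Then $d_2$ is $(\cAll,\{u_0\},\symPattern)$-reachable, and $u'\in\fctDom{\fctsymADoc}$, yet $u'\notin\fctEvalPathPdW{r_{\cAll}}{u_0}{\symWoD}$, because an LPE can only emit URIs that appear in the label $(t,\symURI)$ of some traversed link-graph edge, and no edge carries label $(t,u')$. The same co-authoritative-URI phenomenon refutes the claimed equality for $\cNone$ (seed $u$ with $\fctADoc{u}=\fctADoc{u'}$, $u\neq u'$) and for $\cMatch$.

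The repair is to weaken your bridge to the document level, which is precisely the lemma the paper proves: the set $\lbrace \fctADoc{\symURI} \mid \symURI \in \fctEvalPathPdW{r_c}{\symCtxURI}{\symWoD} \text{ for some } \symCtxURI \in \symSeedURIs \rbrace$ equals the set of all $(c,\symSeedURIs,\symPattern)$-reachable documents. This weaker statement suffices for your final step, because the default graph of $\dataset_\symWoD(U)$, namely $\{ t\in \fctData{\fctADoc{\symURI}} \mid \symURI\in U\cap\fctDom{\fctsymADoc}\}$, depends only on the \emph{set of documents} retrieved, not on which particular URIs retrieved them. With that reformulation, your two inductions (along LPE navigation in one direction, along reachability witnesses in the other) go through exactly as in the paper's proof---indeed, the inductions you sketch only ever establish membership of documents, so they were never going to yield the URI-level $\supseteq$ you claimed.
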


\begin{proof}[Sketch]
We only sketch the case of $\cAll$-se\-man\-tics.
	In this case, one can prove
that the LPE
$\symLPE^{\cAll}=\tuple{\peWildcard,\peWildcard,\peWildcard}^*$ simulates the reachability criterion $\cAll$, and, thus,
$\fctEvalReachPcSW{\symPattern}{\cAll}{\symSeedURIs}{\symWoD}\! = \fctEvalQWS{\tuple{\symLPE^{\cAll},P}}{\symWoD}{\symSeedURIs}$.
One can also find LPEs to simulate $\cNone$ and $\cMatch$.
\end{proof}

\begin{theorem}\label{thm:LDQLMoreThanReachSem}
	Let $\symReachCrit\in\{\cAll,\cNone,\cMatch\}$.
	There exists an \query\ $\symQuery$ for which~there does not exist a SPARQL pattern $\symPattern$ such that $\fctEvalReachPcSW{\symPattern}{\symReachCrit}{\symSeedURIs}{\symWoD}\! = \fctEvalQWS{\symQuery}{\symWoD}{\symSeedURIs}$ for every
		$\symWoD$ and $\symSeedURIs \subseteq \symAllURIs$.
\end{theorem}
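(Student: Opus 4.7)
The plan is to exhibit a single pair of \wolds\ $\symWoD_1$ and $\symWoD_2$, a single seed set $\symSeedURIs$, and a single \query\ $\symQuery$ that simultaneously witness the claim for all three criteria $\cAll, \cNone$, and $\cMatch$. The guiding idea is that reach-based SPARQL semantics evaluates a graph pattern over a single merged RDF graph built from all reachable \docs, thereby discarding per-\doc\ structure, whereas LDQL passes the URIs selected by the LPE through $\dataset_\symWoD(\cdot)$ to form a dataset with one named graph per retrieved URI. Consequently, the$\OpGRAPH$\! operator inside the SPARQL part of a basic \query\ can expose co-occurrence of triples inside a single \doc\ that merged SPARQL cannot detect.

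Concretely, I will fix URIs $\symURI_1,\symURI_2,a,b,c,p,q$, take $\fctDom{\fctsymADoc}=\{\symURI_1,\symURI_2\}$ in both Webs, and use \docs\ $d_1,d_2$ with $\fctADoc{\symURI_i}=d_i$. In $\symWoD_1$ I set $\fctData{d_1}=\{\tuple{a,p,b}\}$ and $\fctData{d_2}=\{\tuple{b,q,c}\}$; in $\symWoD_2$ I place both triples in $d_1$ and leave $d_2$ empty. With $\symSeedURIs=\{\symURI_1,\symURI_2\}$, my candidate \query\ is
\[
\symQuery \,=\, \bigtuple{\ \peEmpty,\ \OpGRAPH ?v\ ((?x,p,?y) \OpAND (?y,q,?z))\ }.
\]
The first step is to check that LDQL separates the two Webs. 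Since $\peEmpty$ returns the context URI itself, the dataset handed to the SPARQL pattern is $\dataset_\symWoD(\{\symURI_1,\symURI_2\})$, whose named graphs are exactly $\tuple{\symURI_i,\fctData{d_i}}$. In $\symWoD_1$ no single named graph carries both a $p$- and a $q$-triple, so the$\OpGRAPH$\!-guarded join returns $\emptyset$; in $\symWoD_2$ the named graph for $\symURI_1$ contains both triples, yielding a nonempty mapping binding $?v\mapsto\symURI_1$.

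The second step is to show that, for every $c\in\{\cAll,\cNone,\cMatch\}$ and every SPARQL graph pattern $\symPattern$, $\fctEvalReachPcSW{\symPattern}{c}{\symSeedURIs}{\symWoD_1}=\fctEvalReachPcSW{\symPattern}{c}{\symSeedURIs}{\symWoD_2}$. Because $\fctDom{\fctsymADoc}=\{\symURI_1,\symURI_2\}$, none of $a,b,c,p,q$ can be dereferenced, so no link graph edge leaves $d_1$ or $d_2$. Thus, for any $c$ and any $\symPattern$, the set of $(c,\symSeedURIs,\symPattern)$-reachable \docs\ is exactly $\{d_1,d_2\}$, and the associated reach graph is $\{\tuple{a,p,b},\tuple{b,q,c}\}$ in both Webs. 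Hence $\symPattern$ is evaluated over identical RDF graphs in both cases and produces the same solutions. Combined with the first step, this rules out any $\symPattern$ with $\fctEvalReachPcSW{\symPattern}{c}{\symSeedURIs}{\symWoD}=\fctEvalQWS{\symQuery}{\symWoD}{\symSeedURIs}$ for all $\symWoD$ and $\symSeedURIs$.

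The main obstacle I anticipate is the $\cMatch$ case, where reachability depends on $\symPattern$ and one might naturally fear that different patterns trigger divergent link-following in $\symWoD_1$ and $\symWoD_2$. The decisive trick is the restriction of $\fctDom{\fctsymADoc}$ to just the two seed URIs: because no URI appearing in any of the interesting triples is even a candidate for dereferencing, there are simply no outgoing link graph edges from $d_1$ or $d_2$, and $\cMatch$'s sensitivity to $\symPattern$ has nothing to act upon. This collapses the behaviour of all three criteria to that of $\cNone$ on this particular family of Webs and lets one witness all three non-expressibility claims at once.
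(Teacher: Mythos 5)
Your proof is correct, but it takes a genuinely different route from the paper's. The paper uses the query $\big\langle \tuple{\peContextURI,p,\peWildcard},(?x,?x,?x)\big\rangle$, whose separating power comes from \emph{link traversal}: the LPE dereferences a URI and reads a non-authoritative triple in the target \doc, and the paper must then argue each criterion separately --- for $\cNone$ the reachable set is trivially just the seed \doc, while for $\cAll$ and $\cMatch$ the reachable sets would otherwise differ between the two Webs, so the paper carefully reroutes $\fctsymADoc$ (for $\cAll$) and uses fresh URIs not mentioned in $\symPattern$ (for $\cMatch$) to force identical reachable sets. You instead neutralize reachability altogether by making $\fctDom{\fctsymADoc}$ disjoint from every URI occurring in the data, so there are no link graph edges and all three criteria degenerate to $\cNone$ at once; the separation then comes not from traversal but from \emph{dataset structure}: LDQL's $\dataset_\symWoD(\cdot)$ exposes per-\doc\ named graphs to$\OpGRAPH$\!, whereas reachability-based semantics evaluates $\symPattern$ over the single merged RDF graph, which is identical in your two Webs. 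Both arguments are sound instantiations of the theorem, but they isolate different expressivity gaps: the paper's query uses no$\OpGRAPH$\! operator and shows that even the purely navigational part of LDQL (access to non-authoritative triples) is inexpressible, while your query uses only the trivial LPE $\peEmpty$ and shows that even navigation-free LDQL exceeds reachability-based SPARQL because the latter forgets which triples co-occur in which \doc. Your construction buys uniformity --- one example and one two-line case analysis cover all three criteria, avoiding the paper's three bespoke Web pairs --- at the price of depending essentially on$\OpGRAPH$\! being available in the SPARQL fragment and on the reachability semantics merging all reachable \docs\ into one graph; both dependencies are consistent with the paper's definitions, so the proof stands.
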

%
%
%
%
\vspace*{-15pt}

\section{Web-Safeness of LDQL Queries} \label{sec:Safeness}
\vspace*{-8pt}

In this section we study the
	``Web-safe\-ness'' of \queries, where, informally,
we call a query \emph{Web-safe} if a complete execution of the query over the WWW is possible in practice~(which is not the case for all \queries\ as we shall see).
To provide a more formal definition of
	this notion of
Web-safe\-ness we make the following observations.
While the mathematical structures introduced by our data model capture the notion of Linked Data on the WWW formally%
	~(and, thus, allow us to provide a formal semantics for \queries)%
, in practice, these structures are not available completely
	for
the WWW. For instance,
given that an infinite number of strings can be used as HTTP URIs~\cite{Fielding99:HTTP}, we cannot assume complete information about which URIs are in the domain of the partial function $\fctsymADoc$~(i.e., can be looked up to retrieve some \doc) and which are not; in fact, disclosing this information would require a process that systematically tries to look up every possible HTTP URI and, thus, would never terminate. Therefore, it is also impossible to guarantee the discovery of every \doc\ in the set~$\symDocs$~(without looking up an infinite number of URIs). Consequently, any query whose execution requires a complete enumeration of this set is not feasible in practice. Based on these observations, we define \emph{Web-safe\-ness} of \queries\ as~follows.

\begin{mydef}{Web-Safeness}%
	\label{def:WebSafeness}%
	An \query\ $\symQuery$ is \definedTerm{Web-safe} if there exists an algorithm that, for any
		finite \wold\ $\symWoDTuple$
	and any finite set $\symSeedURIs$ of URIs,
	computes $\fctEvalQWS{\symQuery}{\symWoD}{\symSeedURIs}$ by looking up only a finite number of URIs without assuming
		an a priori availability of any information about
	the sets $\symDocs$ and $\fctDom{\fctsymADoc}$.
\end{mydef}

\begin{myexample} \label{ex:WebSafeness}
	Recall
		our example queries $\symQuery_\mathsf{ex}$, $\symQuery_\mathsf{ex}'$, and $\symQuery_\mathsf{ex}''$~(cf.~Example~\ref{ex:LDQLSemantics}).
%
	For query $\symQuery_\mathsf{ex} = \left( \OpSEED ?x\ \bigtuple{\peEmpty, \tuple{?x,p_1,?z}} \right)$, any URI $\symURI \in \symAllURIs$ may be used to obtain
		a nonempty subset of the query result
	as long as a lookup of $\symURI$ retrieves a \doc\ whose data includes {\triple}s that match $\tuple{\symURI,p_1,?z}$. Therefore, without access to $\symDocs$ or $\fctDom{\fctsymADoc}$ of the queried Web $\symWoDTuple$, the completeness of the computed query result can be guaranteed only by checking each of the infinitely many possible HTTP URIs. Hence, query $\symQuery_\mathsf{ex}$ is \emph{not} Web-safe.
	In contrast, although it contains
	$\symQuery_\mathsf{ex}$ as a subquery, query $\symQuery_\mathsf{ex}'' = ( \symQuery_\mathsf{ex} \OpAND \symQuery_\mathsf{ex}' )$ is Web-safe, and so is $\symQuery_\mathsf{ex}' = \tuple{\symLPE_\mathsf{ex},\symBGP_\mathsf{ex}}$. Given~$\symURI_\mathsf{A}$ as seed URI, a possible execution algorithm for $\symQuery_\mathsf{ex}'$ may first compute $\fctEvalPathPdW{\symLPE_\mathsf{ex}}{\symURI_\mathsf{A}}{\symWoD}$ by traversing the queried
		Web%
	~$\symWoD$ based on
	$\symLPE_\mathsf{ex}$. Thereafter, the algorithm retrieves \docs\ by looking up all URIs $\symURI \in \fctEvalPathPdW{\symLPE_\mathsf{ex}}{\symURI_\mathsf{A}}{\symWoD}$~(or simply keeps these \docs\ after the traversal); and, finally, the algorithm evaluates pattern $\symBGP_\mathsf{ex}$ over the union of the RDF data in the retrieved \docs. If
		$\symWoD$ is finite~(i.e., contains a finite number of \docs),
	the traversal process requires a finite number of URI lookups only, and so does the retrieval of \docs\ in the second step; the final step does not look up any URI.
	To see that $\symQuery_\mathsf{ex}''$ is also Web-safe
		we note that after executing subquery $\symQuery_\mathsf{ex}'$~(e.g., by using the algorithm as outlined before), the execution of the other~(non-Web-safe) subquery~$\symQuery_\mathsf{ex}$ can be reduced to
			a finite number of URI lookups,
		namely the URIs bound to variable $?x$ in solution mappings obtained for subquery $\symQuery_\mathsf{ex}'$. Although any other URI may also be used to obtain solution mappings for $\symQuery_\mathsf{ex}$, such solution mappings cannot be joined with any of the solution mappings for $\symQuery_\mathsf{ex}'$ and, thus, are irrelevant for the result of $\symQuery_\mathsf{ex}''$.
\end{myexample}

The example illustrates that there exists an \query\ that is not Web-safe. In fact, it is not difficult to see that the argument for the non-Web-safe\-ness of query $\symQuery_\mathsf{ex}$ as made in the example can be applied to any \query\ of the form $( \OpSEED ?x\ \symQuery )$ where \removable{subquery} $\symQuery$ is a~(satisfiable) basic \query; that is, none of these queries is Web-safe. However, the example also shows that more complex queries that contain such non-Web-safe subqueries may still be Web-safe.
	Therefore, we now show properties to identify \queries\ that are Web-safe even if some of their subqueries are not.
We begin with
	queries of the forms $\tuple{\symLPE,\symPattern}$,
$\pi_V \symQuery$, $( \OpSEED\ U\ \symQuery )$, and~$( \symQuery_1 \OpUNION \ldots \OpUNION \symQuery_n )$.

\begin{proposition} \label{prop:Safeness:FirstTrivialProperties}
	An \query\ $\symQuery$ is Web-safe if any of the following properties holds:
	\begin{enumerate}
		\item \label{prop:Safeness:FirstTrivialProperties:BasicQuery}
			Query $\symQuery$ is of the form $\tuple{\symLPE,\symPattern}$ and $\symLPE$ is Web-safe, where we call an LPE Web-safe if either
			\enumA~it is of the form $\peSubquery{?v}{\symQuery'}$ and \query\ $\symQuery'$ is Web-safe,
			or \enumB~it is of any form other than $\peSubquery{?v}{\symQuery'}$ and all its subexpressions~(if any) are Web-safe LPEs;

		\item \label{prop:Safeness:FirstTrivialProperties:ProjectionAndSEED}
			Query $\symQuery$ is of the form $\pi_V \symQuery'$ or $( \OpSEED\ U\ \symQuery' )$, and subquery $\symQuery'$ is Web-safe;~or

		\item \label{prop:Safeness:FirstTrivialProperties:UNION}
			Query $\symQuery$ is of the form $( \symQuery_1\OpUNION \ldots \OpUNION \symQuery_n )$ and each $\symQuery_i$~($1 \!\leq\! i \!\leq\! n$) is Web-safe.
	\end{enumerate}
\end{proposition}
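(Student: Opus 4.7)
The plan is to prove all three properties simultaneously by a mutual structural induction on LDQL queries and on LPEs, exhibiting a Web-safe execution algorithm in each case.

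First I would treat property~\ref{prop:Safeness:FirstTrivialProperties:BasicQuery}. For a basic query $\tuple{\symLPE,\symPattern}$ evaluated from a finite seed set $\symSeedURIs$, the plan is to show that if $\symLPE$ is Web-safe in the sense of the proposition, then for every context URI $\symCtxURI$ the set $\fctEvalPathPdW{\symLPE}{\symCtxURI}{\symWoD}$ can be computed using only finitely many URI lookups. Once this is established for each seed, the union of the results is finite, we look up each URI in it (finitely many more lookups), assemble the dataset $\dataset_\symWoD(\cdot)$ from Definition~\ref{def:LDQLSemantics}, and finally evaluate the SPARQL pattern $\symPattern$ locally, which requires no further lookups. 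The induction on LPEs has simple base cases: for $\peEmpty$ no lookup is needed, and for a link pattern $\symLP$ a single lookup $\fctADoc{\symCtxURI}$ suffices to enumerate all edges of $\mathcal{G}_\symWoD$ outgoing from $\symCtxDoc$ and select the matching targets. The cases $\peConcat{\symLPE_1}{\symLPE_2}$, $\peAlt{\symLPE_1}{\symLPE_2}$, and $\peTest{\symLPE}$ reduce directly to the inductive hypothesis on the subexpressions, with each recursive call being fed only URIs that we have already encountered. For the nested subquery case $\peSubquery{?v}{\symQuery'}$, by hypothesis $\symQuery'$ is Web-safe, so we invoke its algorithm on the singleton seed $\lbrace \symCtxURI \rbrace$ and read off the $?v$-bindings.

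The main technical obstacle is the Kleene case $\peKleene{\symLPE}$. Here the semantics is an infinite union, so we need to argue termination of an iterative procedure. The key observation is that the algorithm maintains a growing set $R$ of discovered URIs, initialized as $\lbrace \symCtxURI \rbrace$; at each round it computes $\bigcup_{u\in R}\fctEvalPathPdW{\symLPE}{u}{\symWoD}$ by the inductive algorithm and adds the results to $R$, stopping when a round adds nothing new. Termination is guaranteed because $\symWoD$ is finite by assumption, hence the set of URIs appearing in any document of $\symWoD$ (and reachable through any iteration of $\symLPE$ from $\symCtxURI$) is finite, so the monotone sequence of $R$'s stabilizes after finitely many rounds; each round performs only finitely many new lookups. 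Correctness follows from the fact that the fixed point of this iteration equals the infinite union in the semantics of $\peKleene{\symLPE}$, since no further applications of $\symLPE$ from URIs already in $R$ can produce URIs outside the stabilized $R$.

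For property~\ref{prop:Safeness:FirstTrivialProperties:ProjectionAndSEED}, the reductions are immediate from the semantics in Definition~\ref{def:LDQLSemantics}: to evaluate $\pi_V \symQuery'$ we run the Web-safe algorithm for $\symQuery'$ and apply the local projection operation, which looks up no URIs; to evaluate $(\OpSEED\ U\ \symQuery')$ we run the Web-safe algorithm for $\symQuery'$ on the finite seed set $U$ instead of $\symSeedURIs$. For property~\ref{prop:Safeness:FirstTrivialProperties:UNION}, by associativity (Lemma~\ref{lem:Equivalences:AssocAndCommut}) it suffices to consider $(\symQuery_1 \OpUNION \symQuery_2)$: we run the Web-safe algorithms for $\symQuery_1$ and $\symQuery_2$ independently on the same seeds $\symSeedURIs$ and take the union of the two finite output sets of solution mappings, again performing only finitely many lookups in total. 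Throughout, no step presupposes any a priori knowledge of $\symDocs$ or $\fctDom{\fctsymADoc}$, since all lookups are triggered solely by URIs that appear either in the seed set, in $U$, in the finite set $V$, or in triples retrieved by previous lookups. This completes the induction.
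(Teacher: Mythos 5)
Your proposal is correct and takes essentially the same route as the paper's own proof, which realizes the argument concretely via Algorithms~\ref{algo:ExecBasicQuery} and~\ref{algo:ExecLPE}: a structural induction on LPEs, delegation to the Web-safe subquery algorithm in the $\peSubquery{?v}{\symQuery'}$ case, a stabilizing fixed-point loop for $\peKleene{\symLPE}$ whose termination rests on the finiteness of the queried \wold, and the same direct reductions for $\pi_V$, $(\OpSEED\ U\ \cdot)$, and $\OpUNION$\!. The only cosmetic differences are that the paper's Kleene loop re-evaluates ever-longer concatenations $\peConcat{\symLPE}{\symLPE}, \ldots$ from the original context URI instead of your worklist of discovered URIs (equivalent, and yours avoids recomputation), and that in the link-pattern base case deciding which matching candidates actually yield link-graph edges requires attempting a lookup of each candidate URI (as Algorithm~\ref{algo:ExecLPE} does), not just the single lookup of $\symCtxURI$ --- still finitely many, so your conclusion is unaffected.
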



	It remains to discuss
\queries\ of the form
	$( \symQuery_1\OpAND \ldots\OpAND \symQuery_m )$.
Our discussion of query~$\symQuery_\mathsf{ex}''
$ in Example~\ref{ex:WebSafeness} suggests that such queries can be shown to be Web-safe if all non-Web-safe subqueries are of the form $( \OpSEED ?v\ \symQuery )$ and it is possible to execute these subqueries by using variable bindings obtained from other subqueries. A~necessary condition for
	this
execution strategy is that the variable in question~(i.e., $?v$) is guaranteed to be bound in every possible solution mapping obtained from the other subqueries.

To
	allow for an automated verification of this condition
we adopt
	Buil-Aranda et al.'s notion of strongly bound variables%
~\cite{BuilAranda11:SemanticsAndOptimizationOfSPARQLFedExt}%
.
%
To this end, for any SPARQL graph pattern $\symPattern$, let $\fctSBVars{\symPattern}$ denote the set of strongly bound variables in $\symPattern$ as defined by Buil-Aranda et al.~\cite{BuilAranda11:SemanticsAndOptimizationOfSPARQLFedExt}.
	For the sake of space, we do not repeat the definition here. However, we
emphasize that $\fctSBVars{\symPattern}$ can be constructed recursively%
, and each variable
	in $\fctSBVars{\symPattern}$
is guaranteed to be bound in every possible solution for $\symPattern$~\cite[Proposition~1]{BuilAranda11:SemanticsAndOptimizationOfSPARQLFedExt}. To carry over these properties to \queries, we use the notion of strongly bound variables in SPARQL patterns to define the following notion of
	strongly bound variables in
\queries; thereafter, in Lemma~\ref{lem:StrongBoundedness}, we show the desired
	boundedness~%
guarantee.

\begin{mydef}{Strongly Bound Variables}%
	\label{def:StrongBoundedness}%
		The
	set of \definedTerm{strongly bound variables} in an \query\ $\symQuery$, denoted by $\fctSBVars{\symQuery}$, is defined recursively as follows:
	\begin{enumerate}
		\item If $\symQuery$ is of the form $\tuple{\symLPE,\symPattern}$, then $\fctSBVars{\symQuery} \definedAs \fctSBVars{\symPattern}$.
		\item If $\symQuery$ is of the form $( \symQuery_1 \OpAND \symQuery_2 )$, then $\fctSBVars{\symQuery} \definedAs \fctSBVars{\symQuery_1} \cup \fctSBVars{\symQuery_2}$.
		\item If $\symQuery$ is of the form $( \symQuery_1 \OpUNION \symQuery_2 )$, then $\fctSBVars{\symQuery} \definedAs \fctSBVars{\symQuery_1} \cap \fctSBVars{\symQuery_2}$.
		\item If $\symQuery$ is of the form $\pi_V \symQuery'$\!, then $\fctSBVars{\symQuery} \definedAs \fctSBVars{\symQuery'} \cap V$.
		\item If $\symQuery$ is of the form $(\OpSEED\ U\ \symQuery' )$, then $\fctSBVars{\symQuery} \definedAs \fctSBVars{\symQuery'}$.
		\item If $\symQuery$ is of the form $(\OpSEED ?v\ \symQuery' )$, then $\fctSBVars{\symQuery} \definedAs \fctSBVars{\symQuery'} \cup \lbrace ?v \rbrace$.
	\end{enumerate}
\end{mydef}

\begin{lemma} \label{lem:StrongBoundedness}
	Let $\symQuery$ be an \query. For every finite set $\symSeedURIs$ of URIs, every \wold\ $\symWoD$\!, and every solution mapping $\mu \in \fctEvalQWS{\symQuery}{\symWoD}{\symSeedURIs}$, it holds that $\fctSBVars{\symQuery} \subseteq \fctDom{\mu}$.
\end{lemma}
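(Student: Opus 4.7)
The plan is to prove the lemma by structural induction on the LDQL query $\symQuery$, following the six cases of Definition~\ref{def:StrongBoundedness}. The induction hypothesis will be that the statement holds for every proper subquery of $\symQuery$, and we will combine this with the corresponding case of Definition~\ref{def:LDQLSemantics} to derive the conclusion for $\symQuery$ itself.

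For the base case, where $\symQuery$ is a basic query $\tuple{\symLPE,\symPattern}$, the evaluation $\fctEvalQWS{\symQuery}{\symWoD}{\symSeedURIs}$ is defined as $\fctEvalPGD{\symPattern}{\symRDFgraph}{\symRDFdataset}$ for an appropriate $\symRDFgraph$ and $\symRDFdataset$. Since $\fctSBVars{\symQuery} = \fctSBVars{\symPattern}$ by definition, the desired containment reduces directly to the analogous boundedness guarantee for SPARQL graph patterns proved by Buil-Aranda~et~al.~\cite[Proposition~1]{BuilAranda11:SemanticsAndOptimizationOfSPARQLFedExt}, which I would cite and apply as a black box.

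For the inductive cases, I would handle them one by one using the recursive structure. For $( \symQuery_1 \OpAND \symQuery_2 )$, any $\mu$ in the evaluation is obtained as $\mu_1 \cup \mu_2$ with $\mu_i \in \fctEvalQWS{\symQuery_i}{\symWoD}{\symSeedURIs}$ compatible, so $\fctDom{\mu} = \fctDom{\mu_1} \cup \fctDom{\mu_2} \supseteq \fctSBVars{\symQuery_1} \cup \fctSBVars{\symQuery_2} = \fctSBVars{\symQuery}$ by the induction hypothesis. For $( \symQuery_1 \OpUNION \symQuery_2 )$, any $\mu$ comes from some $\fctEvalQWS{\symQuery_i}{\symWoD}{\symSeedURIs}$, so its domain contains $\fctSBVars{\symQuery_i} \supseteq \fctSBVars{\symQuery_1} \cap \fctSBVars{\symQuery_2}$. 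For $\pi_V \symQuery'$, any $\mu$ has domain $\fctDom{\mu'} \cap V$ for some $\mu'$ in the evaluation of $\symQuery'$, and the induction hypothesis gives $\fctDom{\mu} \supseteq \fctSBVars{\symQuery'} \cap V$. For $(\OpSEED\ U\ \symQuery')$, the evaluation equals $\fctEvalQWS{\symQuery'}{\symWoD}{U}$, so the claim is immediate from the induction hypothesis on $\symQuery'$.

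The slightly more delicate case, which I regard as the main obstacle, is $(\OpSEED\ ?v\ \symQuery' )$: by Definition~\ref{def:LDQLSemantics} every $\mu$ in the evaluation arises as the join of some $\mu' \in \fctEvalQWS{\symQuery'}{\symWoD}{\{\symURI\}}$ with the singleton mapping $\mu_\symURI = \{?v \mapsto \symURI\}$ for some URI $\symURI$; since the join succeeds only when $\mu'$ and $\mu_\symURI$ are compatible and since $\fctDom{\mu} = \fctDom{\mu'} \cup \{?v\}$, the induction hypothesis applied to $\symQuery'$ yields $\fctDom{\mu} \supseteq \fctSBVars{\symQuery'} \cup \{?v\} = \fctSBVars{\symQuery}$. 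This closes the induction and completes the proof.
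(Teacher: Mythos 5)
Your proof is correct and follows essentially the same route as the paper: the paper's proof is a one-line appeal to Definition~\ref{def:StrongBoundedness} together with Buil-Aranda et al.'s boundedness guarantee~\cite[Proposition~1]{BuilAranda11:SemanticsAndOptimizationOfSPARQLFedExt}, which is exactly the routine structural induction you have written out in full. Your expanded version is fine in every case (including the correct handling of the seed-set quantification needed for the $\OpSEED$ cases), so it simply makes explicit what the paper leaves implicit.
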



	We
are now ready to show the following result.

\begin{theorem} \label{thm:Safeness:AND}
	An \query\ of the form $( \symQuery_1\OpAND \symQuery_2\OpAND \ldots\OpAND \symQuery_m )$ is Web-safe if there exists a total order $\prec$ over the set of subqueries $\lbrace \symQuery_1, \symQuery_2, \ldots , \symQuery_m \rbrace$ such that for each subquery~$\symQuery_i$~($1 \leq i \leq m$), it holds that either \enumA~$\symQuery_i$ is Web-safe or \enumB~$\symQuery_i$ is of the form~$( \OpSEED ?v\ \symQuery )$ where $\symQuery$ is Web-safe and $?v \in \bigcup_{\symQuery_j \prec \symQuery_i} \fctSBVars{\symQuery_j}$.
\end{theorem}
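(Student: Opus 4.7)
The plan is to give a constructive algorithm that, under the stated hypothesis, computes $\fctEvalQWS{(\symQuery_1\OpAND\cdots\OpAND\symQuery_m)}{\symWoD}{\symSeedURIs}$ using only finitely many URI lookups, without relying on a priori knowledge of $\symDocs$ or $\fctDom{\fctsymADoc}$. By Lemma~\ref{lem:Equivalences:AssocAndCommut}, we may reindex the subqueries so that $\symQuery_1 \prec \symQuery_2 \prec \cdots \prec \symQuery_m$. The algorithm maintains a finite running set $M$ of solution mappings, initialized to $\lbrace \muEmpty \rbrace$, and processes the subqueries in this order. For each~$\symQuery_i$, it performs one of two cases: if $\symQuery_i$ is Web-safe, it invokes the algorithm guaranteed by Definition~\ref{def:WebSafeness} to compute $M_i = \fctEvalQWS{\symQuery_i}{\symWoD}{\symSeedURIs}$ with finitely many lookups and updates $M := M \Join M_i$; if instead $\symQuery_i = (\OpSEED ?v\ \symQuery)$ with $\symQuery$ Web-safe, it computes $U_i = \lbrace \mu(?v) \mid \mu \in M \rbrace$, then $M_i = \bigcup_{u \in U_i} \bigl( \fctEvalQWS{\symQuery}{\symWoD}{\lbrace u \rbrace} \Join \lbrace \mu_u \rbrace \bigr)$ using the Web-safe algorithm for~$\symQuery$ once per seed, and updates $M := M \Join M_i$. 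The final output is $M$ after all $m$ steps.

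Finiteness of the lookup budget is immediate: each Web-safe invocation uses finitely many lookups by hypothesis, $M$ remains finite throughout (being obtained by successive joins of finite sets, starting from $\lbrace\muEmpty\rbrace$), and hence every set $U_i$ is finite as well. Computing $U_i$ requires no lookups at all, as it is purely a projection on the already-computed~$M$.

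The core of the proof is an invariant established by induction on~$i$: after step~$i$, the variable $M$ equals $\fctEvalQWS{\symQuery_1}{\symWoD}{\symSeedURIs} \Join \cdots \Join \fctEvalQWS{\symQuery_i}{\symWoD}{\symSeedURIs}$. Case (a) is routine. For case (b), clearly $M_i \subseteq \fctEvalQWS{\symQuery_i}{\symWoD}{\symSeedURIs}$, so $M \Join M_i \subseteq M \Join \fctEvalQWS{\symQuery_i}{\symWoD}{\symSeedURIs}$. For the reverse inclusion, take any $\tau \in M \Join \fctEvalQWS{\symQuery_i}{\symWoD}{\symSeedURIs}$, so that $\tau = \mu \cup \nu$ with $\mu \in M$ compatible with some $\nu \in \fctEvalQWS{(\OpSEED ?v\ \symQuery)}{\symWoD}{\symSeedURIs}$. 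By the semantics of $\OpSEED ?v$, $\nu(?v) = u$ for the seed $u$ witnessing~$\nu$. The hypothesis $?v \in \bigcup_{\symQuery_j \prec \symQuery_i} \fctSBVars{\symQuery_j}$, together with Lemma~\ref{lem:StrongBoundedness} and the inductive hypothesis for $M$, implies $?v \in \fctDom{\mu}$, hence $\mu(?v) = \nu(?v) = u$ and $u \in U_i$. Therefore $\nu \in M_i$ and $\tau \in M \Join M_i$, completing the invariant.

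The main obstacle is precisely this last inclusion: justifying that restricting the seed set from the infinite $\symAllURIs$ to the finite $U_i$ loses no mapping that survives the ultimate join. The argument hinges on the strongly-bound-variable guarantee, which is why Definition~\ref{def:StrongBoundedness} propagates $\fctSBVars{\cdot}$ through$\OpAND$,$\OpUNION$, projection, and $\OpSEED$ as it does, and why Lemma~\ref{lem:StrongBoundedness} must be established first. Once the invariant is in place for $i = m$, we obtain $M = \fctEvalQWS{(\symQuery_1 \OpAND \cdots \OpAND \symQuery_m)}{\symWoD}{\symSeedURIs}$ and the theorem follows.
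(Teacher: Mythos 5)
Your proposal is correct and takes essentially the same route as the paper's own proof: a sequential execution of the subqueries in the order $\prec$, an inductively maintained joined intermediate result, extraction of the finite seed set $U_i$ from bindings of $?v$ in that result, and an appeal to Lemma~\ref{lem:StrongBoundedness} (via Definition~\ref{def:StrongBoundedness} and the condition $?v \in \bigcup_{\symQuery_j \prec \symQuery_i} \fctSBVars{\symQuery_j}$) to show that restricting the seeds to $U_i$ loses no mapping compatible with the running join. The only nuance is that $U_i$ should be restricted to URI values of $\mu(?v)$, as the paper's algorithm does explicitly; conversely, your joining of $\fctEvalQWS{\symQuery}{\symWoD}{\lbrace u \rbrace}$ with $\lbrace \mu_u \rbrace$ when forming $M_i$ is in fact slightly more faithful to the $(\OpSEED\ ?v\ \symQuery)$ semantics than the paper's pseudocode, which omits that binding.
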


\begin{proof}[Sketch]
	We prove Theorem~\ref{thm:Safeness:AND} based on an iterative algorithm that generalizes the execution
		of query $\symQuery_\mathsf{ex}''$ as outlined in Example~\ref{ex:WebSafeness}.
	That is, the algorithm executes the subqueries
		$\symQuery_1 \ldots \symQuery_m$ sequentially in the order $\prec$ such that each iteration executes one of the subqueries by using the solution mappings computed during the previous~iteration.
%
%
\end{proof}

With the results in this section we have all ingredients to devise a procedure to
	show Web-safe\-ness for a large number of queries~(including queries that are arbitrarily nested).
However, as a
	\removable{potential} limitation \removable{of such a procedure}
we note that Theorem~\ref{thm:Safeness:AND} can be applied only in cases in which all non-Web-safe subqueries are of the form~$( \OpSEED ?v\ \symQuery )$. For instance, the theorem cannot be applied to show that an \query\ of the form
$\bigl(
	\symQuery_1
	\OpAND
	( \symQuery_2 \OpUNION (\OpSEED ?x\ \symQuery_3) )
\bigr)$
is Web-safe if $?x \in \fctSBVars{\symQuery_1}$ and $\symQuery_1$, $\symQuery_2$ and $\symQuery_3$ are Web-safe.
On the other hand, for the semantically equivalent query
$\bigl(
	( \symQuery_1 \OpAND \symQuery_2 )
	\OpUNION
	( \symQuery_1 \OpAND (\OpSEED ?x\ \symQuery_3) )
\bigr)$ we can show Web-safe\-ness based on Theorem~\ref{thm:Safeness:AND}~(and Proposition~\ref{prop:Safeness:FirstTrivialProperties}). Fortunately, we may leverage the following fact to improve the effectiveness of \removable{applying Theorem~\ref{thm:Safeness:AND} in} the procedure that we aim to devise.

\begin{fact} \label{fact:WebSafenessAndEquivalence}
	If an \query\ $\symQuery$ is Web-safe, then so is any \query\ $\symQuery'$ with $\symQuery' \equiv \symQuery$.
\end{fact}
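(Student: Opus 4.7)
The plan is to observe that Web-safeness is a purely \emph{extensional} property of the function $(\symWoD,\symSeedURIs) \mapsto \fctEvalQWS{\symQuery}{\symWoD}{\symSeedURIs}$, and that semantic equivalence preserves precisely this function; hence Web-safeness transfers across $\equiv$. Concretely, suppose $\symQuery$ is Web-safe. By Definition~\ref{def:WebSafeness}, there exists an algorithm $A$ such that, for every finite \wold\ $\symWoDTuple$ and every finite set $\symSeedURIs \subseteq \symAllURIs$, $A$ halts after looking up only finitely many URIs and outputs $\fctEvalQWS{\symQuery}{\symWoD}{\symSeedURIs}$, without needing a priori information about $\symDocs$ or $\fctDom{\fctsymADoc}$.

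Next, I would use this same algorithm $A$ as a witness for the Web-safeness of $\symQuery'$. By the assumption $\symQuery' \equiv \symQuery$, and the definition of semantic equivalence given in Section~\ref{sec:Equivalences}, we have $\fctEvalQWS{\symQuery'}{\symWoD}{\symSeedURIs} = \fctEvalQWS{\symQuery}{\symWoD}{\symSeedURIs}$ for every \wold\ $\symWoD$ and every finite $\symSeedURIs \subseteq \symAllURIs$. Therefore, on any input $(\symWoD,\symSeedURIs)$, algorithm $A$ outputs exactly $\fctEvalQWS{\symQuery'}{\symWoD}{\symSeedURIs}$ while still performing only finitely many URI lookups and still requiring no a priori knowledge of $\symDocs$ or $\fctDom{\fctsymADoc}$. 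This exhibits the algorithm demanded by Definition~\ref{def:WebSafeness} for $\symQuery'$, proving that $\symQuery'$ is Web-safe.

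There is no real obstacle here: the argument is essentially that the existential quantifier in the definition of Web-safeness ranges over algorithms that are judged by their input/output behaviour on $(\symWoD,\symSeedURIs)$ pairs, and that $\equiv$ collapses $\symQuery$ and $\symQuery'$ to the very same input/output behaviour. The only subtlety worth mentioning is that the algorithm $A$ witnessing Web-safeness of $\symQuery$ is permitted to depend on $\symQuery$ itself (it may be thought of as a $\symQuery$-specific procedure), but this causes no difficulty: we are reusing $A$ verbatim as the witness for $\symQuery'$, and the bound on URI lookups together with correctness of the output carry over unchanged because both properties are stated over the same family of inputs $(\symWoD,\symSeedURIs)$ on which $\symQuery$ and $\symQuery'$ agree.
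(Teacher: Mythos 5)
Your proposal is correct and matches the paper's (implicit) reasoning exactly: the paper states this as a Fact without proof precisely because, as you observe, Web-safeness in Definition~\ref{def:WebSafeness} is a property of the evaluation function $(\symWoD,\symSeedURIs) \mapsto \fctEvalQWS{\symQuery}{\symWoD}{\symSeedURIs}$, and $\symQuery' \equiv \symQuery$ means this function is identical for both queries, so the witnessing algorithm transfers verbatim. Your remark that the algorithm may be query-specific but is judged only by its input/output behaviour and lookup bound is the right (and only) subtlety to flag.
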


As a consequence of Fact~\ref{fact:WebSafenessAndEquivalence}, we may use the equivalences in Lemma~\ref{lem:Equivalences:Distributiveness} to rewrite a given
	query
into an equivalent query that is more suitable for testing Web-safe\-ness based on our results. To this end, we introduce specific normal forms for~\queries:

\begin{mydef}{Normal Forms}%
	\label{def:NormalForms}%
	An \query\ is in \definedTerm{\textsc{\textsf{union}}-free normal form} if it is of the form
		$( \symQuery_1 \OpAND \ldots \OpAND \symQuery_m )$
	with $m \geq 1$ and each
	$\symQuery_i$~($1 \leq i \leq m$) is either \enumA~a basic \query\ or \enumB~of the form $\pi_V \symQuery$, $( \OpSEED\ U\ \symQuery )$ or $( \OpSEED\ ?v\ \symQuery )$ such that subquery~$\symQuery$ is
		in \OpUNION\!-free
	normal form.
%
		An
	\query\ is in \definedTerm{\textsc{\textsf{union}} normal form} if it is of the form
		$( \symQuery_1 \OpUNION \ldots \OpUNION \symQuery_n )$
	with $n \!\geq\! 1$ and each
	$\symQuery_i$~($1 \!\leq\! i \!\leq\! n$) is in\OpUNION\!-free normal~form.
\end{mydef}

The following result is an immediate consequence of Lemma~\ref{lem:Equivalences:Distributiveness}.

\begin{corollary} \label{cor:NormalForm}
	Every \query\ is equivalent to an \query\ in\OpUNION\! normal form.
\end{corollary}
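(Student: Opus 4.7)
The plan is to prove, by structural induction on the \query\ $\symQuery$, the slightly stronger-looking statement that both (a) every \query\ is equivalent to one in\OpUNION\!\ normal form, and (b) every \query\ is equivalent to one in \OpUNION\!-free normal form provided it contains no occurrence of\OpUNION\! as a top-level operator. The rewriting is driven by the distributive equivalences~(1)--(4) of Lemma~\ref{lem:Equivalences:Distributiveness}, while the associativity and commutativity granted by Lemma~\ref{lem:Equivalences:AssocAndCommut} allows us to flatten nested\OpAND\! and\OpUNION\! chains.

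First I would handle the base case: a basic \query\ $\tuple{\symLPE,\symPattern}$ is \OpUNION\!-free by Definition~\ref{def:NormalForms}, hence it is trivially in\OpUNION\! normal form as a one-disjunct union. For the inductive step, I consider each top-level constructor in Definition~\ref{def:Syntax}. If $\symQuery = (\symQuery_1 \OpUNION \symQuery_2)$, apply the induction hypothesis to obtain \OpUNION\! normal forms for $\symQuery_1$ and $\symQuery_2$ and concatenate them, flattening with Lemma~\ref{lem:Equivalences:AssocAndCommut}. If $\symQuery$ is of the form $\pi_V \symQuery'$, $(\OpSEED\ U\ \symQuery')$, or $(\OpSEED\ ?v\ \symQuery')$, first put $\symQuery'$ into \OpUNION\! normal form $(\symQuery'_1 \OpUNION \ldots \OpUNION \symQuery'_n)$ by induction, and then push the outer operator inside each disjunct using equivalences~(2), (3), or~(4) of Lemma~\ref{lem:Equivalences:Distributiveness}. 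Since each $\symQuery'_i$ is \OpUNION\!-free, every resulting $\pi_V \symQuery'_i$ (respectively, $(\OpSEED\ U\ \symQuery'_i)$, $(\OpSEED\ ?v\ \symQuery'_i)$) fits clause (b) of Definition~\ref{def:NormalForms} and is therefore \OpUNION\!-free normal.

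The main obstacle, and the step that requires the most care, is the case $\symQuery = (\symQuery_1 \OpAND \symQuery_2)$. Here the induction hypothesis gives \OpUNION\! normal forms $\symQuery_1 \equiv \bigcup_{i=1}^{k_1} \symQuery_{1,i}$ and $\symQuery_2 \equiv \bigcup_{j=1}^{k_2} \symQuery_{2,j}$ with each $\symQuery_{1,i}$ and $\symQuery_{2,j}$ in \OpUNION\!-free normal form. I would apply equivalence~(1) of Lemma~\ref{lem:Equivalences:Distributiveness} iteratively (using commutativity of\OpAND\! from Lemma~\ref{lem:Equivalences:AssocAndCommut} to distribute on either side) to obtain
\[
  \symQuery \;\equiv\; \bigcup_{i=1}^{k_1} \bigcup_{j=1}^{k_2} \bigl(\symQuery_{1,i} \OpAND \symQuery_{2,j}\bigr).
\]
It then suffices to show that the\OpAND\! of two \OpUNION\!-free normal form queries is itself \OpUNION\!-free normal. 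This is where I must be careful: by Definition~\ref{def:NormalForms}, each $\symQuery_{1,i}$ is of the shape $(\symQuery_{1,i,1} \OpAND \ldots \OpAND \symQuery_{1,i,m_i})$ whose conjuncts are basic queries or have a $\pi_V$/\OpSEED\! head over a \OpUNION\!-free normal form body. Using associativity of\OpAND\! (Lemma~\ref{lem:Equivalences:AssocAndCommut}) the combined conjunction $\symQuery_{1,i} \OpAND \symQuery_{2,j}$ can be rewritten as a single flat\OpAND\!-chain of such admissible conjuncts, which is again in \OpUNION\!-free normal form.

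Finally, the whole expression is an\OpUNION\! of \OpUNION\!-free normal forms, hence in\OpUNION\! normal form, completing the induction. The only subtlety worth flagging is that inside basic queries $\tuple{\symLPE,\symPattern}$ the LPE $\symLPE$ may itself contain sub-queries of the form $\peSubquery{?v}{\symQuery''}$; Definition~\ref{def:NormalForms} does not require these to be normalised, so we do not recurse into LPEs and the induction goes through cleanly.
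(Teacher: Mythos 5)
Your proof is correct and takes essentially the same route as the paper: the paper dispatches this corollary in a single line as an immediate consequence of the distributive equivalences of Lemma~\ref{lem:Equivalences:Distributiveness} (implicitly together with the associativity/commutativity of Lemma~\ref{lem:Equivalences:AssocAndCommut}), and your structural induction merely spells out that same rewriting case by case, including the correct observation that LPE-internal subqueries need not be normalized. Note only that your auxiliary statement (b) is never actually invoked in the case analysis---every inductive step uses (a) alone---so it can be dropped, which also avoids the ambiguity in its phrase ``no occurrence of\OpUNION\! as a top-level operator.''
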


In conjunction with Fact~\ref{fact:WebSafenessAndEquivalence}, Corollary~\ref{cor:NormalForm} allows us to focus on \queries\ in\OpUNION\! normal form without losing generality.
We are now ready to specify our procedure that applies the results in this paper to test a given \query~$\symQuery$ for Web-safe\-ness: First, by using the equivalences in Lemma~\ref{lem:Equivalences:Distributiveness}, the query has to be rewritten into a semantically equivalent \query\
	$\symQuery_\mathsf{nf} \!=\! ( \symQuery_1 \OpUNION \ldots \OpUNION \symQuery_n )$
that is in\OpUNION\! normal form.
	Next,
the following
test has to be repeated for every subquery $\symQuery_i$~($1 \leq i \leq n$); recall that each of these subqueries is in\OpUNION\!-free normal form; i.e.,
	$\symQuery_i = ( \symQuery^i_1 \OpAND \ldots \OpAND \symQuery_{m_i}^i )$.
The test is to find an order for their subqueries
	$\symQuery_1^i, \ldots , \symQuery_{m_i}^i$
that satisfies the conditions in Theorem~\ref{thm:Safeness:AND}. Every top-lev\-el subquery $\symQuery_i$~($1 \leq i \leq n$) for which such an order exists, is Web-safe~(cf.~Theorem~\ref{thm:Safeness:AND}). 
	If all top-lev\-el subqueries are identified to be Web-safe by this test, then $\symQuery_\mathsf{nf}$ is Web-safe~(cf.~Proposition~\ref{prop:Safeness:FirstTrivialProperties}), and so is $\symQuery$~(cf.~Fact~\ref{fact:WebSafenessAndEquivalence}).

	
The given conditions are sufficient to show Web-safeness of LDQL.
It remains open whether there
	exists a~(decidable) 	sufficient \emph{and}~necessary condition for Web-safeness.

\vspace*{-10pt}

\section{Concluding Remarks and Future Work}
\label{sec:Conclusion}
\vspace*{-5pt}

LDQL, the query language that we introduce in this paper, allows users to express
queries over Linked Data on the WWW. We defined LDQL such that navigational 
features for selecting the que\-ry-rel\-e\-vant \docs\ on the Web are separate from patterns that are meant to be evaluated over the data in the selected \docs. This separation distinguishes LDQL from other approaches to express queries over Linked Data.

We focused on expressiveness, by comparing LDQL with previous formalisms, and on the notion
of Web-safeness. Several topics remain open for future work. One of them is the complexity of query evaluation.
A classical complexity analysis is easy to perform if we assume that all the data and documents are available
as if they were in a centralized repository, and that they can be processed via a RAM machine model.
We conjecture that under this model, the data complexity of evaluating LDQL will be polynomial.
Nevertheless, a more interesting complexity analysis should consider a model that captures the inherent 
way of accessing the \wold\ via HTTP requests, the overhead of data communication and transfer, 
the distribution of data and documents, etc.
A more practical direction for future research on LDQL is the development of approaches to 
actually implement \queries\ efficiently.\medskip

\noindent
{\bf Acknowledgements}\; P\'erez is supported by the Millennium Nucleus Center for Semantic Web Research, Grant NC120004, and Fondecyt grant 1140790.

\bibliographystyle{splncs03}

\vspace*{-10pt}

\ExtendedVersion{%
\newpage
\appendix
\section{Proofs} \label{app:Proofs}
\subsection{Proof of Lemma~\ref{lem:Equivalences:AssocAndCommut}} \label{proof:lem:Equivalences:AssocAndCommut}

We formalize the claims in Lemma~\ref{lem:Equivalences:AssocAndCommut} as follows: Let $\symQuery_1$, $\symQuery_2$, and $\symQuery_3$ be \queries, the following semantic equivalences hold:
\begin{align}
	( \symQuery_1 \OpAND \symQuery_2 ) &\equiv ( \symQuery_2 \OpAND \symQuery_1 )
	\\
	( \symQuery_1 \OpUNION \symQuery_2 ) &\equiv ( \symQuery_2 \OpUNION \symQuery_1 )
	\\
	( \symQuery_1 \OpAND (\symQuery_2 \OpAND \symQuery_3) ) &\equiv ( ( \symQuery_1 \OpAND \symQuery_2 ) \OpAND \symQuery_3 )
	\\
	( \symQuery_1 \OpUNION (\symQuery_2 \OpUNION \symQuery_3) ) &\equiv ( ( \symQuery_1 \OpUNION \symQuery_2 ) \OpUNION \symQuery_3 )
\end{align}

Since the definition of LDQL operators\OpAND\! and\OpUNION\! is equivalent to their SPARQL counterparts, these semantic equivalences follow from corresponding equivalences for SPARQL graph patterns as shown by P\'{e}rez et al.~\cite[Lemma~2.5]{Perez09:SemanticsAndComplexityOfSPARQL}.

\subsection{Proof of Lemma~\ref{lem:Equivalences:Distributiveness}} \label{proof:lem:Equivalences:Distributiveness}

The equivalences follow directly from the definition of every operator.

\subsection{Proof of Proposition~\ref{prop:syntacticSugar}}

The proof is based on a recursive translation of link path expressions beginning with link patterns.
Let $\tuple{y_1,y_2,y_3}$ be a link pattern. We construct an LPE $\trans_L(\tuple{y_1,y_2,y_3})$ as follows.
Assume that $y_1=\peWildcard$, then we construct the LDQL query
\[
q_1=  \big\langle \varepsilon, (\OpGRAPH ?u\ (?\textit{out},Y_2,Y_3))\big\rangle
\]
where 
(i) if $y_2=\peContextURI$ then $Y_2=?u$, (ii) if $y_2\in \symAllURIs$ then $Y_2=y_2$ and (iii) if $y_2=\peWildcard$ then $Y_2=?y_2$.
And similarly, if $y_3=\peContextURI$ then $Y_3=?u$, (ii) if $y_3\in \symAllURIs$ then $Y_3=y_3$ and (iii) if $y_3=\peWildcard$ then $Y_3=?y_3$.
If $?y_2=\peWildcard$ then que can construct query $q_2=  \tuple{ \varepsilon, (\OpGRAPH ?u\ (Y_1,?\textit{out},Y_3))}$,
and if $?y_3=\peWildcard$ query $q_3=  \tuple{\varepsilon, (\OpGRAPH ?u\ (Y_1,Y_2,?\textit{out}))}$, following a similar
process as for $q_1$.
Now consider the query $q$ which is the \!$\OpUNION$\! of the above queries for every $y_i=\peWildcard$.
Then LPE $\trans_L(\tuple{y_1,y_2,y_3})$ is constructed as
\[
\trans_L(\tuple{y_1,y_2,y_3})=\peSubquery{?\textit{out}}{ q}.
\]
It is not difficult to prove that 
$\fctEvalSetQcW{\trans_L(\tuple{y_1,y_2,y_3})}{u}{\symWoD} = \fctEvalQW{{\tuple{y_1,y_2,y_3}}}{\symWoD}^{u}$.

We now define the translation in general:
\begin{itemize}
\item For the case of LPE $r=r_1/r_2$, we have that $\trans_L(r)=\peSubquery{?v}{\symQuery}$ where~$\symQuery$~is:
\[
	\bigl( \
		\tuple{ \trans_L(r_1), (\OpGRAPH ?x\ \{\ \}) }
		\, \OpAND \,
		\bigl( \OpSEED ?x\ \tuple{\trans_L(r_2), (\OpGRAPH ?v\ \{\ \})} \bigr)
	\ \bigr)
	.
\]
\item For the case of LPE $r=r_1|r_2$, we have that
$\trans_L(r)=\peSubquery{?v}{\symQuery}$ where~$\symQuery$~is:
\[
	\bigl( \
		\tuple{ \trans_L(r_1), (\OpGRAPH ?v\ \{\ \}) }
		\, \OpUNION \,
		\tuple{ \trans_L(r_2), (\OpGRAPH ?v\ \{\ \}) }
	\ \bigr)
	.
\]
\item For the case of LPE $r=[r_1]$, we have that
$\trans_L(r)=\peSubquery{?v}{\symQuery}$ where~$\symQuery$~is:
\[
	\bigl( \
		\tuple{ \varepsilon, (\OpGRAPH ?v\ \{\ \}) }
		\, \OpAND \,
		\pi_{\{?v\}}\bigl( \OpSEED ?v\ \tuple{\trans_L(r_1), (\OpGRAPH ?x\ \{\ \})} \bigr)
	\ \bigr)
	.
\]
\end{itemize}
The general proof proceed by induction.
We next prove that $\fctEvalSetQcW{\trans_L(r_1|r_2)}{u}{\symWoD} = \fctEvalQW{r_1|r_2}{\symWoD}^{u}$.
The proof for the other cases are similar.
Thus assume that $u'\in \fctEvalQW{r_1|r_2}{\symWoD}^{u}$, then we know that $u'\in \fctEvalQW{r_1}{\symWoD}^{u}\cup
\fctEvalQW{r_2}{\symWoD}^{u}$.
If $u'\in \fctEvalQW{r_1}{\symWoD}^{u}$ then by induction hypothesis we know that 
$u'\in\fctEvalQW{\trans_L(r_1)}{\symWoD}^{u}$.
Now notice that 
\[
\fctEvalQW{\tuple{ \trans_L(r_1), (\OpGRAPH ?v\ \{\ \}) }}{\symWoD}^{\{u\}} =
\fctEvalQW{(\OpGRAPH ?v\ \{\ \})}{}^{\mathcal D}
\]
Where $\mathcal{D}=\dataset_\symWoD(\fctEvalQW{\trans_L(r_1)}{\symWoD}^{u})$.
Thus given that $u'\in\fctEvalQW{\trans_L(r_1)}{\symWoD}^{u}$ we know that $\mathcal{D}$
has a dataset $\tuple{u',\fctData{\fctADoc{u'}}}$, which implies that
$\{?v\to u'\}$ is a solution for $\fctEvalQW{(\OpGRAPH ?v\ \{\ \})}{}^{\mathcal D}$,
and thus $\{?v\to u'\}\in \fctEvalQW{\tuple{ \trans_L(r_1), (\OpGRAPH ?v\ \{\ \}) }}{\symWoD}^{\{u\}}$.
From this it is straightforward to conclude that 
$u'\in \fctEvalSetQcW{\trans_L(r_1|r_2)}{u}{\symWoD}$.
The other direction is similar.

\subsection{Proof of Theorem~\ref{theo:LDQLMoreThanPP}}

Consider the LDQL $Q$ query given by
\[
\big(\OpSEED\ u\ \big\langle \tuple{\peContextURI,p,\peWildcard}, (?x,?x,?x) \big \rangle\big)
\]
with $u,p\in \symAllURIs$. 
Now assume that there exists a property path pattern $P$ and a set of URIs $S$ such that 
\[
\fctEvalSetQcW{P}{\mathsf{ctxt}}{\symWoD} = \fctEvalQW{Q}{\symWoD}^{S}
\]
for every Web of Linked Data $\symWoD$.
Let $u'\in \symAllURIs$.
Consider now $\symWoD_1$ having only two documents $d_1=\{(u,p,u')\}$ and $d_2=\{(a,a,a)\}$ and such that
$\fctADoc{u}=d_1$ and $\fctADoc{u'}=d_2$.
Moreover, consider $\symWoD_2$ having also two documents $d_1=\{(u,p,u')\}$ and $d_3=\{(b,b,b)\}$
such that $\fctADoc{u}=d_1$ and $\fctADoc{u'}=d_3$.
First notice that for every $S$ we have that
\[
\fctEvalQW{Q}{\symWoD_1}^{S}=\{\{?x\to a\}\}\;\; \neq \;\;
\fctEvalQW{Q}{\symWoD_2}^{S}=\{\{?x\to b\}\}
\]
Notice that $C^{\symWoD_1}(u)=C^{\symWoD_2}(u)=\{(u,p,u')\}$ and $C^{\symWoD_1}(u')=C^{\symWoD_2}(u')=\emptyset$.
In general, we have that for every term $v\neq u$ it holds that $C^{\symWoD_1}(v)=C^{\symWoD_2}(v)=\emptyset$.
This essentially shows that the context selectors $C^{\symWoD_1}$ and $C^{\symWoD_2}$ are equivalent.
Given that the semantics of property paths is based on context selectors it is easy to prove 
that for every PP-based SPARQL query $R$ we have that 
$\fctEvalSetQcW{R}{\mathsf{ctxt}}{\symWoD_1}=\fctEvalSetQcW{R}{\mathsf{ctxt}}{\symWoD_2}$.
This can be done by induction in the construction of PP-based SPARQL queries.
For example, the evaluation of a base PP-pattern of the form 
$(v,p,\beta)$, with $v\in \symAllURIs$ and $\beta\in \symAllURIs\cup \mathcal{V}$ over $\symWoD_1$ is given by
\[
\fctEvalSetQcW{(v,p,\beta)}{\mathsf{ctxt}}{\symWoD_1}=\{\mu\mid \fctDom{\mu}=\{\beta\}\cap \mathcal{V}
\text{ and }\mu[\tuple{v,p,\beta}]\in C^{\symWoD_1}(v)\}
\]
which is equal to $\fctEvalSetQcW{(v,p,\beta)}{\mathsf{ctxt}}{\symWoD_2}$ since $C^{\symWoD_1}(v)=C^{\symWoD_2}(v)$.
All the other cases for the construction of property paths are equivalent.
Moreover, since for the case of property path patterns the evaluation is the same over $\symWoD_1$ and over $\symWoD_2$,
we have that for a general PP-based query using operator $\OpAND$, $\OpUNION$, $\OpOPT$ and so on, the evaluation is
also the same. Thus we have that 
\[
\fctEvalSetQcW{P}{\mathsf{ctxt}}{\symWoD_1} = \fctEvalSetQcW{P}{\mathsf{ctxt}}{\symWoD_2}
\]
but also that
\[
\fctEvalQW{Q}{\symWoD_1}^{S} \neq 
\fctEvalQW{Q}{\symWoD_2}^{S}
\]
which contradicts the fact that $\fctEvalSetQcW{P}{\mathsf{ctxt}}{\symWoD}=\fctEvalQW{Q}{\symWoD}^{S}$ for every 
Web of Linked Data $W$.

\subsection{Proof of Theorem~\ref{theo:LDQLCoversPP}}



We associate to every property-path expression $r$, 
an LDQL query $Q_r(?x,?y)$ with
$?x$ and $?y$ as free variables. 
The definition of $Q_r(?x,?y)$ is by induction in the construction of property-path expressions.
In the construction, all the variables mentioned, besides $?x$ and $?y$, are considered
as fresh variables.
\begin{itemize}
\item If $r\in \symAllURIs$ then $Q_r(?x,?y)=(\OpSEED ?x\ \langle \varepsilon, (?x,r,?y)\rangle)$.
\item If $r=\ !(u_1\mid \cdots \mid u_k)$ with $u_i\in \symAllURIs$ then $Q_r(?x,?y)$ is defined as
\[
\bigg( \OpSEED ?x\ \big\langle \varepsilon, 
\big( (?x,?p,?y) \OpFILTER (?p\neq u_1 \wedge \cdots \wedge ?p\neq u_k)\big) \big\rangle\bigg).
\]
\item If $r=r_1/r_2$ then $Q_r(?x,?y)$ is defined as
\[
\pi_{\{?x,?y\}}\big( Q_{r_1}(?x,?z) \OpAND Q_{r_2}(?z,?y)\big).
\]
\item If $r=r_1|r_2$ then $Q_r(?x,?y)$ is defined as
\[
\big( Q_{r_1}(?x,?y) \OpUNION Q_{r_2}(?x,?y) \big).
\]
\item If $r=r_1^*$ then $Q_r(?x,?y)$ is defined as follows. 
%
%
%
First consider the LDQL query
\[
Q_\varepsilon(?x,?y) = \pi_{\{?x,?y\}} ( \OpSEED ?f\ \tuple{\varepsilon, P} )
\]
where $P$ is the following pattern
\begin{align*}
P =  & \big( (?x,?p,?o) \OpAND (?y,?p,?o) \OpFILTER (?x = ?y) \big)   \OpUNION \\
   & \big( (?s,?x,?o) \OpAND (?s,?y,?o) \OpFILTER (?x = ?y) \big)   \OpUNION \\
   & \big( (?s,?p,?x) \OpAND (?s,?p,?y) \OpFILTER (?x = ?y) \big)
\end{align*}

Now consider the LDQL query $Q_s(?v)$
defined as
\[
Q_s(?v) = \big( \langle \varepsilon, (\OpGRAPH ?u\ \{\ \}) \rangle \OpAND Q_{r_1}(?u,?v) \big).
\]
Then, query $Q_r(?x,?y)$ is defined by
\[
Q_\varepsilon(?x,?y) \OpUNION \\
\big( ( \OpSEED ?x\ \langle (?v, Q_s(?v))^*, (\OpGRAPH ?z\ \{\ \}) \rangle ) \OpAND Q_{r_1}(?z,?y) \big)
\]
\end{itemize}
We prove now that for every property path pattern $(?x,r,?y)$ we have that
\[
\fctEvalSetQcW{(?x,r,?y)}{\mathsf{ctxt}}{\symWoD} = \fctEvalQW{Q_r(?x,?y)}{\symWoD}^{\emptyset}.
\]
The proof is by induction in the construction of $Q_r(?x,?y)$. We proceed by cases.

\begin{itemize}
\item Assume that $r\in \symAllURIs$. Then $\mu\in \fctEvalQW{Q_r(?x,?y)}{\symWoD}^{\emptyset}$
if and only if $$\mu\in \fctEvalQW{(\OpSEED ?x\ \langle \varepsilon, (?x,r,?y)\rangle)}{\symWoD}^{\emptyset}.$$
Notice that this occurs if and only if 
there exists a mapping $\mu'$ and a URI $u$ such that $\mu' \in \fctEvalQW{\langle \varepsilon, (?x,r,?y)\rangle}{\symWoD}^{\{u\}}$, $\mu'$ is compatible with the mapping $\{?x\to u\}$, and $\mu=\mu'\cup\{?x\to u\}$.
Now, given that $\fctEvalPathPdW{\varepsilon}{\{u\}}{W}=\{u\}$, we have that 
$\mu' \in \fctEvalQW{\langle \varepsilon, (?x,r,?y)\rangle}{\symWoD}^{\{u\}}$ if and only if 
$\mu'\in \fctEvalPGD{(?x,r,?y)}{}{\mathcal{D}}$ with $\mathcal{D}$ the data set
with $\fctData{\fctADoc{u}}$ as default graph.
With all this we have that $\mu\in \fctEvalQW{Q_r(?x,?y)}{\symWoD}^{\emptyset}$
if and only if $\fctDom{\mu}=\{?x,?y\}$, $\mu(?x)\in \fctDom{\fctsymADoc}$, 
and $(\mu(?x),r,\mu(?y))\in \fctData{\fctADoc{\mu(?x)}}$,
which is exactly the property
\[
\mu((?x,r,?y))\in C^W(\mu(?x)).
\]
This last property holds if and only if $\mu\in \fctEvalSetQcW{(?x,r,?y)}{\mathsf{ctxt}}{\symWoD}$.

\item For the case in which 
$r=\ !(u_1\mid \cdots \mid u_k)$ with $u_i\in \symAllURIs$, the proof is similar.
We have that 
\[
\mu\in \fctEvalQW{\bigg( \OpSEED ?x\ \big\langle \varepsilon, 
\big( (?x,?p,?y) \OpFILTER (?p\neq u_1 \wedge \cdots \wedge ?p\neq u_k)\big) \big\rangle\bigg)}{\symWoD}^{\emptyset}
\]
if and only if $\mu$ is in the evaluation of
\[
\big( (?x,?p,?y) \OpFILTER (?p\neq u_1 \wedge \cdots \wedge ?p\neq u_k)\big) 
\]
over the graph $\fctData{\mu(?x)}$.
This happens if and only if
\[
 \mu((?x,p,?y))\in C^W(\mu(?x)) \text{for }p\notin \{u_1,\ldots,u_k\},
\]
which is exactly the property
\[
\mu\in \fctEvalSetQcW{(?x,!(u_1\mid \cdots \mid u_k),?y)}{\mathsf{ctxt}}{\symWoD}.
\]
\item For the cases $r=r_1/r_2$, $r=r_1| r_2$, the semantics of
the corresponding LDQL query exactly matches the semantics of the property path expression. 
Just notice that the semantics of $\OpAND$ is that of the join, and the semantics of
$\OpUNION$ is that of the set union.

\item For the case of $r=r_1^*$ 
we have that $\mu\in \fctEvalQW{(?x,r_1^*,?y)}{W}^{\mathsf{ctxt}}$
if and only if $\fctDom{\mu}=\{?x, ?y\}$ and
(i) $\mu(?x)=\mu(?y)$ and $\mu(?x),\mu(?y)\in \textit{terms}(W)$, or
(ii) $\mu\in \fctEvalQW{(?x,r_1^k,?y)}{W}^{\mathsf{ctxt}}$ for some $k>0$.
For the case $(i)$ it is easy to see that $\mu\in \fctEvalQW{Q_\varepsilon(?x,?y)}{W}^{\emptyset}$.
Just notice that if $\mu(?x)$ is in $\textit{terms}(W)$ then there exists a URI $u\in \fctDom{\fctsymADoc}$
and a triple $t$ in $\fctData{\fctADoc{u}}$ such that $\mu(?x)$ appears in $t$.
If $\mu(?x)$ appears in the subject position, then we know that $\mu$ is compatible
with a mapping in $\fctEvalQW{\big( (?x,?p,?o) \OpAND (?y,?p,?o) \OpFILTER (?x = ?y) \big)}{}^{\mathcal{D}}$
where $\mathcal D$ is a dataset with a default graph in which $t$ appears.
Finally, given that $Q_\varepsilon(?x,?y)=\pi_{\{?x,?y\}}(\OpSEED ?f\ \tuple{\varepsilon, P})$ 
and we know that $u$ is a possible value for variable $?f$, we obtain that 
$\mu\in \fctEvalQW{Q_\varepsilon(?x,?y)}{W}^{\emptyset}$.
If $\mu(?x)$ appears in the predicate or object position, the proof is similar.
For the case (ii) we will show that 

{\small
\begin{multline}\label{eq:goal}
\mu\in \fctEvalQW{Q_\varepsilon(?x,?y)}{W}^{\emptyset} \cup \\ 
\bigcup_{i=0}^{k-1}
\fctEvalQW{\pi_{\{?x,?y\}}\big( ( \OpSEED ?x\ \langle (?v, Q_s(?v))^{i}, (\OpGRAPH ?z\ \{\ \}) \rangle ) \OpAND Q_{r_1}(?z,?y) \big)}{W}^{\emptyset}
\end{multline}}

We will use an inductive argument. Assume $k=1$, then 
$\mu\in \fctEvalQW{(?x,r_1,?y)}{W}^{\mathsf{ctxt}}$.
By the induction hypothesis on the construction of property paths, we have that
$\mu\in \fctEvalQW{Q_{r_1}(?x,?y)}{W}^{\emptyset}$.
Now, if $\mu(?x)\notin \fctDom{\fctsymADoc}$ then we have that $\mu(?x)=\mu(?y)$
and thus $\mu\in \fctEvalQW{Q_\varepsilon(?x,?y)}{W}^{\emptyset}$.
If $\mu(?x)\in \fctDom{\fctsymADoc}$,
then it is easy to see that
\[
\mu\in \fctEvalQW{\pi_{\{?x,?y\}}\big( ( \OpSEED ?x\ \langle \varepsilon, (\OpGRAPH ?z\ \{\ \}) \rangle ) \OpAND Q_{r_1}(?z,?y) \big)}{W}^{\emptyset}
\]
This is because $\{?x\to \mu(?x),?z\to \mu(?x)\}\in \fctEvalQW{( \OpSEED ?x\ \langle \varepsilon, (\OpGRAPH ?z\ \{\ \}) \rangle )}{W}^{\emptyset}$
which is compatible with $\mu$.
Now assume that $\mu\in \fctEvalQW{(?x,r_1^{k+1},?y)}{W}^{\mathsf{ctxt}}$,
thus
\[
\mu\in \pi_{\{?x,?y\}}(\fctEvalQW{(?x,r_1^{k},?z)}{W}^{\mathsf{ctxt}}\Join \fctEvalQW{(?z,r_1,?y)}{W}^{\mathsf{ctxt}}).
\]
By induction hypothesis we have that

{\small
\begin{multline*}
\mu\in \pi_{\{?x,?y\}}\bigg( \bigg[ \fctEvalQW{Q_\varepsilon(?x,?z)}{W}^{\emptyset} \cup \\ 
\bigcup_{i=0}^{k-1}
\fctEvalQW{\pi_{\{?x,?z\}}\big( ( \OpSEED ?x\ \langle (?v, Q_s(?v))^{i}, (\OpGRAPH ?u\ \{\ \}) \rangle ) \OpAND Q_{r_1}(?u,?z) \big)}{W}^{\emptyset}\bigg]\\  \Join \fctEvalQW{Q_{r_1}(?z,?y)}{W}^{\emptyset}\bigg).
\end{multline*}}

Then we know that there exists $j$ such that $0\leq j\leq k-1$ such that

{\small
\begin{multline*}
\mu\in \pi_{\{?x,?y\}}\bigg( \bigg[ \fctEvalQW{Q_\varepsilon(?x,?z)}{W}^{\emptyset} \cup \\ 
\fctEvalQW{\pi_{\{?x,?z\}}\big( ( \OpSEED ?x\ \langle (?v, Q_s(?v))^{j}, (\OpGRAPH ?u\ \{\ \}) \rangle ) \OpAND Q_{r_1}(?u,?z) \big)}{W}^{\emptyset}\bigg]\\  \Join \fctEvalQW{Q_{r_1}(?z,?y)}{W}^{\emptyset}\bigg).
\end{multline*}}

If $\mu\in \pi_{\{?x,?y\}}\big( \fctEvalQW{Q_\varepsilon(?x,?z)}{W}^{\emptyset} \Join \fctEvalQW{Q_{r_1}(?z,?y)}{W}^{\emptyset}\big)= \fctEvalQW{Q_{r_1}(?x,?y)}{W}^{\emptyset}$ we can apply the same argument
as in the base case. Now assume that 

{\small
\begin{multline*}
\mu\in \pi_{\{?x,?y\}}\bigg(  \\ 
\fctEvalQW{\pi_{\{?x,?z\}}\big( ( \OpSEED ?x\ \langle (?v, Q_s(?v))^{j}, (\OpGRAPH ?u\ \{\ \}) \rangle ) \OpAND Q_{r_1}(?u,?z) \big)}{W}^{\emptyset}\\  \Join \fctEvalQW{Q_{r_1}(?z,?y)}{W}^{\emptyset}\bigg).
\end{multline*}}

Then we know that there exists a mapping $\mu'$ and $\mu''$ such that 
\[
\mu'\in \fctEvalQW{\pi_{\{?x,?z\}}\big( ( \OpSEED ?x\ \langle (?v, Q_s(?v))^{j}, (\OpGRAPH ?u\ \{\ \}) \rangle ) \OpAND Q_{r_1}(?u,?z)\big)}{W}^{\emptyset}
\]
and 
\[
\mu''\in \fctEvalQW{Q_{r_1}(?z,?y)}{W}^{\emptyset}
\]
$\mu$ equals $\mu'\cup\mu''$ restricted to variables $?x,?y$.
Notice that $\mu''$ is compatible with $\mu'$
thus, we have that $\mu'(?z)=\mu''(?z)$.
Now, if $\mu''(?z)\notin \fctDom{\fctsymADoc}$,
since $\mu''\in \fctEvalQW{Q_{r_1}(?z,?y)}{W}^{\emptyset}$,
then necessarily $\mu''(?z)=\mu''(?y)$,
and given that $\mu'$ is compatible with $\mu''$ we obtain that
$\mu'(?z)=\mu''(?y)$. All this implies that 

{\small
\begin{multline*}
\mu\in \pi_{\{?x,?y\}}\bigg(  \\ 
\fctEvalQW{\pi_{\{?x,?z\}}\big( ( \OpSEED ?x\ \langle (?v, Q_s(?v))^{j}, (\OpGRAPH ?u\ \{\ \}) \rangle ) \OpAND Q_{r_1}(?u,?y) \big)}{W}^{\emptyset}\bigg).
\end{multline*}}

and thus \eqref{eq:goal} holds.
Assume now that $\mu''(?z)\in \fctDom{\fctsymADoc}$.
We will prove that
\[
\mu'\in \fctEvalQW{ ( \OpSEED ?x\ \langle (?v, Q_s(?v))^{j+1}, (\OpGRAPH ?z\ \{\ \}) \rangle )}{W}^{\emptyset}.
\]
We know that 
\[
\mu'\in \fctEvalQW{\pi_{\{?x,?z\}}\big( ( \OpSEED ?x\ \langle (?v, Q_s(?v))^{j}, (\OpGRAPH ?u\ \{\ \}) \rangle ) \OpAND Q_{r_1}(?u,?z)\big)}{W}^{\emptyset}
\]
Thus $\mu'$ equals $\mu_1\cup\mu_2$ (restricted to variables $?x,?z$) where
\[
\mu_1\in \fctEvalQW{ ( \OpSEED ?x\ \langle (?v, Q_s(?v))^{j}, (\OpGRAPH ?u\ \{\ \}) \rangle ) }{W}^{\emptyset}
\]
and
\[
\mu_2\in \fctEvalQW{ Q_{r_1}(?u,?z) }{W}^{\emptyset}
\]
Thus, regarding $\mu_1$ we know that there exists a sequence of URIs, $u_1,u_2,\ldots u_j$ such
that $\mu_1(?x)=u_1$, $\mu_1(?u)=u_j$ and
$u_{i+1}\in \fctEvalQW{ (?v, Q_s(?v)) }{W}^{u_i}$.
Now, recall that the definition of $Q_s(?v)$ is 
\[
Q_s(?v) = \big( \langle \varepsilon, (\OpGRAPH ?f\ \{\ \}) \rangle \OpAND Q_{r_1}(?f,?v) \big).
\]
Then essentially what we have is that 
\[
\{?f\to u_i,?v\to u_{i+1}\}\in \fctEvalQW{ Q_{r_1}(?f,?v) }{W}^{\emptyset}.
\]
Moreover, since $\mu_1$ and $\mu_2$ are compatible, we know that $\mu_1(?u)=\mu_2(?u)=u_j$
and since $\mu_2\in \fctEvalQW{ Q_{r_1}(?u,?z) }{W}^{\emptyset}$
we know that 
\[
\{?f\to u_j, ?v\to \mu_2(?z)\}\in \fctEvalQW{ Q_{r_1}(?f,?v) }{W}^{\emptyset}.
\]
Finally, given that we are assuming that $\mu''(?z)=\mu_2(?z)$ is in $\fctDom{\fctsymADoc}$
we have that
\[
\{?x\to \mu_1(?x),?z\to \mu_2(?z)\}\in  \fctEvalQW{ ( \OpSEED ?x\ \langle (?v, Q_s(?v))^{j+1}, (\OpGRAPH ?z\ \{\ \}) \rangle )}{W}^{\emptyset}
\]
which is what we wanted to prove.
Thus we have that 
\[
\mu'\in \fctEvalQW{ ( \OpSEED ?x\ \langle (?v, Q_s(?v))^{j+1}, (\OpGRAPH ?z\ \{\ \}) \rangle )}{W}^{\emptyset}.
\]
and also that 
\[
\mu''\in \fctEvalQW{Q_{r_1}(?z,?y)}{W}^{\emptyset}
\]
and given that $\mu$ equals $\mu'\cup\mu''$ restricted to variables $?x,?y$, we have that

{\small
\begin{multline*}
\mu\in 
\fctEvalQW{\pi_{\{?x,?y\}}\big( ( \OpSEED ?x\ \langle (?v, Q_s(?v))^{j+1}, (\OpGRAPH ?z\ \{\ \}) \rangle ) \OpAND Q_{r_1}(?z,?y) \big)}{W}^{\emptyset}
\end{multline*}}

and since $j+1\leq k$ we obtain

{\small
\begin{multline*}
\mu\in \fctEvalQW{Q_\varepsilon(?x,?y)}{W}^{\emptyset} \cup \\ 
\bigcup_{i=0}^{k}
\fctEvalQW{\pi_{\{?x,?y\}}\big( ( \OpSEED ?x\ \langle (?v, Q_s(?v))^{i}, (\OpGRAPH ?z\ \{\ \}) \rangle ) \OpAND Q_{r_1}(?z,?y) \big)}{W}^{\emptyset}
\end{multline*}}


If one assumes that $\mu\in \fctEvalQW{Q_{r}(?x,?y)}{W}^{\mathsf{ctxt}}$ then by an argument on exactly
the same lines of the argument above, one can show that $\mu\in\fctEvalQW{(?x,r_1^*,?y)}{W}^{\mathsf{ctxt}}$.
\end{itemize}

We have shown how to construct an equivalent LDQL query for every property path pattern with two variables.
If the triple does not have two variables, we need a slightly different construction, in particular
for the case in which $(\cdot)^*$ is used. We now show the details of the construction but leave the
complete proof as an exercise (it can be completed using the arguments of the previous part of this proof).

Consider a propery path pattern $(\alpha,r,\beta)$ where $\alpha$ is a URI or variable,
and $\beta$ is a URI, variable or literal.
Then for the cases $r=p\in \symAllURIs$, $r=!(u_1|\cdots|u_k)$, $r=r_1/r_2$, $r=r_1|r_2$, we construct
a query as $Q_r(\alpha,\beta)$ where $Q_r(\alpha,\beta)$ is query $Q_r(?x,?y)$ where all occurrences of $?x$ has 
been replaced by $\alpha$ and all occurrences of $?y$ has been replaced by $\beta$.
For the case of $r=r_1^*$ we need to do a slightly different construction.
For a pattern $(u,r,?y)$ we construct a query $P_r(?y)$ as 
\[
\tuple{\varepsilon, \OpBIND(u \OpAS ?y)} \OpUNION
\big( ( \OpSEED \{u\}\ \langle (?v, Q_s(?v))^*, (\OpGRAPH ?z\ \{\ \}) \rangle ) \OpAND Q_{r_1}(?z,?y) \big)
\]
For a pattern $(?x,r,v)$ we construct a query $S_r^v(?x)$ as 
\[
\tuple{\varepsilon, \OpBIND(u \OpAS ?x)} \OpUNION
\big( ( \OpSEED ?x\ \langle (?v, Q_s(?v))^*, (\OpGRAPH ?z\ \{\ \}) \rangle ) \OpAND T(?z) \big)
\]
where $T(?z)$ is either $Q_{r_1}(?z,v)$ or $S_{r_1}^v(?z)$ depending on the form of $r_1$.
Finally, for a pattern $(u,r,v)$ we construct a query $U_r$ as 
\begin{multline*}
\tuple{\varepsilon, (\OpBIND(u \OpAS ?x) \OpAND \OpBIND(v \OpAS ?y)) \OpFILTER (?x = ?y)} \OpUNION \\
\big( ( \OpSEED \{u\}\ \langle (?v, Q_s(?v))^*, (\OpGRAPH ?z\ \{\ \}) \rangle ) \OpAND T(?z) \big)
\end{multline*}
where $T(?z)$ is either $Q_{r_1}(?z,v)$ or $S_{r_1}^v(?z)$ depending on the form of $r_1$.

Finally consider a property path pattern $(\ell,r,\beta)$, where $\ell$ is a literal.
Then for the cases $r=p\in \symAllURIs$, $r=!(u_1|\cdots|u_k)$ we should translate it
into an unsatisfiable query. One way of obtaining that query is, for example, with
an expression 
\[
\tuple{\varepsilon, (\OpBIND(\ell \OpAS ?x) \OpAND \OpBIND(\ell \OpAS ?y)) \OpFILTER (?x \neq ?y)}
\]
For the cases $r=r_1/r_2$ and $r=r_1|r_2$ we follow the same construction as if $\ell$ were a URI
but with the last base case.
For the case of $r=r_1^*$, if $\beta$ is a variable $y$ we consider the following query 
\[
\tuple{\varepsilon, \OpBIND(\ell \OpAS ?y)}.
\]
and if $\beta$ is a URI or literal the query
\[
\tuple{\varepsilon, (\OpBIND(\ell \OpAS ?x) \OpAND \OpBIND(\beta \OpAS ?y)) \OpFILTER (?x = ?y)}.
\]
The correctness of this translation can be proved along the same lines as for the case 
of property path pattern $(?x,r,?y)$.

%
%
%
%
%

\subsection{Proof of Theorem~\ref{theo:LDQLCoversNautiLOD}}

We proceed by induction showing how to translate every posible NautiLOD query.
The translation works in two parts.
We first define the following function $\trans_N(\cdot)$ that given a NautiLOD query, produces an LPE.
\begin{align*}
\trans_N(p) & \definedAs \tuple{\peContextURI,p,\peWildcard} \\
\trans_N(p\text{\textasciicircum}) & \definedAs \tuple{\peWildcard,p,\peContextURI} \\
\trans_N(\tuple{\peWildcard}) & \definedAs \big(?x, \big\langle \varepsilon, (\OpGRAPH ?u\ (?u,?p,?x)) \big\rangle\big) \\
\trans_N(n_1/n_2) & \definedAs \trans_N(n_1)/\trans_N(n_2) \\
\trans_N(n_1|n_2) & \definedAs \trans_N(n_1)|\trans_N(n_2) \\
\trans_N(n^*) & \definedAs \trans_N(n)^* \\
\trans_N(n[(\OpASK P)]) & \definedAs \trans_N(n)/[(?x, \langle \varepsilon, (\OpGRAPH ?x\ P)\rangle] \\
\end{align*}
Before presenting the complete translations, we prove the following result.
Let $n$ be a NautiLOD expression, then for every Web of Linked Data and URIs $u,v\in \fctDom{\fctsymADoc}$
we have that 
\[
v\in \fctEvalQW{n}{\symWoD}^u\text{\;\; if and only if \;\;} v\in \fctEvalQW{\trans_N(n)}{\symWoD}^{\{u\}}.
\]
The proof is by induction in the construction of the NautiLOD expression.
\begin{itemize}
\item for the case of $p\in \symAllURIs$ we have that
\[
\fctEvalQW{p}{\symWoD}^u=\{u'\mid (u,p,u')\in \fctData{\fctADoc{u}}\}
\]
notice that $v\in \fctDom{\fctsymADoc}$ and $v\in \fctEvalQW{p}{\symWoD}^u$, if and only if
there is a link from document $\fctADoc{u}$ to document $\fctADoc{v}$ that matches $\tuple{\peContextURI,p,\peWildcard}$.
This happens, if and only if $v\in \fctEvalQW{\tuple{\peContextURI,p,\peWildcard}}{\symWoD}^{\{u\}}$, which is what
we wanted to prove.
\item the case for $p\text{\textasciicircum}$ is similar but using $\tuple{\peContextURI,p,\peWildcard}$.
\item the case for $\tuple{\peWildcard}$. Just notice that 
$v\in \fctEvalQW{\tuple{\peWildcard}}{\symWoD}^u$ if and only if there exists a $p\in \symAllURIs$ such that
$(u,p,v)\in \fctData{\fctADoc{u}}$.
On the other hand we have that 
$v\in \fctEvalQW{\trans_N(\tuple{\peWildcard})}{\symWoD}^{\{u\}} = \fctEvalQW{\big(?x, \big\langle \varepsilon, (\OpGRAPH ?u\ (?u,?p,?x)) \big\rangle\big)}{\symWoD}^{\{u\}}$
if and only if 
$v\in \fctEvalQW{\pi_{?x}(\OpGRAPH ?u\ (?u,?p,?x))}{}^{\cal D}$
where $\mathcal{D}=\{\fctData{\fctADoc{u}}, \tuple{u,\fctData{\fctADoc{u}}}\}$.
Thus $v\in \fctEvalQW{\trans_N(\tuple{\peWildcard})}{\symWoD}^{\{u\}}$ if and only if there exists $p$ such that
$(u,p,v)\in \fctData{\fctADoc{u}}$. This proves the desired property.
\item for the case of an expression $n_1/n_2$, we have that 
$v$ in $\fctDom{\fctsymADoc}$ is in $\fctEvalQW{n_1/n_2}{\symWoD}^u$ if and only if, there exists 
$v'\in \fctDom{\fctsymADoc}$ such that
$v'\in \fctEvalQW{n_1}{\symWoD}^u$ and $v\in \fctEvalQW{n_2}{\symWoD}^{v'}$.
The we can apply or induction hypothesis and we have that
$v\in \fctEvalQW{n_1/n_2}{\symWoD}^u$ if and only if 
$v'\in \fctEvalQW{\trans_N(n_1)}{\symWoD}^{\{u\}}$ and $v\in \fctEvalQW{\trans_N(n_2)}{\symWoD}^{\{v'\}}$,
and thus $v\in \fctEvalQW{\trans_N(n_1/n_2)}{\symWoD}^{\{u\}}$.
\item cases $n_1|n_2$ and $n^*$ are direct from the definition of NautiLOD and LDQL.
\item for the case of expression $n[(\OpASK P)]$ we have that $v\in \fctEvalQW{n[(\OpASK P)]}{\symWoD}^u$
if and only if $v\in \fctEvalQW{n}{\symWoD}^u$, $v \in\fctDom{\fctsymADoc}$ and
$\fctEvalQW{P}{\fctData{\fctADoc{v}}}\neq\emptyset$.
On the other hand, we have that $v\in \fctEvalQW{\trans_N(n[(\OpASK P)])}{\symWoD}^{\{u\}}$ if and only if
\[
v\in \fctEvalQW{\trans_N(n)/[(?x, \langle \varepsilon, (\OpGRAPH ?x\ P)\rangle]}{\symWoD}^{\{u\}}.
\]
This happens if and only if there exists a $v'$ such that 
$v'\in \fctEvalQW{\trans_N(n)}{\symWoD}^{\{u\}}$ and 
$v\in \fctEvalQW{[(?x, \langle \varepsilon, (\OpGRAPH ?x\ P)\rangle]}{\symWoD}^{\{v'\}}$.
From the last property and the semantics of $[\cdot ]$ in LDQL, 
we have that $v=v'$ and that $\fctEvalQW{(?x, \langle \varepsilon, (\OpGRAPH ?x\ P)\rangle}{\symWoD}^{\{v\}}\neq \emptyset$.
The last holds if and only if $\fctEvalQW{\pi_{?x}(\OpGRAPH ?x\ P)}{}^{\mathcal{D}}\neq \emptyset$,
with $\mathcal{D}$ the RDF dataset $\{\fctData{\fctADoc{v}}, \tuple{v,\fctData{\fctADoc{v}}}\}$.
Thus we have that $v\in \fctEvalQW{\trans_N(n[(\OpASK P)])}{\symWoD}^{\{u\}}$ if and only if
$v\in \fctEvalQW{\trans_N(n)}{\symWoD}^{\{u\}}$ and $\fctEvalQW{P}{\fctData{\fctADoc{u}}}\neq\emptyset$.
Applying our induction hypothesis we have 
$v\in \fctEvalQW{n}{\symWoD}^u$ and $\fctEvalQW{P}{\fctData{\fctADoc{u}}}\neq\emptyset$, which is
exactly what we needed to prove.
\end{itemize}
Notice that the hypothesis that $v\in \fctDom{\fctsymADoc}$ was fundamental to prove the previous result.
Nevertheless, the output of a NautiLOD query can be a URI not in $\fctDom{\fctsymADoc}$ or even a literal,
so we need to do a different translation in general. Thus, we use now $\trans_N(\cdot)$ 
to translate a general NautiLOD query. 
Given a NautiLOD expression $n$ we have two cases.
Assume first that $n$, as a regular expression, does not produce the empty string $\varepsilon$.
Then, by using reglar language results, we know that we can write an equivalent expression $n'$
of the form
\[
n_1/e_1 \mid \cdots \mid n_k/e_k \mid m_1[(\OpASK P_1)] \mid \cdots \mid m_\ell[(\OpASK P_\ell)]
\]
where every $n_i$ and $m_j$ is a NautiLOD query, and every $e_i$ is either of the form $p$, or $p\text{\textasciicircum}$,
or $\tuple{\peWildcard}$.
We are ready now to produce an LDQL query $Q_n(?x)$ which is equivalent to $n$. The query is constructed as follows.
\begin{multline*}
Q_n(?x) = \pi_{\{?x\}}\bigg( 
\langle \trans_N(n_1), Q_1\rangle \OpUNION \cdots \OpUNION \langle \trans_N(n_k), Q_k\rangle \OpUNION \\
\langle \trans_N(m_1), (\OpGRAPH ?x\ P_1)\rangle \OpUNION \cdots \OpUNION 
\langle \trans_N(m_\ell), (\OpGRAPH ?x \ P_\ell)\rangle \bigg)
\end{multline*}
where query $Q_i$ depends on the form of $e_i$:
\begin{itemize}
\item if $e_i=p$ then $Q_i=(\OpGRAPH ?u\ (?u,p,?x))$
\item if $e_i=p\text{\textasciicircum}$ then $Q_i=(\OpGRAPH ?u\ (?x,p,?u))$
\item if $e_i=\tuple{\peWildcard}$ then $Q_i=(\OpGRAPH ?u\ (?u,?p,?x))$
\end{itemize}
Now to prove the correctness of our construction, assume that $v\in \fctEvalQW{n}{\symWoD}^u$. 
Then we know that $v\in \fctEvalQW{n_i/e_i}{\symWoD}^u$ or $v\in \fctEvalQW{m_i[(\OpASK P_i)]}{\symWoD}^u$ 
for some $i$.
If $v\in \fctEvalQW{n_i/e_i}{\symWoD}^u$ we know that there exists a $v'$ such that
$v' \in \fctEvalQW{n_i}{\symWoD}^u$ and $v\in \fctEvalQW{e_i}{\symWoD}^{v'}$.
Notice that, since $v\in \fctEvalQW{e_i}{\symWoD}^{v'}$, and $e_i$ is either $p$, or $p\text{\textasciicircum}$,
or $\tuple{\peWildcard}$ then we know that $v'$ is in $\fctDom{\fctsymADoc}$.
Thus we can apply our previous result to conclude from $v' \in \fctEvalQW{n_i}{\symWoD}^u$
that $v'\in \fctEvalQW{\trans_N(n_i)}{\symWoD}^{\{u\}}$.
Now if $e_i=p$ then from $v\in \fctEvalQW{e_i}{\symWoD}^{v'}$ we conclude that $(v',p,v)\in \fctData{\fctADoc{v'}}$
and thus $\fctEvalQW{(?u,p,?x)}{\fctData{\fctADoc{v'}}}$ contains the mapping $\mu=\{?u\to v',?x\to v\}$,
then $\fctEvalQW{(\OpGRAPH ?u\ (?u,p,?x))}{}^{\mathcal D}$ has $\mu$ as solution, with
$\mathcal{D}=\{\fctData{\fctADoc{v'}},\tuple{v',\fctData{\fctADoc{v'}}}\}$.
Given that $v'\in \fctEvalQW{\trans_N(n_i)}{\symWoD}^{\{u\}}$, we have that 
\[
\mu=\{?u\to v',?x\to v\}\in \fctEvalQW{\langle\trans_N(n_i), Q_i \rangle}{\symWoD}^{\{u\}}.
\]
Finally, given that $Q_n(?x)$ only keep the $?x$ variable, we have that $\{?x\to v\}$ is in
$\fctEvalQW{Q_n(?x)}{\symWoD}^{\{u\}}$, which is what we wanted to show.
If $e_i=p\text{\textasciicircum}$ or $e_i=\tuple{\peWildcard}$ the proof is the essentially the same.

Now assume that $v\in \fctEvalQW{m_i[(\OpASK P_i)]}{\symWoD}^u$. 
This implies that $v$ is in $\fctEvalQW{m_i}{\symWoD}^u$ and that $\fctEvalQW{P_i}{\fctData{\fctADoc{v}}}\neq\emptyset$.
By the semantics of NautiLOD, we have that $v$ is in $\fctDom{\fctsymADoc}$ (otherwise we could
not have been able to evaluate $P$), and thus we can apply our result above to obtain that
$v\in \fctEvalQW{\trans_N(m_i)}{\symWoD}^{\{u\}}$.
Now, given that $\fctEvalQW{P_i}{\fctData{\fctADoc{v}}}\neq\emptyset$ we have that
$\fctEvalQW{(\OpGRAPH ?x \ P_i)}{}^{\mathcal{D}}\neq\emptyset$ where $\mathcal{D}=\{\fctData{\fctADoc{v}}, \tuple{v, \fctData{\fctADoc{v}}}\}$.
Moreover, we have that every mapping $\mu$ in $\fctEvalQW{(\OpGRAPH ?x \ P_i)}{}^{\mathcal{D}}$
is such that $\mu(?x)=v$.
All these facts implies that mapping $\mu'=\{?x\to v\}$ is in
$\fctEvalQW{\langle \trans_N(m_\ell), (\OpGRAPH ?x \ P_\ell)\rangle}{\symWoD}^{\{u\}}$,
and thus $\mu'$ is in $\fctEvalQW{Q_n(?x)}{\symWoD}^{\{u\}}$
which is exactly what we wanted to prove.

If we start by assuming that $\mu=\{?x\to v\}$ is in $\fctEvalQW{Q_n(?x)}{\symWoD}^{\{u\}}$,
then following a similar reasoning as above one concludes that $v\in \fctEvalQW{n}{\symWoD}^{u}$.

To complete the proof we have to cover the case in which $n$, as a regular expression, can produce the
empty string.
Then, by applying some classical regular languages properties, one can rewrite $n$ as $\varepsilon|n'$ with 
$n'$ an expression that does not produce the empty string $\varepsilon$. Thus we can translate $n$ into
the LDQL query
\[
\tuple{\varepsilon, (\OpGRAPH\ ?x\ \{\ \})} \OpUNION Q_{n'}(?x)
\]
Notice that for every $u\in \fctData{\fctADoc{v}}$ we have that $\fctEvalQW{\tuple{\varepsilon, (\OpGRAPH\ ?x\ \{\ \})}}{\symWoD}^{\{u\}}$ results in a single mapping $\mu=\{?x\to u\}$.

\subsection{Proof of Theorem~\ref{theo:LDQLMoreThanNautiLOD}}

Recall that NautiLOD can only express paths and no combination of those paths via SPARQL operators is allowed.
Thus, it is easy to prove that NautiLOD cannot express operators such as 
$\OpSEED$, $\OpAND$, $\OpUNION$ that are natively allowed
in LDQL. Thus to make a stronger claim, we will prove that there exists simple LDQL query not using the mentioned
operators, that cannot be expressed using NautiLOD. 
The proof is similar to the proof of Theorem~\ref{theo:LDQLMoreThanPP}.

Thus, consider the LDQL $Q(?x)$ query given by
\[
\big\langle \tuple{\peContextURI,p,\peWildcard}, (?x,?x,?x) \big \rangle
\]
with $p\in \symAllURIs$. 
Now assume that there exists a NautiLOD expression $n$ such that 
\[
\fctEvalSetQcW{n}{v}{\symWoD} = \fctEvalQW{Q(?x)}{\symWoD}^{\{v\}}
\]
for every Web of Linked Data $\symWoD$ and $v\in \fctDom{\fctsymADoc}$.
Let $u,u',a,b$ be different elements in $\symAllURIs$ that are not mentioned in $n$.
Consider now $\symWoD_1$ having only two documents $d_1=\{(u,p,u')\}$ and $d_2=\{(a,a,a)\}$ and such that
$\fctADoc{u}=d_1$ and $\fctADoc{u'}=d_2$.
Moreover, consider $\symWoD_2$ having also two documents $d_1=\{(u,p,u')\}$ and $d_3=\{(b,b,b)\}$
such that $\fctADoc{u}=d_1$ and $\fctADoc{u'}=d_3$.
First notice that 
\[
\fctEvalQW{Q(?x)}{\symWoD_1}^{\{u\}}=\{\{?x\to a\}\}\;\; \neq \;\;
\fctEvalQW{Q(?x)}{\symWoD_2}^{\{u\}}=\{\{?x\to b\}\}
\]
We now prove that $\fctEvalSetQcW{n}{u}{\symWoD_1}=\fctEvalSetQcW{n}{u}{\symWoD_2}$ which is a contradiction.
To prove this, we show that for every subexpression $e$ of $n$, and for every possible URI $v$, it holds that
$\fctEvalSetQcW{e}{v}{\symWoD_1}=\fctEvalSetQcW{e}{v}{\symWoD_2}$. 
First notice that $\symWoD_1$ and $\symWoD_2$ has only two URIs in $\fctDom{\fctsymADoc}$, namely, $u$ and $u'$,
thus, we only have to reason for the cases in which $v=u$ or $v=u'$.
We proceed by induction.
\begin{itemize}
\item Assume that $e=r\in \symAllURIs$. Given that in $\symWoD_1$ and $\symWoD_2$ the URI $u$ is associated with the same
document (document $d_1$), then $\fctEvalSetQcW{r}{u}{\symWoD_1}=\fctEvalSetQcW{r}{u}{\symWoD_2}$.
Moreover, given that $r\neq a$ and $r\neq b$ (recall that $n$ does not mention $a$ or $b$), we have that
$\fctEvalSetQcW{r}{u'}{\symWoD_1}=\fctEvalSetQcW{r}{u'}{\symWoD_2}=\emptyset$.
\item Assume that $e=r\text{\textasciicircum}$ with $r\in \symAllURIs$. 
Exactly the same argument as the above case applies.
\item Assume that $e=\tuple{\peWildcard}$. For the same reason as in the above two cases we have that 
$\fctEvalSetQcW{r}{u}{\symWoD_1}=\fctEvalSetQcW{r}{u}{\symWoD_2}$.
Now consider $\fctEvalSetQcW{\tuple{\peWildcard}}{u'}{\symWoD_1}$. Then we have that URI $v$ is in
$\fctEvalSetQcW{\tuple{\peWildcard}}{u'}{\symWoD_1}$ if and only if, there exists some $p$ such that
$(u',p,v)\in \fctData{\fctADoc{u'}}$, but the only triple in $\fctData{\fctADoc{u'}}$ is $(a,a,a)$
and since $a\neq u'$ we have that $\fctEvalSetQcW{\tuple{\peWildcard}}{u'}{\symWoD_1}=\emptyset$.
For a similar reason we obtain that $\fctEvalSetQcW{\tuple{\peWildcard}}{u'}{\symWoD_2}=\emptyset$,
completing this part of the proof.
\item The cases $e=r_1/r_2$, $e=r_1|r_2$ and $e=r^*$ follows from the base cases proved above.
\item Assume $e=r[(\OpASK P)]$.
By definition we have that 
\[
	\fctEvalQW{r\peTest{(\OpASK P)}}{\symWoD}^v =
	\{v'\mid v' \in \fctEvalQW{r}{\symWoD}^v,\ v'\in\fctDom{\fctsymADoc}\text{ and } \fctEvalQW{P}{\fctData{\fctADoc{v'}}}\neq \emptyset \}
\]
By induction hypothesis we have that $\fctEvalQW{r}{\symWoD_1}^v=\fctEvalQW{r}{\symWoD_2}^v$ for $v=u,u'$.
Thus we only need to prove that the evaluation of $P$ is always the same.
given that $\fctData{\fctADoc{u}}$ is the same document in $\symWoD_1$ and $\symWoD_2$, we have that for
$u$ the property holds.
Now consider $\fctEvalQW{P}{d_2}$ and $\fctEvalQW{P}{d_3}$ with $d_1=\{(a,a,a)\}$ and $d_2=\{(b,b,b)\}$.
Recall that $P$ does not mention $a$ or $b$, thus we have that if $\mu\in \fctEvalQW{P}{d_2}$ then the mapping $\mu'$
obtained from $\mu$ by replacing every occurrence of $a$ by $b$, is in $\fctEvalQW{P}{d_3}$, and vice versa.
Thus we have that $\fctEvalQW{P}{d_2}=\emptyset$ if and only if $\fctEvalQW{P}{d_3}=\emptyset$.
This proves that 
$\fctEvalQW{r[(\OpASK P)]}{\symWoD_1}^v=\fctEvalQW{r[(\OpASK P)]}{\symWoD_2}^v$ for $v=u,u'$.
\end{itemize}
We have finished the proof that $\fctEvalSetQcW{n}{u}{\symWoD_1}=\fctEvalSetQcW{n}{u}{\symWoD_2}$
thus contradicting the fact that $n$ is equivalent to $Q(?x)$.

\subsection{Proof of Theorem~\ref{thm:LDQLCoversReachSem}} \label{proof:thm:LDQLCoversReachSem}

Let $\symPattern$ be an arbitrary SPARQL graph pattern,
let $\symWoDTuple$ be an arbitrary \wold,
and let $\symSeedURIs$ be some finite set of URIs.
To prove the theorem we use the~(basic) \queries\
$\tuple{\symLPE^{\cAll},\symPattern}$,
$\tuple{\symLPE^{\cNone},\symPattern}$, and
$\tuple{\symLPE^{\cMatch},\symPattern}$,
with the following LPEs:

\medskip
\noindent
\begin{tabular}{rp{105mm}}
	$\symLPE^{\cAll}$ \ &
	is $\peKleene{ \tuple{\peWildcard,\peWildcard,\peWildcard} }$,
\\[1mm]
	$\symLPE^{\cNone}$ \ &
	is $\peEmpty$, and
\\[1mm]
	$\symLPE^{\cMatch}$ \ &
	is $\peKleene{ \bigl(
		\peSubquery{?s}{\symQuery_1} \,\vert\,
		\peSubquery{?p}{\symQuery_1} \,\vert\,
		\peSubquery{?o}{\symQuery_1} \,\vert\,
		\ldots \,\vert\,
		\peSubquery{?s}{\symQuery_m} \,\vert\,
		\peSubquery{?p}{\symQuery_m} \,\vert\,
		\peSubquery{?o}{\symQuery_m}
	\bigr) }$
	where
	$?s$, $?p$ and $?o$ are fresh variables~(not used in $\symPattern$),
	$m$ is the number of triple patterns in $\symPattern$,
	and for each such triple pattern $\symTP_k$~($1 \leq k \leq m$) there exists a subquery $\symQuery_k$ of the form $\tuple{\peEmpty,\symPattern_k}$ with a SPARQL pattern $\symPattern_k$ that is constructed as follows:
	$\symPattern_k$ contains the triple pattern $\tuple{?s,?p,?o}$ and---depending on the form of \removable{the corresponding triple pattern} $\symTP_k = \tuple{s_k,p_k,o_k}$---may contain additional$\OpFILTER$\! operators;
	in particular,
	if $s_k \notin \symAllVariables$, then $\symPattern_k$ contains$\OpFILTER ?s=s_k$; 
	if $p_k \notin \symAllVariables$, then $\symPattern_k$ contains$\OpFILTER ?p=p_k$; and
	if $o_k \notin \symAllVariables$, then $\symPattern_k$ contains$\OpFILTER ?o=o_k$.
\end{tabular}
\medskip

Then, for each reachability criterion $\symReachCrit \in \lbrace \cAll,\cNone,\cMatch \rbrace$ with its corresponding LPE $\symLPE^{\symReachCrit}$ as specified above, we have to show the following equivalence:

\begin{equation} \label{eq:proof:LDQLCoversReachSem}
	\fctEvalReachPcSW{\symPattern}{\symReachCrit}{\symSeedURIs}{\symWoD}\! = \fctEvalQWS{ \tuple{\symLPE^{\symReachCrit}\!,\symPattern} }{\symWoD}{\symSeedURIs}
	.
\end{equation}

By the definition of the reach\-abil\-i\-ty-based query semantics~(cf.~Section~\ref{ssec:Comparison:ReachSem}) and the definition of LDQL query semantics~(cf.~Definition~\ref{def:LDQLSemantics}), it is sufficient to prove the following lemma to show that (\ref{eq:proof:LDQLCoversReachSem}) holds for each $\symReachCrit \in \lbrace \cAll,\cNone,\cMatch \rbrace$.

\begin{lemma}\label{lem:LDQLCoversReachSem}
	For each $\symReachCrit \in \lbrace \cAll,\cNone,\cMatch \rbrace$, the set of all \docs\ that are ($\symReachCrit,\symSeedURIs,\symPattern$)-reach\-able in $\symWoD$ is equivalent to the following set of \docs:
	\begin{equation*}
		\symDocs_\mathsf{LPE}^{\symReachCrit} = \lbrace \fctADoc{\symURI} \mid \symURI \in \fctEvalPathPdW{\symLPE^{\symReachCrit}}{\symCtxURI}{\symWoD} \text{ for some } \symCtxURI \in \symSeedURIs \rbrace
		.
	\end{equation*}
\end{lemma}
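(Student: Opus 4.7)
The plan is to prove the equality $\symDocs_\mathsf{LPE}^{\symReachCrit}$ equals the set of $(\symReachCrit,\symSeedURIs,\symPattern)$-reach\-able documents by cases on $\symReachCrit \in \{\cAll,\cNone,\cMatch\}$ and, for each case, by a double inclusion. I would start with the easy case $\cNone$: since $\symLPE^{\cNone} = \peEmpty$, one has $\fctEvalPathPdW{\peEmpty}{\symURI}{\symWoD} = \{\symURI\}$ when $\symURI \in \fctDom{\fctsymADoc}$ and $\emptyset$ otherwise. Thus $\symDocs_\mathsf{LPE}^{\cNone} = \{\fctADoc{\symURI} \mid \symURI \in \symSeedURIs \cap \fctDom{\fctsymADoc}\}$, which coincides with the set of documents reach\-able under $\cNone$, where only the base rule applies~(the recursive rule is blocked by $\cNone(t,\symURI,\symPattern) = \false$).

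For $\cAll$, I would first establish a matching lemma: every link graph edge labeled $((x_1,x_2,x_3),\symURI)$ matches $\tuple{\peWildcard,\peWildcard,\peWildcard}$ in any context. Condition~(2) of the matching definition is trivial since $y_i = \peWildcard$ for all $i$, and condition~(1) holds because $\symURI \in \fctURIs{t}$ means some $x_i = \symURI$. Hence $\fctEvalPathPdW{\tuple{\peWildcard,\peWildcard,\peWildcard}}{\symURI'}{\symWoD}$ is precisely the set of link URIs labeling outgoing edges from $\fctADoc{\symURI'}$. I would then prove by induction on $n$ that $\fctEvalPathPdW{\tuple{\peWildcard,\peWildcard,\peWildcard}^n}{\symURI_\mathsf{ctx}}{\symWoD}$ corresponds exactly to URIs whose authoritative documents are reachable from $\fctADoc{\symURI_\mathsf{ctx}}$ by a path of $n$ edges in $\mathcal{G}_\symWoD$. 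Unioning over $n$~(i.e., applying Kleene star semantics) and over seeds $\symSeedURIs$, and mirroring the inductive structure of $(\cAll,\symSeedURIs,\symPattern)$-reach\-abil\-i\-ty, gives the equality.

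The $\cMatch$ case is the most involved and where I expect the main technical work. The key claim is that, for each triple pattern $\symTP_k \in \symPattern$, the set $\fctEvalPathPdW{\peSubquery{?s}{\symQuery_k}}{\symURI_\mathsf{ctx}}{\symWoD} \cup \fctEvalPathPdW{\peSubquery{?p}{\symQuery_k}}{\symURI_\mathsf{ctx}}{\symWoD} \cup \fctEvalPathPdW{\peSubquery{?o}{\symQuery_k}}{\symURI_\mathsf{ctx}}{\symWoD}$ equals the set of URIs $\symURI \in \fctURIs{t}$ where $t \in \fctData{\fctADoc{\symURI_\mathsf{ctx}}}$ matches $\symTP_k$. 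To see this, I would unfold the semantics of $\peSubquery{?v}{\symQuery_k}$: the subquery $\symQuery_k = \tuple{\peEmpty,\symPattern_k}$ evaluated at seed $\{\symURI_\mathsf{ctx}\}$ produces a dataset whose default graph is $\fctData{\fctADoc{\symURI_\mathsf{ctx}}}$; pattern $\symPattern_k$, consisting of $\tuple{?s,?p,?o}$ together with equality filters for each constant of $\symTP_k$, yields exactly the mappings $\mu$ with $\mu[\symTP_k] = (s,p,o)$ for each triple $(s,p,o) \in \fctData{\fctADoc{\symURI_\mathsf{ctx}}}$ that matches $\symTP_k$. Since $\peSubquery{?v}{\symQuery_k}$ returns only URI bindings of $?v$, taking the union over the three positions yields precisely the URI positions of matching triples, and the outer union over $k$ captures $\cMatch$. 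An induction on iterations of the Kleene star, analogous to the $\cAll$ case, then completes the proof. The main obstacle is the careful bookkeeping here: ensuring that the fresh variables $?s,?p,?o$ do not interfere with the outer pattern $\symPattern$, that literal and blank node bindings for $?o$ (and blank node bindings for $?s$) are correctly excluded by the URI-only semantics of $\peSubquery{?v}{\cdot}$, and that this exclusion matches the fact that the link graph only carries edges for URIs in $\fctURIs{t}$ and not for blank nodes or literals.
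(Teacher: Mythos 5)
Your proposal is correct and follows essentially the same route as the paper's proof: the same case analysis over $\cAll$, $\cNone$, $\cMatch$, the same double inclusions (with $\cNone$ reduced to the base reachability rule since the recursive rule is blocked), the one-step matching observation for $\tuple{\peWildcard,\peWildcard,\peWildcard}$, and the same key unfolding of $\peSubquery{?v}{\symQuery_k}$ over the dataset whose default graph is $\fctData{\fctADoc{\symCtxURI}}$, followed by inductions on path length mirroring the inductive definition of reachability. The bookkeeping concerns you flag (fresh variables, and the URI-only semantics of $\peSubquery{?v}{\cdot}$ excluding literal and blank-node bindings) are exactly the points the paper's induction steps verify.
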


Notice that for each $\symReachCrit \in \lbrace \cAll,\cNone,\cMatch \rbrace$, the set $\symDocs_\mathsf{LPE}^{\symReachCrit}$ is the set of \docs\ selected by evaluating $\symLPE^{\symReachCrit}$ over~$\symWoD$ using every URI in $\symSeedURIs$ as context URI.
In the following, we prove Lemma~\ref{lem:LDQLCoversReachSem} for each of the three reachability criteria, $\cAll$, $\cNone$, and~$\cMatch$.

\subsubsection{$\cAll$-semantics:}

To prove Lemma~\ref{lem:LDQLCoversReachSem} for $\cAll$ we show that the set $\symDocs_\mathsf{LPE}^{\cAll}$ is both a subset and a superset of the set of all ($\cAll,\symSeedURIs,\symPattern$)-reach\-able \docs\ in $\symWoD$\!.

We begin with the former.
Hence, for an arbitrary \doc\ in $\symDocs_\mathsf{LPE}^{\cAll}$ we have to show that this \doc\ is ($\cAll,\symSeedURIs,\symPattern$)-reach\-able in $\symWoD$\!.
Let $d_\mathsf{LPE} \in \symDocs_\mathsf{LPE}^{\cAll}$ be such a \doc.
Since $d_\mathsf{LPE} \in \symDocs_\mathsf{LPE}^{\cAll}$, we know that there exist two URIs, $\symCtxURI$ and $\symURI$, such~that

\begin{itemize}
	\item $\symCtxURI \in \symSeedURIs$,
	\item $\symURI \in \fctEvalPathPdW{\symLPE^{\cAll}}{\symCtxURI}{\symWoD}$, and
	\item $d_\mathsf{LPE} = \fctADoc{\symURI}$.
\end{itemize}

\noindent
Then, either we have $\symCtxURI = \symURI$ or $\symCtxURI \neq \symURI$.
	\removable{In the following, we discuss these two cases.}

If $\symCtxURI = \symURI$, then 
$d_\mathsf{LPE} = \fctADoc{\symCtxURI}$ and, thus, \doc\ $d_\mathsf{LPE}$ is ($\cAll,\symSeedURIs,\symPattern$)-reach\-able in~$\symWoD$ because it satisfies the first of the two alternative conditions for reachability as given in Section~\ref{ssec:Comparison:ReachSem}.

If $\symCtxURI \neq \symURI$, then,
given that $\symURI \in \fctEvalPathPdW{\symLPE^{\cAll}}{\symCtxURI}{\symWoD}$,
there exists a nonempty sequence of link graph~edges
\begin{align*}
	\tuple{d_1,(t_1,\symURI_1),d_1'} \in \mathcal{G}_\symWoD,&&
	\tuple{d_2,(t_2,\symURI_2),d_2'} \in \mathcal{G}_\symWoD,&&
	\ldots ,&&
	\tuple{d_n,(t_n,\symURI_n),d_n'} \in \mathcal{G}_\symWoD
\end{align*}
such that
\begin{itemize}
	\item $d_1 = \fctADoc{\symCtxURI}$,
	\item $d_i' = d_{i+1}$ for all $i \in \lbrace 1, \ldots , n-1 \rbrace$, and
	\item $d_n' = d_\mathsf{LPE}$~(and $\symURI_n = \symURI$).
\end{itemize}

\noindent
Then, since $d_1 = \fctADoc{\symCtxURI}$ and $\symCtxURI \in \symSeedURIs$, we have that \doc\ $d_1$ is ($\cAll,\symSeedURIs,\symPattern$)-reach\-able in~$\symWoD$~(the \doc\ satisfies the first of the two conditions for reachability as given in Section~\ref{ssec:Comparison:ReachSem}).
As a consequence, we can
	\removable{use the fact that $d_i' = d_{i+1}$ for all $i \in \lbrace 1, \ldots , n-1 \rbrace$ to}
show that all other \docs\ connected by the sequence of link graph edges are also ($\cAll,\symSeedURIs,\symPattern$)-reach\-able in~$\symWoD$~(they satisfy the second condition).
Therefore, due to $d_n' = d_\mathsf{LPE}$, \doc\ $d_\mathsf{LPE}$ is ($\cAll,\symSeedURIs,\symPattern$)-reach\-able in~$\symWoD$\!.

After showing that in both cases, $\symCtxURI = \symURI$ and $\symCtxURI \neq \symURI$, \doc\ $d_\mathsf{LPE} \in \symDocs_\mathsf{LPE}^{\cAll}$ is ($\cAll,\symSeedURIs,\symPattern$)-reach\-able in~$\symWoD$\!, we conclude that the set $\symDocs_\mathsf{LPE}^{\cAll}$ is a subset of the set of all ($\cAll,\symSeedURIs,\symPattern$)-reach\-able \docs\ in $\symWoD$\!. It remains to show that $\symDocs_\mathsf{LPE}^{\cAll}$ is also a superset.

To this end, let $d_\mathsf{R}$ be a \doc\ that is ($\cAll,\symSeedURIs,\symPattern$)-reach\-able in $\symWoD$\!. We have to show that $d_\mathsf{R}$ is in $\symDocs_\mathsf{LPE}^{\cAll}$.
We note that document $d_\mathsf{R}$ may be ($\cAll,\symSeedURIs,\symPattern$)-reach\-able in $\symWoD$ because it satisfies either the first or the second of the two alternative conditions for reachability as given in Section~\ref{ssec:Comparison:ReachSem}. In the following, we discuss both cases.

If $d_\mathsf{R}$ satisfies the first condition, there exists a URI $\symURI_\mathsf{R} \!\in\! \symSeedURIs$ such that $\fctADoc{\symURI_\mathsf{R}} = d_\mathsf{R}$.
Since $\symLPE^{\cAll}$ is $\peKleene{ \tuple{\peWildcard,\peWildcard,\peWildcard} }$, we also have $\symURI_\mathsf{R} \in \fctEvalPathPdW{\symLPE^{\cAll}}{\symURI_\mathsf{R}}{\symWoD}$.
Therefore, we can use URI~$\symURI_\mathsf{R}$ as both $\symCtxURI$ and $\symURI$ in the definition of $\symDocs_\mathsf{LPE}^{\cAll}$, which shows that $d_\mathsf{R} \in \symDocs_\mathsf{LPE}^{\cAll}$.

If $d_\mathsf{R}$ satisfies the second condition, then there exist both a seed URI $\symURI_0 \in \symSeedURIs$ and a nonempty sequence of link graph~edges
\begin{align*}
	\tuple{d_1,(t_1,\symURI_1),d_1'} \in \mathcal{G}_\symWoD,&&
	\tuple{d_2,(t_2,\symURI_2),d_2'} \in \mathcal{G}_\symWoD,&&
	\ldots ,&&
	\tuple{d_n,(t_n,\symURI_n),d_n'} \in \mathcal{G}_\symWoD
\end{align*}
such that
\begin{itemize}
	\item $d_1 = \fctADoc{\symURI_0}$,
	\item $d_i' = d_{i+1}$ for all $i \in \lbrace 1, \ldots , n-1 \rbrace$, and
	\item $d_n' = d_\mathsf{R}$ and, thus, $d_\mathsf{R} = \fctADoc{\symURI_n}$.
\end{itemize}

\noindent
Moreover, every such link graph edge $\tuple{d_j,(t_j,\symURI_j),d_j'}$ matches link pattern $\tuple{\peWildcard,\peWildcard,\peWildcard}$ in the context of URI $\symURI_{j-1}$~($1 \leq j \leq n$).
Therefore, since $\symLPE^{\cAll}$ is $\peKleene{ \tuple{\peWildcard,\peWildcard,\peWildcard} }$, we have $\symURI_n \in \fctEvalPathPdW{\symLPE^{\cAll}}{\symURI_0}{\symWoD}$.
Then,
	\removable{with $d_\mathsf{R} = \fctADoc{\symURI_n}$ and $\symURI_0 \in \symSeedURIs$,}
we can use $\symURI_n$ as $\symURI$ and $\symURI_0$ as $\symCtxURI$ in the definition of $\symDocs_\mathsf{LPE}^{\cAll}$, which shows that $d_\mathsf{R} \in \symDocs_\mathsf{LPE}^{\cAll}$.

In conclusion, independent of whether $d_\mathsf{R}$ satisfies the first or the second condition for being ($\cAll,\symSeedURIs,\symPattern$)-reach\-able in $\symWoD$\!, we find that $d_\mathsf{R} \in \symDocs_\mathsf{LPE}^{\cAll}$. Hence, $\symDocs_\mathsf{LPE}^{\cAll}$ is not only a subset of all ($\cAll,\symSeedURIs,\symPattern$)-reach\-able \docs\ in $\symWoD$\!, but also a superset thereof, which shows that both sets are equivalent~(as claimed in Lemma~\ref{lem:LDQLCoversReachSem}).

\subsubsection{$\cNone$-semantics:}

To prove Lemma~\ref{lem:LDQLCoversReachSem} for $\cNone$ we show that the set $\symDocs_\mathsf{LPE}^{\cNone}$ is both a subset and a superset of the set of all ($\cNone,\symSeedURIs,\symPattern$)-reach\-able \docs\ in $\symWoD$\!.

To begin with the former, assume an arbitrary \doc\ in $d_\mathsf{LPE} \in \symDocs_\mathsf{LPE}^{\cNone}$. We have to show that this \doc\ is ($\cNone,\symSeedURIs,\symPattern$)-reach\-able in $\symWoD$\!.
Since $d_\mathsf{LPE} \in \symDocs_\mathsf{LPE}^{\cAll}$, we know that there exist two URIs, $\symCtxURI$ and $\symURI$, such~that

\begin{itemize}
	\item $\symCtxURI \in \symSeedURIs$,
	\item $\symURI \in \fctEvalPathPdW{\symLPE^{\cNone}}{\symCtxURI}{\symWoD}$, and
	\item $d_\mathsf{LPE} = \fctADoc{\symURI}$.
\end{itemize}

\noindent
Given that $\symLPE^{\cNone}$ is $\peEmpty$, by $\symURI \in \fctEvalPathPdW{\symLPE^{\cNone}}{\symCtxURI}{\symWoD}$ and Definition~\ref{def:SemanticsLPEs}, we obtain that $\symURI = \symCtxURI$ and, thus, $d_\mathsf{LPE} = \fctADoc{\symCtxURI}$.
Therefore, \doc\ $d_\mathsf{LPE}$ is ($\cNone,\symSeedURIs,\symPattern$)-reach\-able in~$\symWoD$ because it satisfies the first of the two alternative conditions for reachability as given in Section~\ref{ssec:Comparison:ReachSem}.
As a consequence, we can conclude that the set $\symDocs_\mathsf{LPE}^{\cNone}$ is a subset of the set of all ($\cNone,\symSeedURIs,\symPattern$)-reach\-able \docs\ in $\symWoD$\!.

To show that $\symDocs_\mathsf{LPE}^{\cNone}$ is also a superset, let $d_\mathsf{R}$ be an arbitrary \doc\ that is ($\cNone,\symSeedURIs,\symPattern$)-reach\-able in $\symWoD$\!. We have to show that $d_\mathsf{R}$ is in $\symDocs_\mathsf{LPE}^{\cNone}$.
We note that $d_\mathsf{R}$ can be ($\cNone,\symSeedURIs,\symPattern$)-reach\-able in $\symWoD$ only if it satisfies the first of the two alternative conditions for reachability as given in Section~\ref{ssec:Comparison:ReachSem}~(for $\cNone$, the second condition cannot be satisfied by any \doc\ because 
$\cNone(t,\symURI,\symPattern) \definedAs \false$ for all $\tuple{t,\symURI,\symPattern} \in \symAllTriples \times \symAllURIs \times \symAllPatterns$).
Therefore, given that $d_\mathsf{R}$ satisfies the first condition, there exists a URI $\symURI_\mathsf{R} \in \symSeedURIs$ such that $\fctADoc{\symURI_\mathsf{R}} = d_\mathsf{R}$.
Since $\symLPE^{\cNone}$ is $\peEmpty$, we also have $\symURI_\mathsf{R} \in \fctEvalPathPdW{\symLPE^{\cNone}}{\symURI_\mathsf{R}}{\symWoD}$.
Therefore, we can use URI~$\symURI_\mathsf{R}$ as both $\symCtxURI$ and $\symURI$ in the definition of $\symDocs_\mathsf{LPE}^{\cNone}$ and, thus, obtain that $d_\mathsf{R} \in \symDocs_\mathsf{LPE}^{\cNone}$, which shows that the set $\symDocs_\mathsf{LPE}^{\cNone}$ is a superset of the set of all ($\cNone,\symSeedURIs,\symPattern$)-reach\-able \docs\ in $\symWoD$\!.
Since we have shown before that $\symDocs_\mathsf{LPE}^{\cNone}$ is also a subset
of the set of all \docs\ that are ($\cNone,\symSeedURIs,\symPattern$)-reach\-able in $\symWoD$\!, we conclude that both sets are equivalent. Hence, Lemma~\ref{lem:LDQLCoversReachSem} holds for reachability criterion $\cNone$.

\subsubsection{$\cMatch$-semantics:}

It remains to prove Lemma~\ref{lem:LDQLCoversReachSem} for $\cMatch$. To this end, we show that the set $\symDocs_\mathsf{LPE}^{\cMatch}$ is both a subset and a superset of the set of all \docs\ that are ($\cMatch,\symSeedURIs,\symPattern$)-reach\-able in $\symWoD$\!. As before, we begin with the former.

Let $d_\mathsf{LPE}$ be an arbitrary \doc\ in $\symDocs_\mathsf{LPE}^{\cAll}$. We have to show that this \doc\ is ($\cMatch,\symSeedURIs,\symPattern$)-reach\-able in $\symWoD$\!.
Since $d_\mathsf{LPE} \in \symDocs_\mathsf{LPE}^{\cMatch}$, we know by the definition of $\symDocs_\mathsf{LPE}^{\cMatch}$~(as given in Lemma~\ref{lem:LDQLCoversReachSem}) that there exist two URIs, $\symCtxURI$ and $\symURI$, such~that

\begin{itemize}
	\item $\symCtxURI \in \symSeedURIs$,
	\item $\symURI \in \fctEvalPathPdW{\symLPE^{\cMatch}}{\symCtxURI}{\symWoD}$, and
	\item $d_\mathsf{LPE} = \fctADoc{\symURI}$.
\end{itemize}

\noindent
Given that $\symURI \in \fctEvalPathPdW{\symLPE^{\cMatch}}{\symCtxURI}{\symWoD}$,
there exists a nonempty sequence of URIs
$\symURI_0, \symURI_1, \ldots, \symURI_n$
and a corresponding sequence of \docs\ $d_0, d_1, \ldots, d_n$
such~that
\begin{itemize}
	\item $d_i = \fctADoc{\symURI_i}$ for each $i \in \lbrace 0, \ldots , n \rbrace$,
	\item $\symURI_0 = \symCtxURI$,
	\item $\symURI_n = \symURI$~(and, thus, $d_n = d_\mathsf{LPE}$), and

	\item for each $i \in \lbrace 1, \ldots , n \rbrace$, there exists a triple pattern $\symTP_k$ in $\symPattern$~($1 \leq k \leq m$) such that $\symURI_i \in \fctEvalPathPdW{\peSubquery{?v}{\symQuery_k}}{\symURI_{i-1}}{\symWoD}$ where $?v \in \lbrace ?s,?p,?o \rbrace$ and $\symQuery_k$ is the \query\ that corresponds to $\symTP_k$ as specified in the definition of $\symLPE^{\cMatch}$ above.
\end{itemize}

We show by induction over $n$ that all $n+1$ \docs, $d_0, d_1, \ldots, d_n$, are ($\cMatch,\symSeedURIs,\symPattern$)-reach\-able in~$\symWoD$\!, and, thus, so is $d_\mathsf{LPE} = d_n$.

\medskip
\textit{Base case (n $=$ 0)}:
$d_0$ is ($\cMatch,\symSeedURIs,\symPattern$)-reach\-able in $\symWoD$ because $d_0 = \fctADoc{\symURI_0}$, $\symURI_0 = \symCtxURI$, and $\symCtxURI \in \symSeedURIs$; i.e., $d_0$ satisfies the first condition as specified in Section~\ref{ssec:Comparison:ReachSem}.

\medskip
\textit{Induction step (n $>$ 0)}:
By induction, we assume that \doc\ $d_{n\text{-}1}$ is ($\cMatch,\symSeedURIs,\symPattern$)-reach\-able in $\symWoD$\!.
To show that $d_n$ is also ($\cMatch,\symSeedURIs,\symPattern$)-reach\-able in $\symWoD$ we aim to show that $d_n$ satisfies the second
	condition
for reachability as given in Section~\ref{ssec:Comparison:ReachSem}. That is, we aim to show that there exists a link graph edge $\tuple{d_\mathsf{src},(t,\symURI),d_\mathsf{tgt}} \in \mathcal{G}_\symWoD$ such that
\enumA~$d_\mathsf{src}$ is $(\cMatch,\symSeedURIs,\symPattern)$-reach\-able in $\symWoD$\!,
\enumB~$\cMatch(t,\symURI,\symPattern)=\true$,
\enumC~$\symURI = \symURI_n$,
and \enumD~$d_\mathsf{tgt}=d_n$.
Let $d_\mathsf{src}$ be $d_{n\text{-}1}$, which is ($\cMatch,\symSeedURIs,\symPattern$)-reach\-able in $\symWoD$ by our inductive hypothesis. Hence, it remains to show the existence of a link graph edge $\tuple{d_{n\text{-}1},(t,\symURI_n),d_n} \in \mathcal{G}_\symWoD$ for which $\cMatch(t,\symURI_n,\symPattern)=\true$.
To this end, we use the fourth of the four aforementioned properties of the sequence of URIs $\symURI_0, \symURI_1, \ldots, \symURI_n$.

Let $\symTP_k = \tuple{s_k,p_k,o_k}$ be a triple pattern in $\symPattern$ such that $\symURI_n \in \fctEvalPathPdW{\peSubquery{?v}{\symQuery_k}}{\symURI_{n-1}}{\symWoD}$ where $?v \in \lbrace ?s,?p,?o \rbrace$  and $\symQuery_k$ is the \query\ that corresponds to $\symTP_k$ as specified in the definition of $\symLPE^{\cMatch}$ above; i.e., $\symQuery_k$ is a basic \query\ of the form $\tuple{\peEmpty,\symPattern_k}$ where SPARQL pattern $\symPattern_k$ contains the triple pattern $\tuple{?s,?p,?o}$ and
\enumA~if $s_k \notin \symAllVariables$, then $\symPattern_k$ contains$\OpFILTER ?s=s_k$,
\enumB~if $p_k \notin \symAllVariables$, then $\symPattern_k$ contains$\OpFILTER ?p=p_k$, and
\enumC~if $o_k \notin \symAllVariables$, then $\symPattern_k$ contains$\OpFILTER ?o=o_k$.

Since $\symURI_n \in \fctEvalPathPdW{\peSubquery{?v}{\symQuery_k}}{\symURI_{n-1}}{\symWoD}$\!, by Definition~\ref{def:SemanticsLPEs}, there exists a solution mapping~$\mu$ such that
$\mu \in \fctEvalQWS{\symQuery_k}{\symWoD}{\lbrace\symURI_{n-1}\rbrace}$
and $\mu(?v) = \symURI_n$.
Moreover, since $\symQuery_k$ is of the form $\tuple{\peEmpty,\symPattern_k}$, we have 
$\mu \in \fctEvalPGD{\symPattern_k}{\symRDFgraph}{\symRDFdataset}$ where $\textstyle\symRDFdataset = \dataset_\symWoD\left( \lbrace \symURI_{n-1} \rbrace \right)$ with default graph $G = \fctData{d_{n-1}}$.
Then, due to the construction of $\symPattern_k$, it is easily verified that there exists an RDF triple $t \in \fctData{d_{n-1}}$ such that $\mu[\symTP_k] = t$ and $\symURI_n \in \fctURIs{t}$. As a consequence,
\enumA~$\cMatch(t,\symURI_n,\symTP_k) = \true$ and
\enumB~by Definition~\ref{def:LinkGraph}, there exists a link graph edge $\tuple{d_{n-1},(t,\symURI_n),d_n} \in \mathcal{G}_\symWoD$.
Finally, since $\symTP_k$ is
	a triple pattern
in $\symPattern$, we also have $\cMatch(t,\symURI_n,\symPattern) = \true$.

\medskip
\noindent
While this concludes showing that the set $\symDocs_\mathsf{LPE}^{\cMatch}$ is a subset of the set of all
	\docs\ that are ($\cMatch,\symSeedURIs,\symPattern$)-reach\-able
in $\symWoD$\!, we now show that it is also a superset thereof.

Let $d_\mathsf{R}$ be a \doc\ that is ($\cMatch,\symSeedURIs,\symPattern$)-reach\-able in $\symWoD$\!. We have to show that $d_\mathsf{R}$ is in $\symDocs_\mathsf{LPE}^{\cMatch}$.
Since $d_\mathsf{R}$ is ($\cMatch,\symSeedURIs,\symPattern$)-reach\-able in $\symWoD$\!, there exist a nonempty sequence of URIs $\symURI_0, \symURI_1, \ldots, \symURI_n$,
a corresponding sequence of \docs
\begin{align*}
	d_0 = \fctADoc{\symURI_0},&&
	d_1 = \fctADoc{\symURI_1},&&
	d_2 = \fctADoc{\symURI_2},&&
	\ldots ,&&
	d_n = \fctADoc{\symURI_n},&&
\end{align*}
and a corresponding sequence of link graph~edges
\begin{align*}
	\tuple{d_1',(t_1,\symURI_1),d_1} \in \mathcal{G}_\symWoD,&&
	\tuple{d_2',(t_2,\symURI_2),d_2} \in \mathcal{G}_\symWoD,&&
	\ldots ,&&
	\tuple{d_n',(t_n,\symURI_n),d_n} \in \mathcal{G}_\symWoD
\end{align*}
such that
\begin{itemize}
	\item $\symURI_0 \in \symSeedURIs$,
	\item $\cMatch(t_i,\symURI_i,\symPattern) = \true$ for all $i \in \lbrace 1, \ldots , n \rbrace$,
	\item $d_i' = d_{i-1}$ for all $i \in \lbrace 1, \ldots , n \rbrace$, and
	\item $d_n = d_\mathsf{R}$ and, thus, $d_\mathsf{R} = \fctADoc{\symURI_n}$.
\end{itemize}

We aim to show that each of the $n+1$ \docs, $d_0, d_1, d_2, \ldots, d_n$, is in $\symDocs_\mathsf{LPE}^{\cMatch}$, and, thus, so is $d_\mathsf{R} = d_n$. To this end, it is sufficient to show that each of the $n+1$ URIs, $\symURI_0, \symURI_1, \ldots, \symURI_n$, is in $\fctEvalPathPdW{\symLPE^{\cMatch}}{\symURI_0}{\symWoD}$. 
Then,
	\removable{with $\symURI_0 \in \symSeedURIs$,}
for each $i \in \lbrace 0, \ldots , n \rbrace$ we can use URI $\symURI_i$ as $\symURI$ and $\symURI_0$ as $\symCtxURI$ in the definition of $\symDocs_\mathsf{LPE}^{\cMatch}$, which shows that \doc\ $d_i = \fctADoc{\symURI_i}$ is in $\symDocs_\mathsf{LPE}^{\cMatch}$.
We use proof by induction.

\medskip
\textit{Base case (n $=$ 0)}:
Since $\symLPE^{\cMatch}$ is of the form $\peKleene{(\cdot)}$\!, we have $\symURI_0 \in \fctEvalPathPdW{\symLPE^{\cMatch}}{\symURI_0}{\symWoD}$.

\medskip
\textit{Induction step (n $>$ 0)}:
By induction, we assume that $\symURI_{n-1} \in \fctEvalPathPdW{\symLPE^{\cMatch}}{\symURI_0}{\symWoD}$. Then, to show that $\symURI_n$ is also in $\fctEvalPathPdW{\symLPE^{\cMatch}}{\symURI_0}{\symWoD}$, it is sufficient to show that $\symURI_n$ is in $\fctEvalPathPdW{ \symLPE^{\cMatch}_\mathsf{step} }{\symURI_{n-1}}{\symWoD}$ where $\symLPE^{\cMatch}_\mathsf{step}$ is
$
	\peSubquery{?s}{\symQuery_1} \,\vert\,
	\peSubquery{?p}{\symQuery_1} \,\vert\,
	\peSubquery{?o}{\symQuery_1} \,\vert\,
	\ldots \,\vert\,
	\peSubquery{?s}{\symQuery_m} \,\vert\,
	\peSubquery{?p}{\symQuery_m} \,\vert\,
	\peSubquery{?o}{\symQuery_m}
$
such that $\symLPE^{\cMatch} = \peKleene{\left( \symLPE^{\cMatch}_\mathsf{step} \right)}$\!.
Due to the existence of
link graph edge $\tuple{d_n',(t_n,\symURI_n),d_n} \in \mathcal{G}_\symWoD$ with $d_n' = d_{n-1}$, we know by Definition~\ref{def:LinkGraph} that there exists
	a
triple $t_n \in \fctData{d_{n-1}}$ with $\symURI_n \in \fctURIs{t_n}$. Moreover, since $\cMatch(t_n,\symURI_n,\symPattern) = \true$, there exist both a triple pattern $\symTP_k$ in $\symPattern$ and a solution mapping $\mu$ such that $\mu[\symTP_k] = t_n$. Then, given the \query\ $\symQuery_k = \tuple{\peEmpty,\symPattern_k}$ that is constructed for $\symTP_k$ as specified in the definition of $\symLPE^{\cMatch}$, it easy to verify that there exists a solution mapping $\mu'$ such that $\mu' \in \fctEvalPGD{\symPattern_k}{\symRDFgraph}{\symRDFdataset}$ and $\mu'(?v) = \symURI_n$, where $?v \in \lbrace ?s,?p,?o \rbrace$ and $\textstyle\symRDFdataset = \dataset_\symWoD\left( \lbrace \symURI_{n-1} \rbrace \right)$ with default graph $G = \fctData{d_{n-1}}$.
Then, by Definition~\ref{def:LDQLSemantics}, we also have $\mu' \in \fctEvalQWS{\symQuery_k}{\symWoD}{\lbrace \symURI_{n-1} \rbrace}$\!~and, thus, $\symURI_n \in \fctEvalPathPdW{ \peSubquery{?v}{\symQuery_k} }{\symURI_{n-1}}{\symWoD}$\!.
Since $\peSubquery{?v}{\symQuery_k}$ is a disjunct in $\symLPE^{\cMatch}_\mathsf{step}$, we also obtain $\symURI_n \in \fctEvalPathPdW{ \symLPE^{\cMatch}_\mathsf{step} }{\symURI_{n-1}}{\symWoD}$ and, thus,~$\symURI_n \in \fctEvalPathPdW{\symLPE^{\cMatch}}{\symURI_0}{\symWoD}$.

\medskip
As argued before, as a consequence of $\symURI_n \in \fctEvalPathPdW{\symLPE^{\cMatch}}{\symURI_0}{\symWoD}$~(and $\symURI_0 \in \symSeedURIs$), we can show that \doc\ $d_\mathsf{R} = \fctADoc{\symURI_n}$ is in $\symDocs_\mathsf{LPE}^{\cMatch}$ by using $\symURI_n$ as $\symURI$ and $\symURI_0$ as $\symCtxURI$ in the definition of $\symDocs_\mathsf{LPE}^{\cMatch}$~(cf.~Lemma~\ref{lem:LDQLCoversReachSem}). Therefore, the set $\symDocs_\mathsf{LPE}^{\cMatch}$ is not only a subset of the set of all
	\docs\ that are ($\cMatch,\symSeedURIs,\symPattern$)-reach\-able
in $\symWoD$~(as shown before), but also a superset. Hence, both sets are equivalent and, thus, Lemma~\ref{lem:LDQLCoversReachSem} holds for $\cMatch$.

\subsection{Proof of Theorem~\ref{thm:LDQLMoreThanReachSem}} \label{proof:thm:LDQLMoreThanReachSem}

In the proof we use the following simple LDQL query $Q(?x)$ given by
\[
\big\langle \tuple{\peContextURI,p,\peWildcard},(?x, ?x, ?x)\big\rangle.
\]

We prove first that the reachability criterion $\cNone$ cannot express $Q(?x)$.
On the contrary, assume that there exists a SPARQL pattern $P$ such that
\[
\fctEvalReachPcSW{P}{\cNone}{S}{\symWoD}\ = \fctEvalQWS{ Q(?x) }{\symWoD}{S}
\]
for every $S$ and $\symWoD$.
Let $u,u',a,b$ be different elements in $\symAllURIs$ that are not mentioned in $P$.
Consider now $\symWoD_1$ having only two documents $d_1=\{(u,p,u')\}$ and $d_2=\{(a,a,a)\}$ and such that
$\fctADoc{u}=d_1$ and $\fctADoc{u'}=d_2$.
Moreover, consider $\symWoD_2$ having also two documents $d_1=\{(u,p,u')\}$ and $d_3=\{(b,b,b)\}$
such that $\fctADoc{u}=d_1$ and $\fctADoc{u'}=d_3$.
First notice that 
\[
\fctEvalQW{Q(?x)}{\symWoD_1}^{\{u\}}=\{\{?x\to a\}\}\;\; \neq \;\;
\fctEvalQW{Q(?x)}{\symWoD_2}^{\{u\}}=\{\{?x\to b\}\}
\]
It is easy to see that $\fctEvalReachPcSW{P}{\cNone}{\{u\}}{\symWoD_1}=\fctEvalReachPcSW{P}{\cNone}{\{u\}}{\symWoD_2}$. Just notice that from $\{u\}$,
the set of reachable documents following the $\cNone$ criterium
is the same set $\{d_1\}$ in both $\symWoD_1$ and $\symWoD_2$. 
Thus we have that $\fctEvalReachPcSW{P}{\cNone}{\{u\}}{\symWoD_1}=\fctEvalReachPcSW{P}{\cNone}{\{u\}}{\symWoD_2}$ but $\fctEvalQW{Q(?x)}{\symWoD_1}^{\{u\}} \neq \fctEvalQW{Q(?x)}{\symWoD_2}^{\{u\}}$ which is a contradiction.

To continue with the proof, we now show that the reachability criterion $\cAll$ cannot express $Q(?x)$. To obtain a contradiction, 
assume that there exists a pattern $P$ such that 
\[
\fctEvalReachPcSW{P}{\cAll}{S}{\symWoD}\ = \fctEvalQWS{ Q(?x) }{\symWoD}{S}
\]
for every $S$ and $\symWoD$.
Let $u,u',a,b$ be different elements in $\symAllURIs$ that are not mentioned in $P$.
Consider now $\symWoD_1=(\{d_1,d_2,d_3\},\fctsymADoc_1)$ having three documents $d_1=\{(u,p,u')\}$, $d_2=\{(a,a,a)\}$ and $d_3=\{(b,b,b)\}$ and such that $\fctsymADoc_1(u)=d_1$, $\fctsymADoc_1(u')=d_2$ and 
$\fctsymADoc_1(a)=d_3$.
Moreover, consider $\symWoD_2=(\{d_1,d_2,d_3\},\fctsymADoc_2)$ having exactly the same documents as $\symWoD_1$, and such that $\fctsymADoc_2(u)=d_1$, $\fctsymADoc_2(u')=d_3$ and $\fctsymADoc_2(b)=d_2$.
First notice that 
\[
\fctEvalQW{Q(?x)}{\symWoD_1}^{\{u\}}=\{\{?x\to a\}\}\;\; \neq \;\;
\fctEvalQW{Q(?x)}{\symWoD_2}^{\{u\}}=\{\{?x\to b\}\}.
\]
Now notice that from $\{u\}$, the set of reachable documents in $\symWoD_1$ 
following the $\cAll$ criterium is the set $\{d_1,d_2,d_3\}$; $d_1$ is the document associated to $u$, $d_2$ is reachable from $d_1$ via the URI $u'$, and $d_3$ is reachable from $d_2$
via the URI $a$.
Moreover, the set reachable documents from $\{u\}$ in $\symWoD_2$ is also $\{d_1,d_2,d_3\}$;
$d_1$ is the document associated to $u$, $d_3$ is reachable from $d_1$ via the URI $u'$, and $d_2$ is reachable from $d_3$ via URI $b$.
Given that the set of reachable documents is the same in both $\symWoD_1$ and $\symWoD_2$ we have $\fctEvalReachPcSW{P}{\cAll}{\{u\}}{\symWoD_1}=\fctEvalReachPcSW{P}{\cAll}{\{u\}}{\symWoD_2}$.
Given that $\fctEvalQW{Q(?x)}{\symWoD_1}^{\{u\}} \neq \fctEvalQW{Q(?x)}{\symWoD_2}^{\{u\}}$ we obtain our desired contradiction.

We consider now the case of $\cMatch$, and prove that it cannot express $Q(?x)$.
To obtain a contradiction, 
assume that there exists a pattern $P$ such that 
\[
\fctEvalReachPcSW{P}{\cMatch}{S}{\symWoD}\ = \fctEvalQWS{ Q(?x) }{\symWoD}{S}
\]
for every $S$ and $\symWoD$.
Let $u,u',u'',a$ be different elements in $\symAllURIs$ that are not mentioned in $P$.
Consider now $\symWoD_1$ having two documents 
$d_1=\{(u,p,u')\}$ and $d_2=\{(a,a,a)\}$ and such that
$\fctADoc{u}=d_1$ and $\fctADoc{u'}=d_2$.
Moreover, consider $\symWoD_2$ having also two documents $d_1'=\{(u'',p,u')\}$ and $d_2'=\{(a,a,a)\}$
such that $\fctADoc{u}=d_1'$ and $\fctADoc{u'}=d_2'$.
First notice that 
\[
\fctEvalQW{Q(?x)}{\symWoD_1}^{\{u\}}=\{\{?x\to a\}\}\;\; \neq \;\;
\fctEvalQW{Q(?x)}{\symWoD_2}^{\{u\}}=\emptyset.
\]
We prove now that $\fctEvalReachPcSW{P}{\cMatch}{\{u\}}{\symWoD_1}=\fctEvalReachPcSW{P}{\cMatch}{\{u\}}{\symWoD_2}$.
Now given that $d_1$ is the document associated to $u$ in $\symWoD_1$, we have that
$d_1$ is ($\cMatch,\{u\},P$)-reach\-able in $\symWoD_1$.
Similarly, we know that $d_1'$ is ($\cMatch,\{u\},P$)-reach\-able in $\symWoD_2$.
Moreover, given that $P$ does not mention $u$, $u'$ and $u''$ we have that
$(u,p,u')$ matches a triple pattern in $P$ if and only if $(u'',p,u')$ matches
a triple pattern in $P$. Thus we have that
$d_2$ is ($\cMatch,\{u\},P$)-reach\-able in $\symWoD_1$ if and only if
$d_2'$ is ($\cMatch,\{u\},P$)-reach\-able in $\symWoD_2$.
Thus we have only two cases, either
\begin{itemize}
\item $\{d_1\}$ is the set of ($\cMatch,\{u\},P$)-reach\-able documents in $\symWoD_1$, and $\{d_1'\}$ is the set of ($\cMatch,\{u\},P$)-reach\-able documents in $\symWoD_2$, or
\item $\{d_1,d_2\}$ is the set of ($\cMatch,\{u\},P$)-reach\-able documents in $\symWoD_1$, and $\{d_1',d_2'\}$ is the set of ($\cMatch,\{u\},P$)-reach\-able documents in $\symWoD_2$.
\end{itemize}
In the first case we have that $\fctEvalReachPcSW{P}{\cMatch}{\{u\}}{\symWoD_1}$ is obtained by evaluating $P$ over graph $G_1=\{(u,p,u')\}$, and that  $\fctEvalReachPcSW{P}{\cMatch}{\{u\}}{\symWoD_2}$ is obtained by evaluating $P$ over graph $G_2=\{(u'',p,u')\}$.
Given that $P$ does not mention $u$, $u'$ and $u''$ we obtain that the evaluation
of $P$ over $G_1$ is the same as the evaluation of $P$ over $G_2$ which implies that 
$\fctEvalReachPcSW{P}{\cMatch}{\{u\}}{\symWoD_1}=\fctEvalReachPcSW{P}{\cMatch}{\{u\}}{\symWoD_2}$.
In the second case, $\fctEvalReachPcSW{P}{\cMatch}{\{u\}}{\symWoD_1}$ is obtained by evaluating $P$ over graph $G_1=\{(u,p,u'),(a,a,a)\}$, and $\fctEvalReachPcSW{P}{\cMatch}{\{u\}}{\symWoD_2}$ is obtained by evaluating $P$ over graph $G_2=\{(u'',p,u'),(a,a,a)\}$. For the same reason as above we have that the evaluation of $P$ is the same over $G_1$ and over $G_2$ which implies that $\fctEvalReachPcSW{P}{\cMatch}{\{u\}}{\symWoD_1}=\fctEvalReachPcSW{P}{\cMatch}{\{u\}}{\symWoD_2}$.
We have proved that $\fctEvalReachPcSW{P}{\cMatch}{\{u\}}{\symWoD_1}=\fctEvalReachPcSW{P}{\cMatch}{\{u\}}{\symWoD_2}$, while $\fctEvalQW{Q(?x)}{\symWoD_1}^{\{u\}}\neq \fctEvalQW{Q(?x)}{\symWoD_2}^{\{u\}}$ which is our desired contradiction.

%
\subsection{Proof of Proposition~\ref{prop:Safeness:FirstTrivialProperties}} \label{proof:prop:Safeness:FirstTrivialProperties}

\subsubsection{Property~\ref{prop:Safeness:FirstTrivialProperties:BasicQuery}}
Let $\symQuery$ be an arbitrary basic \query\ of the form $\tuple{\symLPE,\symPattern}$ such that $\symLPE$ is Web-safe. To show that $\symQuery$ is Web-safe we provide Algorithm~\ref{algo:ExecBasicQuery}. In line~\ref{line:ExecBasicQuery:CallExecLPE} the algorithm calls a subroutine, \textsc{ExecLPE}, that evaluates a given LPE in the context of a given URI~(cf.~Algorithm~\ref{algo:ExecLPE}).
The correctness of the algorithm and its subroutine is easily checked.
Moreover, a trivial proof by induction on the possible structure of LPEs can show that for any
Web-safe LPE, the given subroutine looks up a finite number of URIs only. The crux of such a proof is twofold: First, the evaluation of LPEs of the form $\peKleene{\symLPE}$~(lines \ref{line:ExecLPE:Kleene:Begin} to \ref{line:ExecLPE:Kleene:Return} in Algorithm~\ref{algo:ExecLPE}) is guaranteed to reach a fixed point for any \emph{finite} \wold. Second, the evaluation of LPEs of the form $\peSubquery{?v}{\symQuery}$~(lines \ref{line:ExecLPE:SubQuery:Begin} to \ref{line:ExecLPE:SubQuery:Return}) uses an algorithm for subquery $\symQuery$ that has the properties as required in Definition~\ref{def:WebSafeness}. Due to the Web-safeness of the given LPE and, thus, of $\symQuery$, such an algorithm exists.

\begin{algorithm}[t]
{
\begin{algorithmic}[1]

	\STATE $\Phi$ := a new empty set of URIs
	\FORALL {$\symURI \in \symSeedURIs$}
		\STATE $\Phi$ := $\Phi \,\cup$ \textsc{ExecLPE}($\symLPE,\symURI$) \label{line:ExecBasicQuery:CallExecLPE}
	\ENDFOR
	\STATE $\symRDFgraph$ := a new empty set of {\triple}s~(i.e., an empty RDF graph)
	\STATE $\symRDFdatasetNGs$ := a new empty set of pairs consisting of a URI and an RDF graph
	\FORALL {$\symURI \in \Phi$}
		\IF {looking up URI $\symURI$ results in retrieving a \doc, say $d$}
			\STATE $\symRDFgraph$ := $\symRDFgraph \cup \fctData{d}$
			\STATE $\symRDFdatasetNGs$ := $\symRDFdatasetNGs \cup \lbrace \tuple{\symURI,\fctData{d}} \rbrace$
		\ENDIF
	\ENDFOR

	\RETURN $\fctEvalPGD{\symPattern}{\symRDFgraph}{\tuple{\symRDFgraph,\symRDFdatasetNGs}}$
		\quad \COMMENT{$\fctEvalPGD{\symPattern}{\symRDFgraph}{\tuple{\symRDFgraph,\symRDFdatasetNGs}}$ can be computed by using any algorithm that implements \par \hspace{26mm} // the standard~(set-based) SPARQL evaluation function~\cite{Arenas09:SemanticsAndComplexityOfSPARQLBookChapter}}
\end{algorithmic}
}
	\caption{ ~ Execution of a basic \query\ $\tuple{\symLPE,\symPattern}$ using a set $\symSeedURIs$ of URIs as seed.}
	\label{algo:ExecBasicQuery}
\end{algorithm}

\begin{algorithm}[t]
{
\begin{algorithmic}[1]
	\IF {looking up URI $\symCtxURI$ results in retrieving a \doc, say $\symCtxDoc$}
		\IF {$\symLPE$ is $\peEmpty$}
			\RETURN a new singleton set $\lbrace \symCtxURI \rbrace$
\medskip

		\ELSIF {$\symLPE$ is a link pattern $\symLP = \tuple{y_1,y_2,y_3}$}
			\STATE $\symLP'$ := $\tuple{y_1',y_2',y_3'}$, where $\tuple{y_1',y_2',y_3'}$ is a link pattern generated from $\symLP$ such that any occurrence of symbol~$\peContextURI$ in~$\symLP$ is replaced by URI~$\symCtxURI$
			\STATE $\Phi$ := a new empty set of URIs
			\FORALL {$\tuple{x_1,x_2,x_3} \in \fctData{\symCtxDoc}$}
				\IF {($y_1'=x_1$ \OR $y_1'=\!\peWildcard$) \AND ($y_2'=x_2$ \OR $y_2'=\!\peWildcard$) \AND ($y_3'=x_3$ \OR $y_3'=\!\peWildcard$)}
					\FORALL {$i \in \lbrace 1,2,3 \rbrace$}
						\IF {$y_i'=\!\peWildcard$ \AND $x_i$ is a URI whose lookup retrieves a \doc}
							\STATE $\Phi$ := $\Phi \cup \lbrace x_i \rbrace$
						\ENDIF
					\ENDFOR
				\ENDIF
			\ENDFOR
			\RETURN $\Phi$
\medskip

		\ELSIF {$\symLPE$ is of the form $\peConcat{\symLPE_1}{\symLPE_2}$}
			\STATE $\Phi'$\! := \textsc{ExecLPE}($\symLPE_1,\symCtxURI$)
			\STATE $\Phi$ := a new empty set of URIs
			\STATE \textbf{for all} $\symURI'\! \in \Phi'$ \textbf{do} $\Phi$ := $\Phi \,\cup$ \textsc{ExecLPE}($\symLPE_2,\symURI'$) \textbf{end for}
			\RETURN $\Phi$
\medskip

		\ELSIF {$\symLPE$ is of the form $\peAlt{\symLPE_1}{\symLPE_2}$}
			\STATE $\Phi_1$ := \textsc{ExecLPE}($\symLPE_1,\symCtxURI$)
			\STATE $\Phi_2$ := \textsc{ExecLPE}($\symLPE_2,\symCtxURI$)
			\RETURN $\Phi_1 \cup \Phi_2$
\medskip

		\ELSIF {$\symLPE$ is of the form $\peKleene{l}$} \label{line:ExecLPE:Kleene:Begin}
			\STATE $\Phi_\mathsf{cur}$ := \textsc{ExecLPE}($\peEmpty,\symCtxURI$)
			\STATE $\symLPE'$\! := $l$
			\REPEAT \label{line:ExecLPE:Kleene:RepeatBegin}
				\STATE $\Phi_\mathsf{prev}$ := $\Phi_\mathsf{cur}$
				\STATE $\Phi_\mathsf{cur}$ := $\Phi_\mathsf{cur} \,\cup$ \textsc{ExecLPE}($\symLPE'\!,\symCtxURI$)
				\STATE $\symLPE'$\! := an LPE of the form $\peConcat{\symLPE'\!}{l}$
			\UNTIL {$\Phi_\mathsf{cur} = \Phi_\mathsf{prev}$} \label{line:ExecLPE:Kleene:RepeatEnd}
			\RETURN $\Phi_\mathsf{cur}$ \label{line:ExecLPE:Kleene:Return}
\medskip

		\ELSIF {$\symLPE$ is of the form $\peTest{\symLPE'}$}
			\STATE $\Phi$ := \textsc{ExecLPE}($\symLPE'\!,\symCtxURI$)
			\STATE \textbf{if} $\Phi \neq \emptyset$ \textbf{then return} a new singleton set $\lbrace \symCtxURI \rbrace$ \textbf{else return} a new empty set \textbf{end if}
\medskip

		\ELSIF {$\symLPE$ is of the form $\peSubquery{?v}{\symQuery}$} \label{line:ExecLPE:SubQuery:Begin}
			\STATE $\Omega$ := \textsc{Exec}($\symQuery,\lbrace\symCtxURI\rbrace$)
				\hspace{2mm}\COMMENT {where \textsc{Exec} denotes an arbitrary algorithm that can be used \par
				\hspace{30mm} // to compute the $\lbrace\symCtxURI\rbrace$-based evaluation of $\symQuery$ over the queried \par
				\hspace{30mm} // \wold}
			\STATE $\Phi$ := a new empty set of URIs
			\STATE \textbf{for all} $\mu \in \Omega$ for which $?v \in \fctDom{\mu}$ and $\mu(?v) \in \symAllURIs$ \textbf{do} $\Phi$ := $\Phi \cup \lbrace \mu(?v) \rbrace$ \textbf{end for}
			\RETURN $\Phi$ \label{line:ExecLPE:SubQuery:Return}
\medskip

		\ENDIF
\medskip
	\ELSE
		\RETURN a new empty set
	\ENDIF
\end{algorithmic}
}
	\caption{ ~ \textsc{ExecLPE}($\symLPE,\symCtxURI$)}
	\label{algo:ExecLPE}
\end{algorithm}

\subsubsection{Property~\ref{prop:Safeness:FirstTrivialProperties:ProjectionAndSEED}}
First, let $\symQuery$ be an \query\ of the form $\pi_V \symQuery'$ such that subquery $\symQuery'$ is Web-safe. Due to the Web-safeness of $\symQuery'$, there exists an algorithm for $\symQuery'$ that has the properties as required in Definition~\ref{def:WebSafeness}. We may use this algorithm to construct an algorithm for $\symQuery$; that is, our algorithm for $\symQuery$ calls the algorithm for $\symQuery'$, applies the projection operator to the result, and returns the set of solution mappings resulting from this projection. Since the application of the projection operator does not involve URI lookups, the constructed algorithm for $\symQuery$ has the properties as required in Definition~\ref{def:WebSafeness}.
Second, let~$\symQuery$ be an \query\ of the form $( \OpSEED\ U\ \symQuery' )$ such that $\symQuery'$ is Web-safe. Hence, there exists an algorithm for $\symQuery'$ that has the properties as required in Definition~\ref{def:WebSafeness}. Then, showing the Web-safeness of $\symQuery$ is trivial because the algorithm for $\symQuery'$ can also be used~for~$\symQuery$.

\subsubsection{Property~\ref{prop:Safeness:FirstTrivialProperties:UNION}}
Let $\symQuery$ be an \query\ of the form $( \symQuery_1\OpUNION \ldots \OpUNION \symQuery_n )$ such that each subquery $\symQuery_i$~($1 \!\leq\! i \!\leq\! n$) is Web-safe. Hence, for each subquery $\symQuery_i$, there exists an algorithm that has the properties as required in Definition~\ref{def:WebSafeness}. Then, the Web-safe\-ness of query $\symQuery$ is easily shown by specifying another algorithm that calls the algorithms of the subqueries sequentially and unions their results.

\subsection{Proof of Lemma~\ref{lem:StrongBoundedness}} \label{proof:lem:StrongBoundedness}

Lemma~\ref{lem:StrongBoundedness} follows from Definition~\ref{def:StrongBoundedness} and
	Buil-Aranda et al.'s result~\cite[Proposition~1]{BuilAranda11:SemanticsAndOptimizationOfSPARQLFedExt}.

\subsection{Proof of Theorem~\ref{thm:Safeness:AND}} \label{proof:thm:Safeness:AND}

We prove Theorem~\ref{thm:Safeness:AND} based on Algorithm~\ref{algo:ExecUnionFreeQuery}, which is 	an iterative algorithm that generalizes the execution strategy outlined for query $\symQuery_\mathsf{ex}''$ in Example~\ref{ex:WebSafeness}. That is, the algorithm executes the subqueries $\symQuery_1, \symQuery_2, \ldots , \symQuery_m$ sequentially in the order $\prec$ such that each iteration step~(lines \ref{line:ExecUnionFreeQuery:MainForBegin} to \ref{line:ExecUnionFreeQuery:MainForEnd}) executes one of the subqueries by using the solution mappings computed during the previous step~(which are passed on via the sets $\Omega_0,\Omega_1, \ldots, \Omega_m$).

\clearpage 

\begin{algorithm}[t!]
{
\begin{algorithmic}[1]
	\REQUIRE $m \geq 1$
	\REQUIRE \query\ $\symQuery$ is given as an array $Q$ consisting of all subqueries of $\symQuery$ such that \par the order of the subqueries in this array satisfies the conditions as given in Theorem~\ref{thm:Safeness:AND}.

\medskip
	\STATE $\Omega_0$ := $\lbrace \muEmpty \rbrace$, where $\muEmpty$ is the empty solution mapping; i.e., $\fctDom{\muEmpty} = \emptyset$ \label{line:ExecUnionFreeQuery:InitOmega0}
	\FOR {$j$ := $1, \ldots , m$} \label{line:ExecUnionFreeQuery:MainForBegin}
\medskip

		\STATE $\Omega_\mathsf{tmp}$ := a new empty set of solution mappings
		\STATE $\symQuery_j$ := the $j$-th subquery in array $Q$

\medskip
		\IF {$\symQuery_j$ is of the form $( \OpSEED\ ?v\ \symQuery' )$}
			\STATE $U_\mathsf{tmp}$ := a new empty set of URIs \label{line:ExecUnionFreeQuery:InitUtmp}
			\FORALL {$\mu \in \Omega_{j-1}$}
				\STATE \textbf{if} $\mu(?v)$ is a URI \textbf{then} $U_\mathsf{tmp}$ := $U_\mathsf{tmp} \cup \lbrace \mu(?v) \rbrace$ \textbf{end if}
			\ENDFOR \label{line:ExecUnionFreeQuery:PopulateUtmpEnd}
			\FORALL {$\symURI \in U_\mathsf{tmp}$} \label{line:ExecUnionFreeQuery:UseUtmpBegin}
				\STATE $\Omega_\mathsf{tmp}$ := $\Omega_\mathsf{tmp} \,\cup$ \textsc{Exec}($\symQuery'$,$\lbrace\symURI\rbrace$)
				\hspace{0.5mm} \COMMENT{where \textsc{Exec} denotes an arbitrary algorithm that \par
				\hspace{40mm} // can be used to compute the $\lbrace\symURI\rbrace$-based evaluation \par
				\hspace{40mm} // of $\symQuery'$ over the queried \wold}
			\ENDFOR \label{line:ExecUnionFreeQuery:UseUtmpEnd}

		\ELSE
			\STATE $\Omega_\mathsf{tmp}$ := \textsc{Exec}($\symQuery_j,\symSeedURIs$)
				\hspace{2mm} \COMMENT{where \textsc{Exec} denotes an arbitrary algorithm that can \par
				\hspace{30mm} // be used to compute the $\symSeedURIs$-based evaluation of $\symQuery_j$ \par
				\hspace{30mm} // over the queried \wold}

 \label{line:ExecUnionFreeQuery:ElseStatement}

		\ENDIF
\medskip

		\STATE $\Omega_j$ := a new empty set of solution mappings \label{line:ExecUnionFreeQuery:InitOmegaJ}
		\FORALL {$\mu \in \Omega_{j-1}$} \label{line:ExecUnionFreeQuery:JoinLoopBegin}
			\FORALL {$\mu' \in \Omega_\mathsf{tmp}$}
				\IF {$\mu$ and $\mu'$ are compatible}
					\STATE $\Omega_j$ := $\Omega_j \cup \lbrace \mu_\mathsf{join} \rbrace$, where $\mu_\mathsf{join} = \mu \cup \mu'$ \label{line:ExecUnionFreeQuery:Join}
				\ENDIF
			\ENDFOR
		\ENDFOR \label{line:ExecUnionFreeQuery:JoinLoopEnd}

\medskip
	\ENDFOR \label{line:ExecUnionFreeQuery:MainForEnd}
	\RETURN $\Omega_m$ \label{line:ExecUnionFreeQuery:Return}
\end{algorithmic}
}
	\caption{ ~ Execution of an \query\ $\symQuery$ of the form $( \symQuery_1 \OpAND \symQuery_2 \OpAND \ldots \OpAND \symQuery_m )$ using a finite set $\symSeedURIs$ of URIs as seed.}
	\label{algo:ExecUnionFreeQuery}
\end{algorithm}

To prove that Algorithm~\ref{algo:ExecUnionFreeQuery} has the properties as required in Definition~\ref{def:WebSafeness} we have to show that the algorithm is sound and complete~(i.e., for any finite set $\symSeedURIs$ of URIs and any \wold\ $\symWoD$\!, the algorithm returns $\fctEvalQWS{\symQuery}{\symWoD}{\symSeedURIs}$) and that it is guaranteed to look up a finite number of URIs only. We show these properties by induction on the $m$ iteration steps performed by the algorithm. To this end, we assume that the indices as used for the subqueries $\symQuery_1, \symQuery_2, \ldots , \symQuery_m$ reflect the order $\prec$, that is, subquery $\symQuery_1$ is the first according to $\prec$, subquery $\symQuery_2$ is the second, and so on.

\textbf{Base Case ($m=1$):}
By the conditions in Theorem~\ref{thm:Safeness:AND}, the first subquery~(according to~$\prec$) must be Web-safe and, thus, cannot be of the form $( \OpSEED\ ?v\ \symQuery' )$. Hence, the algorithm enters the corresponding else-branch~(line~\ref{line:ExecUnionFreeQuery:ElseStatement}). 
Due to the Web-safeness of $\symQuery_1$, there exists an algorithm for subquery $\symQuery_1$, say $A_1$, that has the properties as required in Definition~\ref{def:WebSafeness}. Algorithm~\ref{algo:ExecUnionFreeQuery} uses algorithm $A_1$ to obtain $\Omega_\mathsf{tmp} = \fctEvalQWS{\symQuery_1}{\symWoD}{\symSeedURIs}$~(where $\symWoD$ is the queried \wold), which requires only a finite number of URI lookups.
Thereafter, Algorithm~\ref{algo:ExecUnionFreeQuery} computes $\Omega_1 = \Omega_0 \Join \Omega_\mathsf{tmp}$~(lines \ref{line:ExecUnionFreeQuery:InitOmegaJ} to \ref{line:ExecUnionFreeQuery:JoinLoopEnd}) and returns $\Omega_1$~(line~\ref{line:ExecUnionFreeQuery:Return}), which does not require any more URI lookups.
Hence, for $m=1$, the algorithm looks up a finite number of URIs~(if the queried \wold\ is finite). Since $\Omega_0$ contains only the empty solution mapping $\muEmpty$~(line~\ref{line:ExecUnionFreeQuery:InitOmega0}), which is compatible with any other solution mapping, we have $\Omega_1 = \Omega_\mathsf{tmp}$ and, thus, $\Omega_1 = \fctEvalQWS{\symQuery_1}{\symWoD}{\symSeedURIs}$.

\textbf{Induction Step ($m>1$):} By induction we assume that after completing the ($m$--1)-th iteration, the algorithm has looked up a finite number of URIs only and the current intermediate result $\Omega_{m-1}$ covers the conjunction of subqueries $\symQuery_1, \symQuery_2, \ldots , \symQuery_{m-1}$; that is, $\Omega_{m-1} = \fctEvalQWS{( \symQuery_1 \OpAND \symQuery_2 \OpAND \ldots \OpAND \symQuery_{m-1} )}{\symWoD}{\symSeedURIs}$.
	We
show that the $m$-th iteration also looks up a finite number of URIs only and that $\Omega_m = \fctEvalQWS{( \symQuery_1 \OpAND \symQuery_2 \OpAND \ldots \OpAND \symQuery_m )}{\symWoD}{\symSeedURIs}$.

If subquery $\symQuery_m$ is Web-safe, it is not difficult to see these properties:
Since $\symQuery_m$ is Web-safe, there exists an algorithm for $\symQuery_m$, say $A_m$, that has the properties as required in Definition~\ref{def:WebSafeness}.
The corresponding call of algorithm $A_m$ in line~\ref{line:ExecUnionFreeQuery:ElseStatement} of Algorithm~\ref{algo:ExecUnionFreeQuery} looks up a finite number of URIs only, and the subsequent join computation in lines \ref{line:ExecUnionFreeQuery:InitOmegaJ} to \ref{line:ExecUnionFreeQuery:JoinLoopEnd} does not require any more lookups. Moreover, the result of calling algorithm $A_m$ in line~\ref{line:ExecUnionFreeQuery:ElseStatement} is $\Omega_\mathsf{tmp} = \fctEvalQWS{\symQuery_m}{\symWoD}{\symSeedURIs}$ and, since the subsequent join computation returns $\Omega_m = \Omega_{m-1} \Join \Omega_\mathsf{tmp}$, we have $\Omega_m = \fctEvalQWS{( \symQuery_1 \OpAND \symQuery_2 \OpAND \ldots \OpAND \symQuery_m )}{\symWoD}{\symSeedURIs}$, as~desired.

It remains to discuss the case of subquery $\symQuery_m$ being of the form $( \OpSEED\ ?v\ \symQuery' )$, where, by the conditions in Theorem~\ref{thm:Safeness:AND}, subquery $\symQuery'$ is Web-safe. Hence, there exists an algorithm for $\symQuery'$\!, say $A'$\!, that has the properties as required in Definition~\ref{def:WebSafeness}.
In this case, Algorithm~\ref{algo:ExecUnionFreeQuery} first iterates over all solution mappings in $\Omega_{m-1}$ to populate a set $U_\mathsf{tmp}$ with all URIs that any of these mappings binds to variable $?v$~(lines~\ref{line:ExecUnionFreeQuery:InitUtmp} to \ref{line:ExecUnionFreeQuery:PopulateUtmpEnd}).
Due to the finiteness assumed for all queried \wolds~(cf.~Definition~\ref{def:WebSafeness}), $\Omega_{m-1}$ is finite. Hence, the resulting set $U_\mathsf{tmp}$ contains a finite number of URIs. Therefore, the subsequent loop in lines \ref{line:ExecUnionFreeQuery:UseUtmpBegin} to \ref{line:ExecUnionFreeQuery:UseUtmpEnd} calls algorithm $A'$ a finite number of times and, thus, the $m$-th iteration looks up a finite number of URIs only.
To show the remaining claim, $\Omega_m = \fctEvalQWS{( \symQuery_1 \OpAND \symQuery_2 \OpAND \ldots \OpAND \symQuery_m )}{\symWoD}{\symSeedURIs}$, we first show $\Omega_m \subseteq \fctEvalQWS{( \symQuery_1 \OpAND \symQuery_2 \OpAND \ldots \OpAND \symQuery_m )}{\symWoD}{\symSeedURIs}$. Let $\mu_\mathsf{join}$ be an arbitrary solution mapping in $\Omega_m$; i.e., $\mu_\mathsf{join} \in \Omega_m$. By lines \ref{line:ExecUnionFreeQuery:JoinLoopBegin} to \ref{line:ExecUnionFreeQuery:JoinLoopEnd}, there exist solution mappings $\mu$ and $\mu'$ such that
\enumA~$\mu \in \Omega_{m-1} = \fctEvalQWS{( \symQuery_1 \OpAND \symQuery_2 \OpAND \ldots \OpAND \symQuery_{m-1} )}{\symWoD}{\symSeedURIs}$,
\enumB~$\mu' \in \Omega_\mathsf{tmp} = \bigcup_{\symURI \in U_\mathsf{tmp}} \fctEvalQWS{ \symQuery' }{\symWoD}{\lbrace\symURI\rbrace}$,
\enumC~$\mu$ and $\mu'$ are compatible,
and \enumD~$\mu_\mathsf{join} = \mu \cup \mu'$\!.
Then, by Definition~\ref{def:LDQLSemantics}, we have $\mu_\mathsf{join} \!\in\! \fctEvalQWS{( \symQuery_1 \OpAND \symQuery_2 \OpAND \ldots \OpAND \symQuery_m )}{\symWoD}{\symSeedURIs}$ and, thus, $\Omega_m \!\subseteq\! \fctEvalQWS{( \symQuery_1 \OpAND \symQuery_2 \OpAND \ldots \OpAND \symQuery_m )}{\symWoD}{\symSeedURIs}$.
Finally, we show $\Omega_m \supseteq \fctEvalQWS{( \symQuery_1 \OpAND \symQuery_2 \OpAND \ldots \OpAND \symQuery_m )}{\symWoD}{\symSeedURIs}$. Assume an arbitrary solution mapping $\mu^* \in \fctEvalQWS{( \symQuery_1 \OpAND \symQuery_2 \OpAND \ldots \OpAND \symQuery_m )}{\symWoD}{\symSeedURIs}$. Then, by Definition~\ref{def:LDQLSemantics}, there exist two solution mappings $\mu^*_1$ and $\mu^*_2$ such that
\enumA~$\mu^*_1 \in \fctEvalQWS{( \symQuery_1 \OpAND \symQuery_2 \OpAND \ldots \OpAND \symQuery_{m-1} )}{\symWoD}{\symSeedURIs}$,
\enumB~$\mu^*_2 \in \fctEvalQWS{ \symQuery_m }{\symWoD}{\symSeedURIs}$,
\enumC~$\mu^*_1$ and $\mu^*_2$ are compatible,
and \enumD~$\mu^* = \mu^*_1 \cup \mu^*_2$. By our induction hypothesis, we have $\mu^*_1 \in \Omega_{m-1}$. Then, given lines \ref{line:ExecUnionFreeQuery:JoinLoopBegin} to \ref{line:ExecUnionFreeQuery:JoinLoopEnd}, we have to show that $\mu^*_2 \in \Omega_\mathsf{tmp}$ where $\Omega_\mathsf{tmp}$ is the set of solution mappings computed during the $m$-th iteration.
	Since $\symQuery_m$ is of the form $( \OpSEED\ ?v\ \symQuery' )$, it holds that  
$\Omega_\mathsf{tmp} = \bigcup_{\symURI \in U_\mathsf{tmp}} \fctEvalQWS{ \symQuery' }{\symWoD}{\lbrace\symURI\rbrace}$ where $U_\mathsf{tmp} = \lbrace \symURI \in \symAllURIs \mid \mu(?v) = \symURI \text{ for some } \mu \in \fctEvalQWS{( \symQuery_1 \OpAND \symQuery_2 \OpAND \ldots \OpAND \symQuery_{m-1} )}{\symWoD}{\symSeedURIs} \rbrace$.
Hence, to show that $\mu^*_2 \in \Omega_\mathsf{tmp}$ we show that there exists a URI $\symURI \in U_\mathsf{tmp}$ such that $\mu^*_2$ is in $\fctEvalQWS{ \symQuery' }{\symWoD}{\lbrace\symURI\rbrace}$\!.
Since $\mu^*_2 \in \fctEvalQWS{ \symQuery_m }{\symWoD}{\symSeedURIs}$, by Definition~\ref{def:LDQLSemantics}, solution mapping $\mu^*_2$ binds variable $?v$ to a URI, say $\symURI^*$; i.e., $?v \in \fctDom{\mu^*_2}$ and $\mu^*_2(?v) = \symURI^*$ with $\symURI^*\! \in \symAllURIs$.
Furthermore, by Lemma~\ref{lem:StrongBoundedness} and the
condition in Theorem~\ref{thm:Safeness:AND}~(i.e., $?v \in \bigcup_{\symQuery_k \prec \symQuery_m} \fctSBVars{\symQuery_k}$), solution mapping $\mu^*_1$ also has a binding for variable $?v$, and, since $\mu^*_1$ and $\mu^*_2$ are compatible, these bindings are the same, that is, $\mu^*_1(?v) = \mu^*_2(?v)$. Hence, for URI $\symURI^* = \mu^*_2(?v)$ it holds that $\symURI^* \in U_\mathsf{tmp}$. Then, by Definition~\ref{def:LDQLSemantics}, we obtain that $\mu^*_2 \in \fctEvalQWS{ \symQuery' }{\symWoD}{\lbrace\symURI\rbrace}$\!, which shows that $\mu^*_2 \in \Omega_\mathsf{tmp}$ and, thus, we can conclude that $\Omega_m \supseteq \fctEvalQWS{( \symQuery_1 \OpAND \symQuery_2 \OpAND \ldots \OpAND \symQuery_m )}{\symWoD}{\symSeedURIs}$.

\subsection{Proof of Corollary~\ref{cor:NormalForm}} \label{proof:cor:NormalForm}

Corollary~\ref{cor:NormalForm} is an immediate consequence of Lemma~\ref{lem:Equivalences:Distributiveness}.
}

\end{document}